\newtheorem{theorem}{Theorem}
\newtheorem{lemma}[theorem]{Lemma}
\newtheorem{claim}[theorem]{Claim}
\newtheorem{subclaim}[theorem]{{Sub-Claim}}
\newtheorem{obs}[theorem]{Observation}
\newtheorem{corollary}[theorem]{Corollary}
\newcommand{\rank}{{\sf rank}}
\newcommand{\varx}{{\bf x}}
\newcommand{\vary}{{\bf y}}
\newcommand{\size}{{\sf size}}
\theoremstyle{definition}
\newtheorem{defn}{Definition}
 \title{Lower Bounds for Planar Arithmetic Circuits}    
\date{}
\author[1]{C. Ramya}
\author[2]{Pratik Shastri}
\affil[1,2]{\small The Institute of Mathematical Sciences (a CI of HBNI), Chennai, India }
\begin{document}

\maketitle
\begin{abstract}
{\em Arithmetic circuits} are a natural well-studied model for computing multivariate polynomials over a field. 
In this paper, we study {\em planar arithmetic circuits}. These are circuits whose underlying graph is planar. In particular, we prove an $\Omega(n\log n)$ lower bound on the size of planar arithmetic circuits computing explicit bilinear forms on $2n$ variables. As a consequence, we get an $\Omega(n\log n)$ lower bound on the size of arithmetic formulas and planar algebraic branching programs computing explicit bilinear forms on $2n$ variables. This is the first such lower bound on the formula complexity of an explicit bilinear form. In the case of read-once planar circuits, we show $\Omega(n^2)$ size lower bounds for computing explicit bilinear forms on $2n$ variables. Furthermore, we prove fine separations between the various planar models of computations mentioned above.

In addition to this, we look at multi-output planar circuits and show $\Omega(n^{4/3})$ size lower bound for computing an explicit linear transformation on $n$-variables. For a suitable definition of multi-output formulas, we extend the above result to get an $\Omega(n^2/\log n)$ size lower bound.  As a consequence, we demonstrate that there exists an $n$-variate polynomial computable by an $n^{1 + o(1)}$-sized formula such that any multi-output planar circuit (resp., multi-output formula) simultaneously computing all its first-order partial derivatives requires size $\Omega(n^{4/3})$ (resp., $\Omega(n^2/\log n)$). This shows that a statement analogous to that of Baur, Strassen \cite{BS83} does not hold in the case of planar circuits and formulas. 
\end{abstract}


\section{Introduction}

{\em Arithmetic circuits} are a natural computational model for computing multivariate polynomials over a field. These are directed acyclic graphs whose in-degree 0 vertices (also known as {\em input} gates) are labeled by variables or field constants and whose internal vertices are either addition or multiplication gates. Arithmetic circuits compute polynomials in the natural way and the polynomial computed by the circuit is the one computed at a designated output gate(s). The two important parameters of a circuit are {\em size} which is the number of vertices in it and {\em depth} which is the length of a longest input to output path.  One of the primary goals in Algebraic Complexity Theory is to prove lower bounds on the size of arithmetic circuits computing an explicit polynomial. Despite consistent efforts, the best known size lower bound for the arithmetic circuits is due to Baur and Strassen \cite{BS83} who proved that
any fan-in $2$ circuit computing the polynomial $x_1^n +\cdots + x_n^n$ has  size $\Omega(n \log n)$. 

{\em Arithmetic formulas} are circuits where the underlying graph is a tree. Kalarkoti \cite{Kal85} proved that a certain complexity measure based on the transcendence degree (algebraic rank) of polynomials is a lower bound on the formula size. Using this technique, he proved that the formula complexity of ${\sf Det}_{n\times n}$, the determinant of the $n\times n$ symbolic matrix, is $\Omega(n^3)$ (note that this is an $\Omega (m^{3/2})$ size lower bound where $m = n^2$ is the number of variables of ${\sf Det}_{n\times n}$). More recently, Chatterjee et al.\cite{CKSV22} proved that over fields of characteristic greater than $0.1n$, any formula computing the elementary symmetric polynomial on $n$ variables of degree $0.1n$ requires size $\Omega(n^2)$. 


{\em Algebraic branching programs} (ABPs) are yet another model for computing polynomials. ABPs are directed acyclic graphs with designated source and sink vertices and the edges are labeled by variables or constants. The polynomial computed by an ABP is defined to be the sum of weights of all paths from source to sink (where the weight of a path is the product of edge labels in the path). Chatterjee et al.\ \cite{CKSV22} show that any ABP computing the polynomial $x_1^n +\cdots + x_n^n$ has size at least $\Omega(n^2)$. In fact, the results by Chatterjee et al.\  \cite{CKSV22} also hold for ABPs where the edges are labeled by affine forms.

In the regime of polynomials of constant degree, Kalorkoti's method~\cite{Kal85} can be adapted to give a 
superlinear lower bounds for formulas computing certain explicit constant degree (in fact, $\text{degree}\geq 3$) polynomials.
However, for arithmetic circuits and algebraic branching programs, we do not know any explicit $n$-variate constant degree polynomial that requires size $\Omega(n\log n)$. Note that the polynomial $x_1^n+\cdots + x_n^n$ for which we know superlinear size lower bounds\cite{BS83} has individual degree $n$. In the case of constant depth arithmetic circuits, Raz \cite{Raz10} showed that for any constant $r$ and any field $\mathbb{F}$, there is an explicit $n$-variate polynomial of total-degree $O(r)$, with $\{0,1\}$ coefficients such that any depth-$r$ arithmetic circuit for it (using constants from $\mathbb{F}$) has size $n^{1+ \Omega(1)}$. Furthermore, Raz \cite{Raz10} observed that a super-quadratic lower bound of $\Omega(n^{2+\epsilon})$ for any $\epsilon>0$ for constant-depth circuits computing an explicit polynomial of constant degree implies an $\Omega(n^{1+\epsilon/2})$ lower bound for general circuits, a long standing open problem in algebraic complexity theory. 
Also, recently, Chatterjee et al.\cite{CKV23}  demonstrated that a superlinear lower bound on the size of any ABP for a homogeneous constant-degree polynomial would imply a superlinear lower bound on its determinantal complexity.



{\em Bilinear forms} are a special and important class of degree two polynomials. These are polynomials of the form $\vary^{T}M\varx$ where $\vary$ and $\varx$ are vectors of $n$ variables each and $M\in \mathbb{F}^{n \times n}$ is a matrix. We say that $y^{T}Mx$ where $M\in \mathbb{F}^{n \times n}$ is explicit if the entries of $M$ can be computed in time $\poly(n)$. In fact, problems such as matrix product, univariate polynomial multiplication are all inherently bilinear in nature and the circuit complexity of bilinear forms is a well-studied notion in algebraic complexity theory. Nisan and Wigderson\cite{NW95} initiated a systematic study of the complexity of bilinear forms . A natural model to compute bilinear forms is the model of {\em bilinear circuits}. A bilinear circuit is an arithmetic circuit in which every product gate has two inputs, one of which is a linear form in $\varx$-variables and the other a linear form in $\vary$-variables. It is well known and easy to prove that any circuit computing a bilinear form can be converted into a bilinear circuit computing the same bilinear form with only constant blowup in size. Therefore, a lower bound for bilinear circuits applies to general circuits as well. The best known lower bound for circuits computing an explicit $2n$-variate bilinear form is only $\Omega(n)$ which is linear in the number of variables. This motivates the study of the complexity of bilinear forms in more restricted models such as arithmetic formulas and bounded coefficient models of computation.

 Nisan and Wigderson \cite{NW95} defined the model of {\em bilinear formulas}: a formula in which every product gate computes a product of two linear forms, one in the $x$ variables and one in the $y$ variables. Further, they observed that the bilinear formula complexity of the bilinear form $\vary^{T} A \varx$ and depth two linear circuit complexity of the $A\varx$ are equivalent notions (\cite{NW95}, Equation 2). In the same paper, Nisan and Wigderson \cite{NW95}  showed an $\Omega(n\log n)$ lower bound on the bilinear formula complexity of certain bilinear forms. Lower bounds on the size of depth $2$ superconcentrators\footnote{These are graphs with strong connectivity properties, see Definition \ref{supercon} for a formal definition.} imply lower bounds on the size of depth $2$ linear circuits and consequently, bilinear formulas.
  Radhakrishnan and Ta-Shma \cite{RT00} proved improved ($\Omega(n \log ^2 n/\log \log n)$) lower bounds on the size of depth 2 superconcentrators, and their work implies an improved lower bound for bilinear formulas as well. However, this does not imply a lower bound on the general formula complexity of bilinear forms as the question of whether formulas can be bilinearized efficiently remains open.  The best \textit{formula} lower bounds for explicit bilinear forms were also only linear ($\Omega(n)$), prior to this work. It is not hard to see that Kalarkoti's transcendence degree based measure \cite{Kal85} is $O(n)$ for \textit{any} $2n$-variate bilinear form, and it is unclear if the methods of \cite{CKSV22} can be adapted to the bilinear setting.

 Nisan and Wigderson \cite{NW95} also investigated the bilinear, bounded coefficient formula complexity of computing certain bilinear forms and, using spectral techniques, proved lower bounds of the form $\Omega(n^{1 + \delta})$ for this model. In a breakthrough paper, Raz \cite{Raz02} extended the techniques in \cite{NW95} vastly and proved an $\Omega(n^2\log n)$ lower bound on the bounded coefficient circuit complexity of matrix multiplication.


In this paper, we prove better bounds for formulas computing bilinear forms. To get around the bilinearization obstacle for formulas, we work with planar circuits, a more general object than formulas. We observe that planar circuits can be efficiently bilinearized, and that superlinear ($\Omega(n\log n)$) lower bounds can also be proven for planar, bilinear circuits. Planar circuits provide a convenient, intermediate model of computation that lies sandwiched between circuits and formulas. This structural property makes them an interesting object of study for complexity lower bounds. Indeed, planar circuits are quite powerful. For instance, Baur and Strassen's $\Omega(n \log n)$ lower bound \cite{BS83} for general circuits computing the polynomial $\sum_{j=1}^{n} x_j^n$ is tight for planar circuits. This is because the polynomial can be computed by planar circuits of size $O(n \log n)$ using repeated squaring. In the context of bilinear forms, planar circuits are expressive enough for bilinearization while also being simpler than general circuits (which in turn allows us to prove lower bounds).

\subsection{Our results} 

In this article, we study {\em planar arithmetic circuits}, i.e., unbounded depth unbounded fan-in arithmetic circuits whose underlying graph is {\em planar}. Recall that a graph is said to be {\em planar} if it can be drawn on the plane without edge crossings. Note that every formula is a planar circuit as every tree is a planar graph.
We begin by observing (in Lemma \ref{lem:planarization}) that any arithmetic circuit can be converted into an equivalent planar arithmetic circuit with at most quadratic blowup in size by introducing {\em crossover gadgets}.\\ Our main result is a superlinear lower bound on the size of planar arithmetic circuits computing a family of explicit bilinear forms. We say a matrix $M\in\mathbb{F}^{n\times n}$ is totally regular if all square sub-matrices of $M$ are non-singular.

\begin{theorem}
\label{thm:planar-ckt1}
    Let $M\in \mathbb{F}^{n \times n}$ be any totally regular matrix and $\varx$ and $\vary$ be vectors of $n$ variables each. Then, any  planar arithmetic circuit computing the bilinear form 
    ${\vary}^{T}M{\varx}$ has size $\Omega(n \log n)$.
\end{theorem}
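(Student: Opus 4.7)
The plan is to combine three ingredients: efficient bilinearization of planar circuits, the planar separator theorem of Lipton--Tarjan, and a rank-based argument driven by total regularity, all organized by a recursive decomposition that yields the $\Omega(n \log n)$ bound.

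First, I would invoke the efficient bilinearization of planar circuits advertised in the introduction to reduce, with only constant-factor blowup, to the case of a planar \emph{bilinear} circuit $C$ of size $s$ computing $\vary^{T} M \varx$. In such a circuit every product gate computes a product $L^x(\varx) \cdot L^y(\vary)$ of an $\varx$-linear form and a $\vary$-linear form, which gives us a clean tensor-rank handle on its output.

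Next, I would apply the Lipton--Tarjan planar separator theorem, weighting each of the $2n$ input gates uniformly, to obtain a set $S \subseteq V(C)$ with $|S| = O(\sqrt{s})$ whose removal separates $C$ into sides $A$ and $B$, each containing at least $n/3$ many $\varx$-inputs and at least $n/3$ many $\vary$-inputs. Let $A'$ denote the $\varx$-inputs in $A$ and $B'$ the $\vary$-inputs in $B$. Setting $\vary_A, \varx_B, \varx_S, \vary_S$ to $0$, the circuit now computes the cross bilinear form $\vary_{B'}^{T} M[B', A'] \varx_{A'}$, and by total regularity the matrix $M[B', A']$ has rank $\min(|A'|, |B'|) = \Omega(n)$. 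Because each product gate contributes a rank-one bilinear form to the output, we conclude that $\Omega(n)$ product gates are devoted to computing these cross terms, and they must lie on paths that cross the separator.

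The recursion then proceeds as follows. Setting instead $\varx_B = \vary_B = 0$ reveals that the ``$A$-side + separator'' portion of the circuit computes the principal sub-bilinear form $\vary_A^{T} M[A, A] \varx_A$ on $n/2$-many variables (up to constants); similarly on the $B$ side. Since every square submatrix of $M$ is nonsingular, $M[A,A]$ and $M[B,B]$ are both totally regular $(n/2) \times (n/2)$ matrices, so the induction hypothesis applies. Charging the $\Omega(n)$ cross-term product gates to the current recursion level and recursing on both sides gives a recurrence of the form
\[
T(n) \;\geq\; 2\, T(n/2) \,+\, \Omega(n),
\]
which solves to $T(n) = \Omega(n \log n)$.

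The hardest part of the plan will be the bookkeeping in the recursive step, specifically guaranteeing that the $\Omega(n_\ell)$ product gates we charge at recursion level $\ell$ are fresh, i.e.\ were not already charged at an ancestor level. The obvious attempt runs into the issue that a single product gate could a priori serve several nested cross-form computations. To avoid this, I would associate each gate to the deepest recursion level at which it is straddled by the separator, and argue that at level $\ell$ the cross bilinear form still has rank $\Omega(n_\ell)$ after removing the rank contributed by gates already assigned to ancestor levels (using the total regularity of \emph{every} square submatrix of $M$, not just large ones). Once this disjoint charging is set up, summing $\Omega(n_\ell) = \Omega(n/2^\ell)$ over the $2^\ell$ subproblems at each of the $O(\log n)$ levels yields the $\Omega(n \log n)$ total, completing the proof.
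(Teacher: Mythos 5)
There is a genuine gap, and it appears at the very first step after bilinearization: Theorem \ref{thm:planar-ckt1} is about general planar circuits, which are \emph{not} read-once, so a single variable $x_i$ may label many leaves. The Lipton--Tarjan separator (Theorem \ref{partitionthm1}) partitions leaf \emph{occurrences}, not variables: after removing $S$, the same $x_i$ can have leaves in both $A$ and $B$ (and in $S$), so the operation ``set $\varx_B,\varx_S$ to $0$ while keeping $\varx_{A'}$ alive'' is not a substitution you can perform on the polynomial --- setting $x_i=0$ kills all of its occurrences. Consequently the restricted circuit need not compute the cross form $\vary_{B'}^{T}M[B',A']\varx_{A'}$, and the rank argument never gets off the ground. (Your claim that each side receives $\Omega(n)$ inputs of \emph{both} types is also not what the weighted separator theorem gives you, but that is a secondary issue.) The paper circumvents exactly this obstacle: it first argues by counting that if the circuit has size $o(n\log n)$ then at least $n/2$ of the $x$-variables and $n/2$ of the $y$-variables each appear at most $\frac{1}{100}\log n$ times, and then applies Tur\'an's multiplicity-aware partition theorem (Theorem \ref{partitionthm2}) with $k=O(\log n)$, which separates a surviving set of $\Omega(n^{0.96})$ $x$-variables from $\Omega(n^{0.96})$ $y$-variables using a separator of size $O(\sqrt{n}\,(\log n)^{3/2})$; the rank of the corresponding totally regular minor then contradicts the separator bound. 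Note that the $\log n$ in the lower bound comes from this read-multiplicity counting, not from any recursion.

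Your recursive charging scheme has its own unresolved problems even in the read-once setting. The rank argument only shows that \emph{some} $\Omega(n)$ product gates (anywhere in the circuit --- in $A$, $B$, or $S$) are needed for the cross form; it does not place them ``on paths that cross the separator,'' so the proposed assignment of gates to the deepest level at which they are ``straddled'' is not justified. Moreover, when you recurse on $\vary_A^{T}M[A,A]\varx_A$ and on the $B$-side, both recursive calls act on restrictions of the \emph{same} circuit, so the subproblem sizes do not shrink and the same product gates can be counted at many levels; the sketched fix (subtracting the rank already contributed by ancestor-charged gates) is not carried out and it is unclear that total regularity of small minors rescues it. Indeed, in the read-once case the paper's argument applied to a single separator already yields the stronger $\Omega(n^2)$ bound (Theorem \ref{thm:ro-planar-ckt}) with no recursion, and for the general case the route through Theorem \ref{partitionthm2} is, as far as the paper shows, the essential new ingredient your proposal is missing.
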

Over infinite fields, there exist {\em explicit} totally regular matrices (such as Cauchy matrices) whose entries are uniformly computable in time polynomial in the dimension of the matrix, so our bound applies to a class of explicit bilinear forms. It is important to note that our lower bound works for unbounded-depth and unbounded fan-in planar arithmetic circuits. Since formulas are a subclass of planar circuits, \autoref{thm:planar-ckt1} implies a superlinear lower bound on the size of formulas computing certain explicit bilinear forms. 

As a corollary, we get the following separation between circuits and planar  circuits:  
\begin{corollary}
    Over any infinite field $\mathbb{F}$, there exists an infinite family $\{M_n\}_{n\geq 1}$ of  totally regular matrices(where $M_n\in \mathbb{F}^{n \times n}$) such that the bilinear form $\vary^{T}M_n\varx$ can be computed by an arithmetic circuit of size $O(n)$ but any planar arithmetic circuit computing it requires size $\Omega(n\log n)$.
\end{corollary}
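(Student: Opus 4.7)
The $\Omega(n \log n)$ lower bound is an immediate consequence of Theorem~\ref{thm:planar-ckt1}: any totally regular family satisfies its hypothesis, so the real content of the corollary is the matching upper bound, i.e., producing a totally regular family whose bilinear form has an (unrestricted) arithmetic circuit of size $O(n)$.

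The plan is to take $\{M_n\}$ to be a family of Cauchy matrices: since $\mathbb{F}$ is infinite, fix $2n$ distinct field elements $a_1,\ldots,a_n,b_1,\ldots,b_n$ by a polynomial-time rule (e.g., $a_i = i$, $b_j = -j$ over the prime subfield) and set $(M_n)_{ij} = 1/(a_i - b_j)$. Every square submatrix of $M_n$ is itself a Cauchy matrix, and the classical Cauchy determinant formula shows it is nonsingular; hence $M_n$ is totally regular, and the $\Omega(n \log n)$ planar lower bound is inherited from Theorem~\ref{thm:planar-ckt1}.

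For the upper bound I would rewrite
\[
\vary^{T} M_n \varx \;=\; \sum_i y_i\, U(a_i), \qquad U(z) \;=\; \sum_j \frac{x_j}{z - b_j},
\]
and organize the joint evaluation of $U(a_1),\ldots,U(a_n)$ into a single non-planar arithmetic circuit of linear size. A generic divide-and-conquer on the two point sets yields only $O(n\log^2 n)$; the main obstacle is shaving the polylogarithmic factor. The strategy is to pick the parameters $(a_i),(b_j)$ with enough algebraic structure (for instance, geometric progressions, or elements drawn from a subgroup of a suitable multiplicative group) so that the recursive tree collapses and the bilinear form admits a telescoping / amortized-constant evaluation per level. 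If such a sharpening for Cauchy matrices turns out to be unavailable, an alternative is to substitute a specifically engineered totally regular family — for example, a carefully chosen product of two structured matrices whose individual bilinear computations are both of linear size, and whose product one verifies to be totally regular via a direct minor computation.
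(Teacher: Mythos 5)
Your lower-bound half is fine and matches the paper: Theorem~\ref{thm:planar-ckt1} applies to any totally regular family. The gap is in the $O(n)$ upper bound, which is the actual content of the corollary. Fast multipoint evaluation of $U(z)=\sum_j x_j/(z-b_j)$ gives only $O(n\log^2 n)$ (or $O(n\log n)$ with FFT-friendly structured nodes), and your plan to ``shave the polylogarithmic factor'' by choosing geometric progressions or subgroup structure is not an argument --- no such linear-size evaluation for a Cauchy (or any explicit totally regular) matrix is known. Indeed the paper explicitly records, right after Corollary~\ref{cor:separation} and again in the discussion, that it is \emph{not} known whether there exists an explicit totally regular $M$ with $C(\vary^{T}M\varx)=O(n)$; your proposal, if it worked, would settle that open question. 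Your fallback (``a product of two structured matrices, verified totally regular by a direct minor computation'') is likewise only a hope: composing two linear-size linear circuits does give a linear-size circuit for the product transformation, but there is no mechanism offered for certifying that all minors of the product are nonsingular, and this is precisely where explicit constructions break down.

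The paper sidesteps explicitness entirely. It takes an $n$-superconcentrator of linear size (Theorem~\ref{thm:linearsup}), and invokes Strassen's observation (Lemma~\ref{lem:weight-fn}, i.e., Valiant's Theorem~4.2): over an infinite field one can assign field constants to the edges so that, viewing inputs as $x_1,\ldots,x_n$ and internal/output vertices as sum gates, the $n$ output linear forms are the rows of a totally regular matrix $M_n$. This is a non-constructive existence statement (the weights are chosen to avoid the vanishing of finitely many minor-determinants, which is possible over an infinite field), and it yields a linear-size linear circuit for $M_n\varx$; multiplying the $i$-th output by $y_i$ and summing adds $O(n)$ gates, giving $C(\vary^{T}M_n\varx)=O(n)$, after which Theorem~\ref{thm:planar-ckt1} supplies the $\Omega(n\log n)$ planar lower bound. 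If you want to salvage your write-up, replace the Cauchy-matrix upper bound with this superconcentrator-plus-weight-assignment argument (and accept non-explicitness), or else you must supply a genuinely new linear-size algorithm for an explicit totally regular bilinear form, which the current sketch does not do.
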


Next, we consider {\em read-once planar arithmetic circuits} which are a special class of planar arithmetic circuits where every variable appears as a leaf at most once. In this work, we also prove quadratic lower bounds for read-once planar circuits computing bilinear forms:

\begin{theorem}
\label{thm:planar-ckt2}
    Let $M\in \mathbb{F}^{n \times n}$ be a totally regular matrix and $\varx$ and $\vary$ be vectors of $n$ variables each. Then, any read-once planar arithmetic circuit computing the bilinear form 
    ${\vary}^{T}M{\varx}$ has size $\Omega(n^2)$.
\end{theorem}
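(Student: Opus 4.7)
The plan is to combine a balanced planar separator with a submatrix rank bound that crucially exploits the read-once assumption. Let $C$ be a read-once planar circuit computing $\vary^\top M \varx$, and set $s := |C|$.

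First, I would apply a weighted version of the Lipton--Tarjan planar separator theorem to the underlying undirected graph of $C$, placing weights on the $2n$ input leaves so as to balance both the $\vary$- and $\varx$-leaves across the two sides. This yields a vertex subset $S_c \subseteq V(C)$ of size $O(\sqrt{s})$ whose removal disconnects $V(C) \setminus S_c$ into two parts $V_1, V_2$, each containing $\Omega(n)$ $\vary$-leaves and $\Omega(n)$ $\varx$-leaves. Let $Y_1, X_2 \subseteq [n]$ index, respectively, the $\vary$-leaves in $V_1$ and the $\varx$-leaves in $V_2$. Simultaneously balancing both families of leaves is a routine matter (e.g., via a two-stage application of the separator theorem). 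Consider the submatrix $M' := M[Y_1, X_2]$; by total regularity, $\rank(M') = \Omega(n)$.

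Next, I would bilinearize $C$ at a constant-factor cost in size. The standard bilinearization is a local gadget that should preserve both planarity and read-onceness (each variable's unique leaf is replaced by a constant-size gadget in which only one copy carries the variable). After this step every multiplication gate $g_k$ has inputs $L_k^{\vary} = \alpha_k^\top \vary$ and $L_k^{\varx} = \beta_k^\top \varx$, so $M = \sum_k c_k \alpha_k \beta_k^\top$. To bound $\rank(M') = O(|S_c|)$, I would case-analyze each contributing mult gate $g_k$ by the location of its $\varx$-input node and of $g_k$ itself. If $g_k$'s $\varx$-input lies in $V_1 \cup S_c$, then the $\varx$-signal originating from $V_2$ must cross $S_c$, so $\beta_k[X_2]$ lies in the span of the $\leq |S_c|$ vectors $\{\phi_s[X_2]\}_{s \in S_c}$ obtained from the $\varx$-linear parts of the cut polynomials. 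If $g_k$ and its $\varx$-input both lie in $V_2$, then by a symmetric argument on the $\vary$-side, $\alpha_k[Y_1]$ lies in an analogous $|S_c|$-dimensional subspace. The remaining mult gates lie themselves in $S_c$ and are at most $|S_c|$ in number. Aggregating rank-one contributions within each case yields $\rank(M') \leq 3|S_c|$.

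Combining, $\Omega(n) = \rank(M') \leq 3|S_c| = O(\sqrt{s})$, so $s = \Omega(n^2)$. The principal obstacle is making the rank decomposition rigorous: one must carefully track every way a $\vary$- or $\varx$-signal can traverse $S_c$ en route to a contributing multiplication gate, so that the coefficients $\alpha_k[Y_1]$ and $\beta_k[X_2]$ genuinely factor through linear combinations of the cut polynomials. Bilinearization keeps this bookkeeping tractable, since every cut vertex then carries a polynomial with well-defined linear parts whose spans are at most $|S_c|$-dimensional. The read-once hypothesis is essential throughout: without it, $y_i$ or $x_j$ could appear at leaves on both sides of $S_c$, enabling cross-cut bilinear terms to bypass the separator and invalidating the rank bound.
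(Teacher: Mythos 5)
Your proposal is correct in substance and follows essentially the same route as the paper: bilinearize at constant cost while preserving planarity and read-onceness, apply the Lipton--Tarjan separator with weights on the variable leaves, and bound the rank of the surviving totally regular submatrix by $3$ times the separator size via the three-way classification of product gates, giving $\Omega(n) \leq O(\sqrt{s})$. The only adjustments needed are to bilinearize \emph{before} invoking the separator (as the paper does), since bilinearization alters the underlying graph, and to note that simultaneous balancing of both variable families is unnecessary --- a single weight function on all $2n$ variable leaves plus a counting argument already yields $\Omega(n)$ $\varx$-leaves on one side and $\Omega(n)$ $\vary$-leaves on the other, which is all the rank argument requires.
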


We also show that this bound is tight. Furthermore, we note that by the planarization procedure in Lemma \ref{lem:planarization}, a super-quadratic lower bound for read-once planar arithmetic circuits for an explicit family of polynomials implies a superlinear lower bound for general arithmetic circuits (for the same family), a long-standing open problem in algebraic complexity theory.

Using explicit superconcentrators of depth two, we get the following separations between circuits, read-once planar circuits and arithmetic formulas:

\begin{corollary}
    \begin{enumerate}
    
    \item Over any infinite field $\mathbb{F}$, there exists an explicit family $\{f_n\}_{n \geq 1}$ of degree $4$, $n^{1 + o(1)}$-variate polynomials such that $f_n$ is computable by a circuit of size $ n^{1+o(1)}$ but any read-once planar circuit for $f_n$ requires size $\Omega(n^2)$.
    \item \label{item2} Over any infinite field $\mathbb{F}$, there exists an explicit family $\{f_n\}_{n \geq 1}$ of degree $4$, $n^{1 + o(1)}$-variate polynomials such that $f_n$ is computable by an arithmetic formula of size $ n^{1+o(1)}$ but any read-once planar circuit for $f_n$ requires size $\Omega(n^2)$.
   \item \label{item3} The polynomial $x_1^n + \cdots + x_n^n$ has a planar arithmetic circuit of size $O(n \log n)$ but any formula computing it requires size $\Omega(n^2)$.
\end{enumerate}

Separations \ref{item2}, \ref{item3} together imply that formula complexity and read-once circuit complexity are incomparable measures in the arithmetic setting.
\end{corollary}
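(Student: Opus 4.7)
The plan is to handle the three separations separately: parts (1) and (2) use explicit depth-2 superconcentrators to engineer a bilinear form that has small circuit/formula but large read-once planar complexity, while part (3) combines repeated squaring for the upper bound with a degree--leaf counting argument for the formula lower bound.

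For parts (1) and (2), I would start with an explicit depth-2 superconcentrator on $n$ sources and $n$ sinks with $n^{1+o(1)}$ edges. Over an infinite field, assigning indeterminate weights to the edges and using the superconcentrator property (any $k$ sources and any $k$ sinks are joined by $k$ vertex-disjoint paths) together with a Lindstr\"om-type path-sum expansion shows that every minor of the realized matrix $M$ is a non-zero polynomial in the edge weights; hence generic (or suitably explicit algebraic) weights yield an $M$ that is totally regular. The depth-2 decomposition $M = A_2 A_1$ then gives
\[
{\vary}^{T} M {\varx} \;=\; \sum_{j=1}^{m} (A_1 {\varx})_j \cdot (A_2^{T} {\vary})_j,
\]
a sum of products of linear forms in disjoint variable sets, whose natural formula (and circuit) has size equal to the total edge count of the superconcentrator, i.e.\ $n^{1+o(1)}$. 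To meet the degree-four requirement, I pad by an independent degree-two gadget, setting
\[
f_n({\varx},{\vary},z,w) \;=\; \bigl({\vary}^{T} M {\varx}\bigr)\cdot (z_1 w_1),
\]
which has degree $4$, uses $2n+2 = n^{1+o(1)}$ variables, and keeps the circuit/formula size at $n^{1+o(1)}$. For the lower bound, any read-once planar circuit for $f_n$ restricted by $z_1 = w_1 = 1$ becomes a read-once planar circuit of no larger size computing the totally regular bilinear form ${\vary}^{T} M {\varx}$ (substituting constants for leaves preserves both planarity and the read-once property), so Theorem~\ref{thm:planar-ckt2} forces $\Omega(n^2)$ gates.

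For part (3), the upper bound computes each $x_i^n$ via a chain of $O(\log n)$ repeated squarings; the $n$ independent chains can be drawn side by side and combined along a common summation rail without crossings, yielding a read-once planar circuit of size $O(n\log n)$. For the $\Omega(n^2)$ formula lower bound, I would use the standard degree--leaf argument: in any arithmetic formula the total degree of the computed polynomial is at most the number of variable leaves, by induction using $\deg(fg) = \deg f + \deg g$ and $\deg(f+g) \le \max(\deg f,\deg g)$. Fix an index $i$ and substitute $x_j := c_j$ (arbitrary field constants) for every $j \neq i$; the restricted formula computes a univariate polynomial of degree $n$ in $x_i$, and the number of leaves labeled $x_i$ is unchanged. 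Hence the original formula must contain at least $n$ leaves labeled $x_i$ for each of the $n$ indices, giving $\Omega(n^2)$ leaves in total.

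The main subtlety I anticipate lies in producing a totally regular linear transformation on top of an explicit depth-2 superconcentrator; this is handled by the Lindstr\"om-style minor expansion combined with a Zariski-density argument (or an explicit choice of algebraically independent weights) over an infinite field. Everything else---preservation of read-once planarity under constant substitution, the planarity of the repeated-squaring layout, and the degree--leaf bound for formulas---is routine from the definitions.
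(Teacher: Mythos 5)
Your part (3) is correct and essentially the paper's own argument (repeated squaring for the $O(n\log n)$ planar upper bound, and the degree-versus-leaves count after substituting constants for all variables but $x_i$ to force $n$ leaves per variable). The problem is in parts (1) and (2), where the corollary demands an \emph{explicit} family, and your construction does not deliver that. You fix a concrete totally regular matrix $M$ obtained from the depth-2 superconcentrator by choosing ``generic (or suitably explicit algebraic) weights,'' and then take $f_n=(\vary^{T}M\varx)\cdot z_1w_1$. The coefficients of this $f_n$ are the entries of $M$, so $f_n$ is explicit only if $M$ is, i.e.\ only if its entries are computable in $\poly(n)$ time. But the Zariski-density argument (Lemma \ref{lem:weight-fn}) only gives \emph{existence} of good weights, and ``algebraically independent weights'' cannot be chosen at all over fields such as $\mathbb{Q}$ or $\overline{\mathbb{F}_p}$ (every element is algebraic over the prime field), while numerical tricks like doubly-exponentially growing weights blow up the bit-size beyond polynomial. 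Indeed the paper explicitly records that no explicit totally regular matrix with linear-size circuits for $\vary^{T}M\varx$ is known, and that the circuit-vs-planar separation for the bilinear form itself (Corollary \ref{cor:separation}) is non-explicit; your $f_n$ inherits exactly this non-explicitness, so as written you have only reproved the non-explicit separation of Corollary \ref{cor:separation1}.

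The paper's route around this is Ben-Or's trick, which you should adopt: keep the edge weights of the explicit depth-2 superconcentrator as fresh \emph{variables} $z_e$, so the polynomial is $f_n(\varx,\vary,\mathbf{z})=\sum_{\text{paths}} x_i\, z_{e_1} z_{e_2}\, y_j$. This polynomial has $0/1$ coefficients determined by the explicit graph, hence is explicit, is automatically of degree $4$ and $n^{1+o(1)}$-variate (no padding by $z_1w_1$ is needed), and has circuit size $n^{1+o(1)}$; grouping by middle vertices, $f_n=\sum_{m}\bigl(\sum_{e_1\to m} z_{e_1}x_{i(e_1)}\bigr)\bigl(\sum_{m\to e_2} z_{e_2}y_{j(e_2)}\bigr)$ is even a formula of size $n^{1+o(1)}$, giving part (2). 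The totally regular matrix enters only in the lower bound: by Lemma \ref{lem:weight-fn} there is some (not necessarily explicit) assignment $\alpha$ to the $z$-variables with $f_n(\varx,\vary,\alpha)=\vary^{T}A\varx$, $A$ totally regular, and substituting constants into a read-once planar circuit preserves read-once planarity, so Theorem \ref{thm:planar-ckt2} forces size $\Omega(n^2)$. Your substitution step $z_1=w_1=1$ and the rest of your argument are fine; the missing idea is solely that explicitness must live in the polynomial (via variable edge-weights), not in the matrix.
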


It is easy to see that any algebraic branching program can be converted into an equivalent arithmetic circuit computing the same polynomial without much blowup in size. We observe that this can be done while preserving planarity. 

Thus, 
$\Omega(n\log n)$ 
lower bound for planar circuits also extends to
planar algebraic branching programs (ABPs where the underlying DAG is planar). In fact, it holds for unlayered planar ABPs:  

\begin{theorem}
\label{thm:planar-abp}
    Let $M\in \mathbb{F}^{n \times n}$ be any totally regular matrix and $\varx$ and $\vary$ be vectors of $n$ variables each. Then, any (not necessarily layered) planar ABP computing the bilinear form 
    ${\vary}^{T}M{\varx}$ has size $\Omega(n \log n)$.
\end{theorem}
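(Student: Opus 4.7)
The plan is to reduce the theorem to \autoref{thm:planar-ckt1} by showing that every planar ABP of size $s$ can be simulated by a planar arithmetic circuit of size $O(s)$ computing the same polynomial. Once such a planarity-preserving simulation is in hand, \autoref{thm:planar-ckt1} immediately forces $s=\Omega(n\log n)$ for any planar ABP computing $\vary^{T}M\varx$ with $M$ totally regular.

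For the simulation, I would use the standard ABP-to-circuit conversion. For each vertex $v$ of the ABP, introduce a sum gate $S_v$ intended to compute the sum, over all source-to-$v$ paths, of the product of the edge labels along the path. We set $S_s=1$ at the source, and for every other vertex $v$ we set $S_v=\sum_{u:(u,v)\in E} S_u\cdot \ell_{uv}$, where $\ell_{uv}$ is the label of the edge $(u,v)$. Each product $S_u\cdot\ell_{uv}$ is realized by a fresh multiplication gate $M_{uv}$ whose two inputs are $S_u$ and a new leaf labeled by $\ell_{uv}$. The output of the circuit is $S_t$ at the sink, and the total gate count is $O(|V|+|E|)=O(s)$.

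The crux is to argue that this simulation can be carried out without creating edge crossings. Fix any planar embedding of the ABP and place each sum gate $S_v$ at the location of the corresponding ABP vertex. For each edge $(u,v)$, subdivide it by inserting $M_{uv}$ in its interior to obtain the length-$2$ path $S_u\to M_{uv}\to S_v$; subdivision preserves planarity. To attach the label leaf $\ell_{uv}$, observe that $M_{uv}$ lies on the boundary of two faces of the current embedding: one may place a fresh leaf vertex inside either adjacent face, close to $M_{uv}$, and route the wire from the leaf to $M_{uv}$ entirely within that face. Since this only adds a pendant vertex inside a face, the embedding remains planar. Because planar circuits permit a variable to label many leaves, there is no need (nor attempt) to share label leaves across distinct ABP edges, which is precisely what would jeopardize planarity.

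I expect the primary subtlety to lie exactly in this planarity-preserving placement of the label leaves; once that step is formalized, the rest is straightforward bookkeeping. The resulting planar arithmetic circuit has size $O(|V|+|E|)=O(s)$ and computes $\vary^{T}M\varx$, so applying \autoref{thm:planar-ckt1} yields $s=\Omega(n\log n)$, as required. The argument makes no use of layering, so it applies to arbitrary (possibly unlayered) planar ABPs.
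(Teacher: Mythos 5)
Your proposal is correct and follows essentially the same route as the paper: the paper's Lemma~\ref{lemma:abptockt} likewise subdivides each ABP edge to insert a product gate with a pendant leaf carrying the edge label (using planarity to bound $|E|=O(|V|)$), and then invokes the planar-circuit lower bound of Theorem~\ref{thm:planar-ckt}. The only cosmetic difference is that the paper removes the source and turns its out-neighbours' edges directly into labelled leaves instead of keeping a constant-$1$ gate at the source, and it additionally reduces fan-in via balanced binary trees, neither of which changes the argument.
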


All our lower bounds for bilinear forms use variants of the {\em planar separator theorem} (Lipton and Tarjan \cite{LT79}) combined with a rank argument. One version of the planar separator theorem says that any $n$-vertex planar graph can be partitioned into two disconnected components of size $\geq n/3$ by removing a small ($\leq 2\sqrt{2}\sqrt{n}$) number of vertices. The rough idea is that the existence of a small separator induces algebraic dependencies among the polynomials computed in the circuit.\\ 

We now turn our attention to planar circuits computing multiple linear forms. 
In a seminal work, Lipton and Tarjan \cite{LT77} applied the planar separator theorem to obtain a quadratic lower bound on the size of planar superconcentrators. Valiant \cite{Val75} had already observed that if $M$ is totally regular then any read-once planar circuit computing the linear transformation $M\varx$ must be an $n$-superconcentrator. This gives a quadratic lower bound on the read-once planar circuit complexity of $A\varx$. 
First we relax the read-once condition and show an $\Omega(n^{4/3})$ lower bound on the size of planar circuits computing $A\varx$ for any totally regular $A$.

\begin{theorem}
    Let $M\in \mathbb{F}^{n\times n}$ be any totally regular matrix and $\varx$  be a vector of $n$ variables. Then, any planar circuit that computes $M\varx$ requires size $\Omega(n^{4/3})$.
\end{theorem}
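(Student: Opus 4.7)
My plan is to combine a Menger-type rank--cut lemma with a recursive planar separator decomposition and a counting argument. Let $C$ be a planar circuit of size $s$ computing $M\varx$, and write $L(x_j)$ for the set of leaves of $C$ labeled by $x_j$, and $L(X') = \bigcup_{x_j \in X'} L(x_j)$ for any subset $X'$ of variables.

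\emph{Rank-cut lemma.} For any $X' \subseteq \{x_1,\dots,x_n\}$ and any $O' \subseteq \{o_1,\dots,o_n\}$ of outputs, the minimum vertex cut in $C$ between $L(X')$ and $O'$ has size at least $\min(|X'|,|O'|)$. If $T$ is such a cut, then each $o\in O'$ is a polynomial in the values computed at the vertices of $T$ together with the variables outside $X'$; by the chain rule the Jacobian $\bigl[\partial o/\partial x_j\bigr]_{o\in O',\, x_j\in X'}$ factors through $|T|$ intermediates and so has rank at most $|T|$ at every evaluation point. But since each $o$ is a linear form in $\varx$, this Jacobian is exactly the submatrix $M|_{O'\times X'}$, whose rank is $\min(|O'|,|X'|)$ by total regularity of $M$.

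\emph{Recursive planar separator.} I now apply the Lipton--Tarjan planar separator theorem recursively to $C$, using output counts as weights so that every split balances the two sides by output count. After $\ell$ levels this yields a partition of the vertices of $C$ into $k = 2^\ell$ pieces $P_1,\dots,P_k$ with $|O_{P_i}| = \Theta(n/k)$ each. Because the separator sizes shrink geometrically down the recursion tree, any single piece $P_i$ can be separated from the rest of $C$ by removing an accumulated cut $\partial P_i$ of size $\sum_{j=0}^{\ell-1} O(\sqrt{s\,(2/3)^j}) = O(\sqrt{s})$, independent of $\ell$. Choose $\ell$ so that $k = \Theta(n/\sqrt{s})$ is the largest value still ensuring $|O_{P_i}| \geq c\sqrt{s}$ for every $i$, and apply the rank--cut lemma to each piece with $X' = \{x_j : L(x_j)\cap P_i = \emptyset\}$ and $O' = O_{P_i}$: the cut $\partial P_i$ separates $L(X')$ from $O'$, so $|X'|\le|\partial P_i|=O(\sqrt{s})$, i.e.\ at least $n - O(\sqrt{s})$ variables have some leaf in $P_i$.

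\emph{Double counting.} Summing over pieces,
\[
k\,\bigl(n - O(\sqrt{s})\bigr) \;\le\; \sum_{i=1}^k \bigl|\{j : L(x_j)\cap P_i \neq \emptyset\}\bigr| \;\le\; \sum_{j=1}^n|L(x_j)|\;\le\; s,
\]
since the leaves lie in pairwise disjoint pieces. Assuming $s = o(n^2)$ (otherwise the bound is trivial), this rearranges to $s = \Omega(n^2/\sqrt{s})$, i.e.\ $s = \Omega(n^{4/3})$. The main technical obstacle I foresee is justifying the per-piece boundary bound of $O(\sqrt{s})$ uniformly across all pieces, which is what makes the exponent $4/3$ come out; this rests on the geometric telescoping of separator sizes along root-to-leaf paths of the recursion rather than on a Frederickson-style $r$-division. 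A minor point is that Step~1 requires evaluating the Jacobian at a generic point of the underlying field, which is harmless over infinite fields and can be arranged over sufficiently large finite fields.
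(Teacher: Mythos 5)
Your overall skeleton is the same as the paper's proof of this theorem: a Menger-type cut/rank lemma for the outputs of $M\varx$ (your Jacobian-factoring argument is a clean and valid equivalent of the paper's Claim~\ref{disjpaths}, and it needs no genericity since the chain-rule factorization holds formally and the Jacobian is literally the constant submatrix $M|_{O'\times X'}$), followed by a planar partition into pieces that each have many outputs and a small accumulated boundary, followed by a counting step. Your final double-counting over all pieces is a legitimate variant of the paper's counting (the paper instead takes $p\approx |V|/n$ pieces and finds one piece containing few input gates); either way the arithmetic gives $s=\Omega(n^{4/3})$ once the partition exists.

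The genuine gap is the partition step, and it is exactly the point you flag. Recursive Lipton--Tarjan with output counts as weights (Theorem~\ref{partitionthm1}) balances only the \emph{weight} of the two sides, not their vertex counts, so the piece containing a given final part need not shrink like $s(2/3)^j$ along the recursion; the telescoping bound $\sum_j O(\sqrt{s(2/3)^j})=O(\sqrt s)$ on the accumulated boundary is therefore unjustified, and the naive bound is only $O(\sqrt s\,\log n)$ per piece, which degrades the final exponent. A second facet of the same problem: fixed-depth weighted recursion only guarantees each piece has between $(1/3)^\ell n$ and $(2/3)^\ell n$ outputs, so the pieces are not uniformly $\Theta(n/k)$ in output count, and insisting on the lower bound $c\sqrt s$ for all pieces forces $k$ to be polynomially smaller than $n/\sqrt s$, again weakening the bound below $n^{4/3}$. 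What you need is a decomposition in which every piece simultaneously has $\Theta(|V'|/p)$ of the designated output vertices \emph{and} a separator of size $O(\sqrt{|V|})$; obtaining both at once requires a splitting step that balances vertex count and output count together (so that sizes do decay geometrically while output counts stay balanced), and this is precisely the content of Savage's partition theorem, which the paper invokes as Theorem~\ref{partitionthm3} rather than re-deriving. If you replace your recursive-separator sketch by a citation of that theorem (with $p=\Theta(n/\sqrt s)$), the rest of your argument goes through and gives the claimed $\Omega(n^{4/3})$ bound; as written, the decomposition step does not.
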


Next, we consider multi-output formulas. A formula is said to compute polynomials $f_1, \ldots, f_t$ if there exist $t$ gates in it that compute $f_1, \ldots, f_t$ resp.

\begin{theorem}
\label{thm:multiop}
    Let $M\in \mathbb{F}^{n\times n}$ be any totally regular matrix and $\varx$  be a vector of $n$ variables. Then, any multi-output formula for computing $M\varx$ requires size $\Omega(n^2/\log n)$.
\end{theorem}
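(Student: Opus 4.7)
The plan is to exploit the fact that formulas are trees: by organizing the designated output gates into a nesting hierarchy and carving the formula into the ``exclusive'' portions owned by each output, one can use the total regularity of $M$ to force each exclusive portion to use many distinct variables. Summing these bounds will in fact give an $\Omega(n^2)$ lower bound, which of course implies the stated $\Omega(n^2/\log n)$.

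First I would set up the combinatorial structure. Let $F$ be a multi-output formula of size $s$ with designated output gates $o_1,\ldots,o_n$ computing $f_i = (M\varx)_i$, and let $F_{o_i}$ denote the sub-formula rooted at $o_i$. Since $F$ is a tree, the ancestor relation on the outputs induces an output forest $T_O$: declare $o_j$ to be a child of $o_i$ precisely when $o_j$ is a descendant of $o_i$ in $F$ and no other output lies on the formula-path between them. Letting $C_i \subseteq [n]$ denote the indices of children of $o_i$ in $T_O$, set
\[
F_{o_i}^\ast \;:=\; F_{o_i} \setminus \bigcup_{j \in C_i} F_{o_j}.
\]
A short case analysis on the ancestor relation shows that the $F_{o_i}^\ast$'s are pairwise disjoint sub-graphs of $F$, so $\sum_i |F_{o_i}^\ast| \le s$.

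The heart of the argument is the lemma that $|V_i^\ast| \ge n - |C_i|$ for every $i$, where $V_i^\ast$ is the set of variables appearing as leaves in $F_{o_i}^\ast$. To prove it I pass to the degree-$1$ projection $g \mapsto g^{(1)}$ of the polynomial at each gate (well-defined even if $F$ has non-linear interior gates) and observe, by induction on the tree structure of $F_{o_i}^\ast$, that
\[
f_i \;=\; o_i^{(1)} \;=\; \sum_{j \in C_i} c_j\, f_j \;+\; \sum_{k \in V_i^\ast} d_k\, x_k,
\]
where the constants $c_j, d_k \in \mathbb{F}$ are certain products of degree-$0$ contributions along paths. Matching the coefficient of $x_k$ for each $k \notin V_i^\ast$ (where $d_k = 0$) gives $M_{ik} = \sum_{j \in C_i} c_j M_{jk}$. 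If $|V_i^\ast|$ were smaller than $n - |C_i|$, there would be at least $|C_i|+1$ such indices; picking any such subset $J$, the $(|C_i|+1)\times(|C_i|+1)$ sub-matrix of $M$ on rows $\{i\} \cup C_i$ and columns $J$ is non-singular by total regularity, so $M_{i,J}$ cannot lie in the row-span of $M_{C_i,J}$, a contradiction.

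Finally I combine the pieces: each $F_{o_i}^\ast$ contributes at least $|V_i^\ast| \ge n - |C_i|$ variable leaves (hence nodes), so
\[
s \;\ge\; \sum_i |F_{o_i}^\ast| \;\ge\; \sum_i (n-|C_i|) \;=\; n^2 - \sum_i |C_i| \;\ge\; n^2 - (n-1),
\]
using that $\sum_i |C_i|$ equals the number of edges in $T_O$, which is at most $n-1$. This gives $s = \Omega(n^2)$, and in particular $s = \Omega(n^2/\log n)$. I expect the main obstacle to be the clean setup of the degree-$1$ projection through interior non-linear gates (where one relies on automatic cancellation when extracting the linear part) and the correct bookkeeping of weights along paths in $F_{o_i}^\ast$; once these are in place, the totally regular argument goes through essentially by a non-singular-submatrix calculation.
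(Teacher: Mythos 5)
Your per-output linear-algebra lemma (that the linear part of $f_i$ is a combination of the $f_j$ for $j\in C_i$ and the variables occurring in $F_{o_i}^{\ast}$, so that total regularity forces $|V_i^{\ast}|\ge n-|C_i|$) is sound, but the counting step rests on a structural assumption that is false in the model the theorem is about. The paper's definition of a multi-output formula explicitly allows gates of fan-out $\ge 2$; the only requirement is that the \emph{undirected} underlying graph be acyclic. Hence two outputs, neither of which is an ancestor of the other, can share descendants: for instance a gate $g$ together with its entire subformula may feed into both $o_1=g+h_1$ and $o_2=g+h_2$, and the undirected graph is still a tree. In that situation $g$ and its subformula lie in both $F_{o_1}^{\ast}$ and $F_{o_2}^{\ast}$, so the ``exclusive'' regions are not pairwise disjoint and the inequality $\sum_i |F_{o_i}^{\ast}|\le s$ fails; for the same reason the ancestor relation on outputs need not be a forest and $\sum_i|C_i|\le n-1$ also fails. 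Your disjointness case analysis is valid only in the restricted model where every gate has fan-out one, where your argument indeed gives $\Omega(n^2)$ --- but that model forbids all sharing, which is precisely the phenomenon the theorem must control (and which matters for the partial-derivative corollary). The fact that you obtain $\Omega(n^2)$, strictly stronger than the paper's $\Omega(n^2/\log n)$ which the paper itself describes as optimal only up to the $\log n$ factor, is a symptom of this: the exclusive-region decomposition simply does not see how much a shared subformula can simultaneously help many incomparable outputs.

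For contrast, the paper's proof is global rather than per-output: it applies the forest partition lemma (Lemma \ref{partitionlemma4}) to the underlying undirected forest with $V'$ the set of outputs and $p\approx |V|/n$, obtaining a block $V_i$ that contains $\Omega(n/p)$ outputs, misses at least $n/2$ of the variables, and is separated from the rest by a set $S_i$ of size $O(\log n)$; total regularity plus Menger's theorem (Claim \ref{disjpaths}) then forces $\omega(\log n)$ vertex-disjoint paths from the avoided variables' leaves to the outputs in $V_i$, all of which must cross $S_i$ --- a contradiction unless the size is $\Omega(n^2/\log n)$. If you want to salvage your approach, you would need a charging scheme that bounds how many exclusive regions a shared gate can belong to, and no such bound better than the trivial one is available; closing the gap between $\Omega(n^2/\log n)$ and $O(n^2)$ is exactly what remains open.
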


Finally, we look at the implications of these lower bounds on the complexity of first order partial derivatives. Baur and Strassen\cite{BS83} proved that if a polynomial $f$ has a (fan-in 2) circuit of size $s$ then there is a (fan-in 2) circuit of size $O(s)$ computing all first order partial derivatives of $f$. As a consequence of Theorem \ref{thm:multiop}, we note that a result analogous to that of Baur and Strassen \cite{BS83} cannot hold for formulas and planar circuits, while it does hold for read-once planar circuits. 

\paragraph*{Related work on planar boolean circuits.}

Planar circuits are well studied in the boolean setting. Lipton and Tarjan\cite{LT77} initiated the study of planar boolean circuits by proving quadratic lower bounds on the size of read-once planar circuits computing multi-output boolean functions. 

The read-once restriction was first relaxed by Savage \cite{Savaga84}. He showed superlinear ($n^{1 + \delta}$ for various constants $\delta$) lower bounds on the planar circuit complexity of various multi-output boolean functions. 

The case of single output functions turned out to be harder and lower bounds for these were first proved by Savage \cite{Savage1981} (in the read-once case, an $\Omega(n^2)$ lower bound) and Turan \cite{Turan95} (in the general case, an $\Omega(n\log n)$ lower bound). In \cite{Turan95}, Turan also showed that read-once planar circuit complexity and formula complexity are incomparable measures in the boolean world.

All known lower bounds on the planar circuit complexity of boolean functions use variants of the planar separator theorem combined with ``crossing sequence arguments". These arguments make use of the fact that on any input, each wire of the circuit will either carry a zero or a one, specifically the number of possibilities is a constant. This is obviously not true in the case of arithmetic circuits over large fields, so the crossing sequence arguments do not carry over. To get around this, we use {\em rank} based methods. We are able to prove lower bounds for bilinear forms (which are degree two polynomials) using rank based methods whereas the boolean functions in previous works have degree (in the sense of \cite{Nisan1994}) linear in the number of variables.

\section{Preliminaries}
\label{sec:prelims}
In this section, we formally introduce all algebraic models of computation considered in this paper and some other graph-theoretic preliminaries that are crucial to our proofs. 

\begin{defn}[Arithmetic Circuits]
Let $\mathbb{F}$ be a field. An {\em arithmetic circuit}~$\Phi$ over $\mathbb{F}$ is a directed acyclic graph with vertices of in-degree zero or two.  A vertex of out-degree 0 is called an output gate.  A vertex of in-degree zero is called an input gate and is labeled  by elements from $X \cup \mathbb{F}$. Every other gate is labeled either  by  $+$ or $\times$. Every gate in $\Phi$ naturally computes a polynomial in $\mathbb{F}[X]$ and the polynomial(s) computed by $\Phi$ is (are) the polynomial(s) computed at the output gate(s). We allow edges (wires) to be labelled by field elements, these simply scale the polynomial. The {\em size} of  $\Phi$ is the number of gates in $\Phi$ and {\em depth} of $\Phi$ is the length of the longest path from an input gate to an output gate. For a polynomial $f$, we let $C(f)$ denote the size of the smallest arithmetic circuit computing $f$.
\end{defn}

\begin{defn}[Arithmetic Formula]
An {\em arithmetic formula} is an arithmetic circuit where the underlying undirected graph is a tree. For a polynomial $f$, we let $L(f)$ denote the size of the smallest arithmetic formula computing $f$. We say a formula computes polynomials $f_1, \ldots, f_k$ if there exist $k$ nodes in the formula that compute $f_1, \ldots, f_k$. Note that a formula that computes multiple polynomials may have gates with fan-out $\geq 2$. The restriction is that there should be no cycles in the underlying undirected graph. For polynomials $f_1, \ldots, f_k$, we let $L(f_1, \ldots, f_k)$ denote the size of the smallest formula computing $f_1, \ldots, f_k$.
\end{defn}

\begin{defn}[Planar arithmetic circuits]

    A {\em planar arithmetic circuit} is an arithmetic circuit whose underlying graph is planar. For a polynomial $f$, we let $C_p(f)$ denote the size of the smallest planar arithmetic circuit computing $f$.

\end{defn}

\begin{defn}[Read-once planar arithmetic circuits]

A {\em read-once planar arithmetic circuit} is a planar arithmetic circuit in which each variable labels at most one input gate. For a polynomial $f$, we let $C^r_p(f)$ denote the size of the smallest read-once planar circuit computing $f$.
        
\end{defn}

\begin{defn}[Algebraic Branching Programs(ABPs)]
      An {\em Algebraic Branching Program} (ABP) $P$  is a  layered directed acyclic graph with one  vertex $s$ of in-degree zero (called {\em source}) in the first layer and one vertex $t$ of out-degree zero (called {\em sink}) in the last layer. Every edge $e$ in  $P$  is labelled by an element in $ X \cup \mathbb{F}$. Let {\em weight} of a path be the product of its edge labels and the polynomial  computed by an ABP $P$  is the sum of weights of all $s$ to $t$ paths in $P$. The {\em size} of an ABP is the number of vertices in it. For a polynomial $f$, we let $A(f)$ denote the size of the smallest ABP computing $f$.
\end{defn}

\begin{defn}[Planar ABPs]
    A {\em Planar ABP} is an ABP whose underlying graph is planar. For a polynomial $f$, we let $A_p(f)$ denote the size of the smallest planar ABP computing $f$.
\end{defn}

\begin{defn}[Linear forms]
    A {\em linear form} is a homogeneous degree one polynomial.
        
\end{defn}

\begin{defn}[Bilinear forms]

Let $\mathbb{F}$ be a field and let $\varx = (x_1, \ldots, x_n)^{T}$, $\vary = (y_1, \ldots, y_n)^{T}$ be two vectors of variables. The \textit{bilinear form} defined by a matrix $M\in\mathbb{F}^{n \times n}$ is the polynomial $f(\varx, \vary) = \vary^{T}M\varx$. We say the bilinear form $\vary^{T}M\varx$ has rank $r$ if $\rank(M) = r$. We say that a family of bilinear forms $\{\vary^{T}M_n\varx\}_{n \geq 0}$ (where $M_n\in \mathbb{F}^{n \times n}$) is explicit if there is an algorithm that on input $n$ in unary outputs all entries of $M_n$ in $\poly(n)$ time.

\end{defn}

\begin{defn}[Bilinear circuits]
A bilinear circuit with inputs $\{x_1, \ldots, x_n\}$ and $\{y_1, \ldots, y_n\}$ is an arithmetic circuit in which every product gate has exactly two children, one of which computes a linear form in the $x$ variables and the other computes a linear form in the $y$'s. Every bilinear form $f$ on $n$ variables is clearly computable by a bilinear circuit of size $O(n^2)$, and we let $C^b(f)$ denote the size of the smallest bilinear circuit computing $f$. It is well known and easy to show that for any bilinear form $f$, $C^{b}(f) = O(C(f))$. For a bilinear form $f$, we let $C_p^{b}(f)$ denote the size of the smallest circuit computing $f$ that is both planar and bilinear.
    
\end{defn}

\begin{defn}[Bilinear formulas]

A \textit{bilinear formula} with inputs $\{x_1, \ldots, x_n\}$ and $\{y_1, \ldots, y_n\}$ is an arithmetic formula in which every product gate has exactly two children, one of which computes a linear form in the $x$ variables and the other computes a linear form in the $y$'s. Without loss of generality we can assume \cite{NW95} that a bilinear formula for $\vary^{T}M\varx$ is a sum of products of two linear forms, one in the $x$ variables and one in the $y$ variables, i.e, it has the following structure: $$\vary^{T}M\varx = \sum_{i=1}^{k}\vary^{T}(u_i{v_i}^{T})\varx$$ Equivalently, the bilinear formula above gives the factorization $M=UV$ of $M$ where $u_i$'s are the columns of $U$ and $v_i^T$'s are the rows of $V$. The size of such a bilinear formula is the number of non-zero entries in all the vectors $u_i, v_i$. Every bilinear form $f$ is clearly computed by some bilinear formula and we let $L^{b}(f)$ denote the size of the smallest bilinear formula computing $f$. It is not known whether $L^b(f) = O(L(f))$ holds for every bilinear form $f$.
        
\end{defn}

\begin{defn}[Totally regular matrix]
    
Let $\mathbb{F}$ be a field. We say that a matrix $M\in \mathbb{F}^{n\times n}$ is \textit{totally regular} if every square minor of $M$ is non-singular. We say that a family $\{A_n\}_{n \geq 0}$ of totally regular matrices (where $A_n \in \mathbb{F}^{n \times n}$) is explicit if there is an algorithm that takes $n$ as input in unary and outputs the entries of $A_n$ in $\poly(n)$ time.
        
\end{defn}

\begin{defn}[Superconcentrators]\label{supercon}
An \textit{$n$-superconcentrator} is a directed acyclic graph $G = (V, E)$ with $n$ inputs $I_1, \ldots, I_n$ and $n$ outputs $O_1, \ldots, O_n$ such that $\forall k \in [n]$, for all subsets $I'\subseteq I$ such that $|I'| = k$ and all subsets $O'\subseteq O$ such that $|O'| = k$, there exist $k$ vertex disjoint paths from $I'$ to $O'$. We say $G$ has depth $d$ if the longest path in $G$ has length $d$. We define the size of such a graph to be the number of edges in it. We say that a family of $n$-superconcentrators (one for each $n$, the $n$-th member of the family must have $n$ inputs and outputs) is {\em explicit} if there is an algorithm that $\forall n$, outputs the $n$-th superconcentrator in the family in $\poly(n)$ time.

\end{defn}
    
\subsection{Preliminary observations about planar circuits}
First we note that in a planar circuit, we can assume without loss of generality that the fan-in and fan-out of every gate is at most two:

\begin{lemma} \label{fan-in}
Let $\Phi$ be a planar circuit of size $s$ (i.e., $\Phi$ has $s$ gates) computing $f\in\mathbb{F}[x_1, \cdots, x_n]$. Then there exists another planar circuit $\Phi'$ computing $f$
such that $\size(\Phi') \leq 7s$ and every gate in $\Phi'$ has fan-in and fan-out $\leq2$.
\end{lemma}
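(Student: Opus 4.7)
The plan is to fix a planar embedding of $\Phi$ and locally replace each gate of fan-out $>2$ by a ``copy chain'' drawn inside a tiny disk around its original position. Since the definition of arithmetic circuits in the preliminaries already forces in-degree at most $2$ on every internal gate, the only substantive task is fan-out reduction; fan-in is already fine. (If one instead starts from the unbounded-fan-in convention mentioned in the introduction, an exactly analogous binary-tree gadget on the in-edges, using the cyclic order of in-edges at $v$, handles fan-in with the same planarity argument.)

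The gadget. Consider a gate $v$ with $k := $ fan-out $> 2$, and let $e_1,\ldots,e_k$ denote the outgoing edges of $v$ in the cyclic order given by the embedding. I would introduce $k-1$ fresh addition gates $v_1,\ldots,v_{k-1}$ together with $k-1$ fresh input gates $z_1,\ldots,z_{k-1}$, each labelled by the field constant $0$. Writing $v_0 := v$, set $v_i := v_{i-1} + z_i$. Route the edge $e_i$ out of $v_{i-1}$ for $i = 1,\ldots,k-1$ and route $e_k$ out of $v_{k-1}$. Every $v_i$ then has fan-in exactly $2$ and fan-out at most $2$, and because we only add $0$ at each stage, every $v_i$ computes the same polynomial as $v$; hence the polynomial carried by each original out-edge is unchanged.

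Planarity. The gadget can be drawn inside a small disk $D$ around the old position of $v$: place $v_0,v_1,\ldots,v_{k-1}$ along a short arc in $D$, with each $z_i$ tucked into a tiny pocket beside $v_i$, and orient the arc so that the outgoing wires leave $D$ in the original cyclic order $e_1,\ldots,e_k$. Nothing outside $D$ is touched, so the rest of the embedding is unchanged. Performing this replacement at every high-fan-out gate (with the disks chosen pairwise disjoint, which is possible since the embedding is fixed) yields a planar circuit $\Phi'$ with the desired degree bounds.

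Counting and main obstacle. Since every gate of the original has fan-in at most $2$, one has $\sum_v \text{fan-out}(v) \leq 2s$, and each gadget contributes $2(\text{fan-out}(v) - 1)$ new gates, so the total size of $\Phi'$ is at most $s + 4s = 5s \leq 7s$. The one nontrivial point in the argument is the planarity verification after replacement; this is handled cleanly by localising each gadget to its own disk and matching the cyclic order of external edges at $v$, which is exactly where the planar embedding of $\Phi$ is being used. Correctness is immediate from $v_i = v + 0 = v$, and the degree bounds are by construction.
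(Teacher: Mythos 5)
Your gadget, its correctness ($v_i=v+0=v$) and the gate count are fine, and your overall plan (local replacement inside small disjoint disks) is the same as the paper's, which uses balanced binary trees on the in- and out-wires and bounds the new gates by twice the number of wires, $\le 2(3s-6)$, via Euler's formula. The gap is exactly at the step you flag as the one nontrivial point: the claim that the chain can always be drawn inside $D$ so that the out-wires leave in their original cyclic order. Your argument looks only at the outgoing wires and ignores where the \emph{incoming} wires sit in the rotation at $v$. For a tree drawn in a disk with prescribed attachment points on the boundary circle, every tree edge must split the attachment points into two contiguous arcs (otherwise two vertex-disjoint subtrees would have to join interleaved boundary pairs, which is impossible in the plane). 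In your chain, the cut at the edge $v_0v_1$ places the out-edges $e_2,\dots,e_k$ on one side, so these $k-1$ out-edges must form a contiguous arc containing no in-edge. If the rotation at $v$ is, say, $(e_1,u_1,e_2,e_3,u_2,e_4)$ with in-edges $u_1,u_2$, then no three of the four out-edges form such an arc, so \emph{no} ordering of the out-edges along a single chain rooted at $v$ avoids a crossing with an incoming wire. Such interleaved rotations do occur and can even be forced in every embedding (e.g.\ a wheel-like planar circuit in which $v$'s two in-neighbours and four out-neighbours lie on a surrounding undirected cycle in this cyclic order), so you also cannot silently appeal to choosing a nicer embedding.

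The repair is small but must use the rotation: since fan-in is at most $2$, the in-edges cut the cyclic sequence of out-edges at $v$ into at most two contiguous runs; hang one copy chain (or subtree) off $v$ for each run, with $v$ feeding the two chain heads, each chain serving its run in cyclic order. Then every edge cut of the gadget is contiguous, the disk drawing exists, and the count is unchanged. This is also the choice implicitly needed in the paper's tree replacement (the paper is equally terse about the embedding, but a tree, unlike a single chain, can always be shaped to respect the rotation when fan-in is $\le 2$). Two secondary points: the paper needs this lemma for \emph{unbounded fan-in} planar circuits (that is the setting of Theorem \ref{thm:planar-ckt1} and of Lemma \ref{lemma:bil}), so you cannot dismiss fan-in reduction by citing the in-degree-$2$ definition, and your estimate $\sum_v (\text{fan-out of } v)\le 2s$ also uses fan-in $\le 2$; the paper instead bounds the number of wires by $3s-6$ by planarity, which is fan-in independent. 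Moreover, when fan-in exceeds $2$ your parenthetical ``same planarity argument'' for the in-side tree meets the same contiguity constraint, so the interleaving of in- and out-edges has to be addressed there as well rather than assumed away.
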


\begin{proof}
    Let $e$ be the number of wires/edges in $\Phi$. Since $\Phi$ is planar, $e\leq3s-6$ \footnote{This is well known and follows from Euler's formula. See \cite{West00} for a proof.}. For every gate $v$ in $\Phi$ with fan-in $r$ (resp., fan out) larger than $2$, replace the incoming (resp., outgoing) wires with a balanced binary tree (with $r$ leaves) all of whose gates have the same label (either "$+$" or "$\times$") as $v$. Since a balanced binary tree with $r$ leaves has $r-1$ internal nodes, the number of new gates added is at most $\sum_{v \in \Phi} (\text{fan-in}(v)-1+\text{fan-out}(v)-1)\leq 2e<6s$. and so the size of the new circuit $\Phi' \leq 7s$.
\end{proof}

Lipton and Tarjan \cite{LT77} observe that it is possible to {\em planarize} boolean circuits while incurring at most a quadratic blow-up in size. We note here that it is also possible to planarize arithmetic circuits in a similar way. We planarize a circuit $\Phi$ by fixing an embedding of the graph of $\Phi$ and introducing a gadget at each edge crossing:
\begin{lemma}
\label{lem:planarization}

Let $\Phi$ be a fan-in $2$ arithmetic circuit of size $s$ computing a polynomial $f\in\mathbb{F}[x_1, \cdots, x_n]$. Then there exists a fan-in $2$ planar arithmetic circuit $\Phi'$ of size $O(s^2)$ computing $f$.
\end{lemma}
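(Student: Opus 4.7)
The plan is to adapt the standard planarization technique from boolean circuits to the arithmetic setting: fix a drawing of $\Phi$ in the plane (with some crossings allowed), then locally replace each edge crossing with a small planar \emph{arithmetic crossover gadget} that carries the two values past each other unchanged.

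First I would fix an arbitrary drawing of the underlying graph of $\Phi$ in the plane so that every edge is a simple curve, no three edges meet at an interior point, and every pair of edges crosses at most once. Since $\Phi$ has fan-in $2$ and $s$ gates, the number of edges is at most $2s$; hence the number of pairwise edge crossings in such a drawing is at most $\binom{2s}{2} = O(s^2)$.

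Next I would design a constant-size planar crossover gadget: a small subcircuit with two input ports (carrying values $p$ and $q$) and two output ports, such that $p$ is delivered unchanged to one output port and $q$ to the other, and no internal wires of the gadget cross. Over any field this can be implemented with three addition/subtraction gates and edge labels in $\{1,-1\}$: fan out the incoming wires carrying $p$ and $q$ (this is allowed, as fan-out $2$ wires remain planar), place a central addition gate computing $S = p + q$, and use the routed copies to compute $S - q = p$ on the ``east'' output and $S - p = q$ on the ``north'' output. The whole gadget can be laid out inside a small neighborhood of the crossing point using only grid-aligned edges plus a single pair of non-crossing parallel diagonals, and a direct case check shows no internal edges cross. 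Fan-in and fan-out remain at most $2$, consistent with Lemma~\ref{fan-in}.

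Finally I would apply this substitution independently at each of the $O(s^2)$ crossings in the chosen drawing of $\Phi$. Each gadget acts as the identity on the two values passing through it, so the resulting circuit $\Phi'$ is planar and computes the same polynomial $f$. Since each crossing contributes a constant number of new gates, $\size(\Phi') \le s + O(s^2) = O(s^2)$, as required.

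The hard part will be the explicit construction and planarity verification of the crossover gadget; once that is in place, the rest of the proof is a routine counting argument using $|E(\Phi)| \le 2s$ together with the bound on crossings in a general-position drawing. A minor technicality is ensuring substitutions at distinct crossings do not interfere, which follows because each gadget can be placed inside a disjoint small disk around its own crossing point.
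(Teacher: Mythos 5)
Your proposal is correct and follows essentially the same route as the paper: fix a drawing of $\Phi$, bound the number of crossings by $\binom{2s}{2}=O(s^2)$ using the fan-in~$2$ bound on the number of wires, and replace each crossing by a constant-size planar crossover gadget built from sum gates with $\pm 1$ edge labels (the paper presents its gadget in a figure rather than in text, but it is the same idea). The only detail worth making explicit is that splitting the incoming value inside the gadget requires a port gate (so branching happens at a gate, keeping fan-in/fan-out at most~$2$), which still costs only $O(1)$ gates per crossing and does not affect the $O(s^2)$ bound.
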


\begin{proof}
Fix an embedding $E$ of $\Phi$ in the plane.
At every edge crossing in $E$, consider a small neighbourhood around the crossing edges that does not contain any other edges or vertices. Introduce a {\em crossover gadget} as shown in \autoref{fig:gadget} inside the neighbourhood (see \autoref{fig:placement}). The number of edge crossings in $E$ is at most $\binom{\text{number of wires in }\Phi}{2}$. Since the circuit has fan-in $2$ the number of wires in $\Phi \leq 2s$  and hence $\size(\Phi') = O(s^2)$.
\end{proof}

\begin{figure}[H]
    \centering 

    \begin{subfigure}{0.47\textwidth} 
        \centering
        \includegraphics[width=\linewidth]{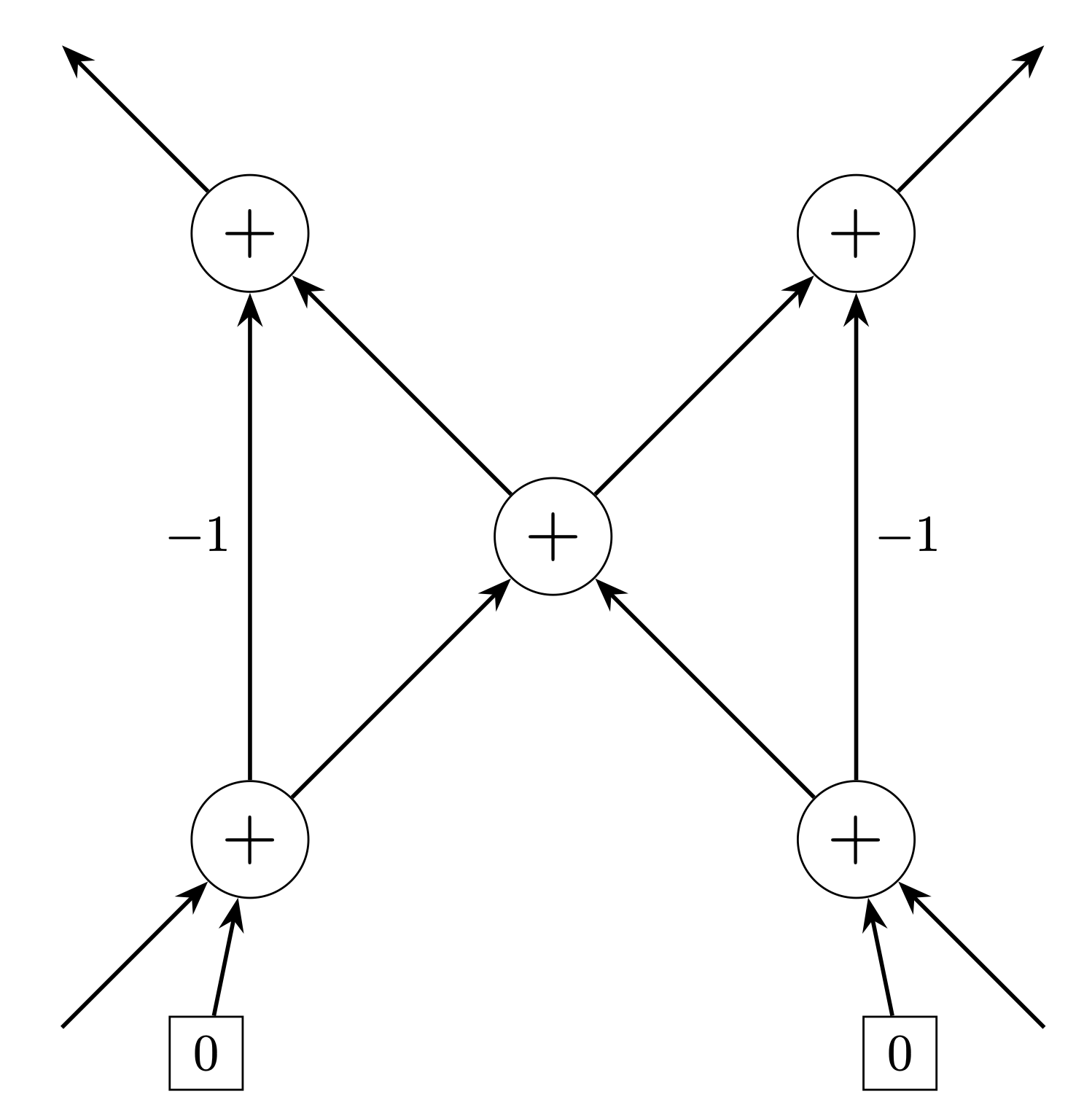}
        \caption{Crossover gadget} 
        \label{fig:gadget} 
    \end{subfigure}
    \hspace{0.05\textwidth} 
    \begin{subfigure}{0.47\textwidth} 
        \centering
        \includegraphics[width=\linewidth]{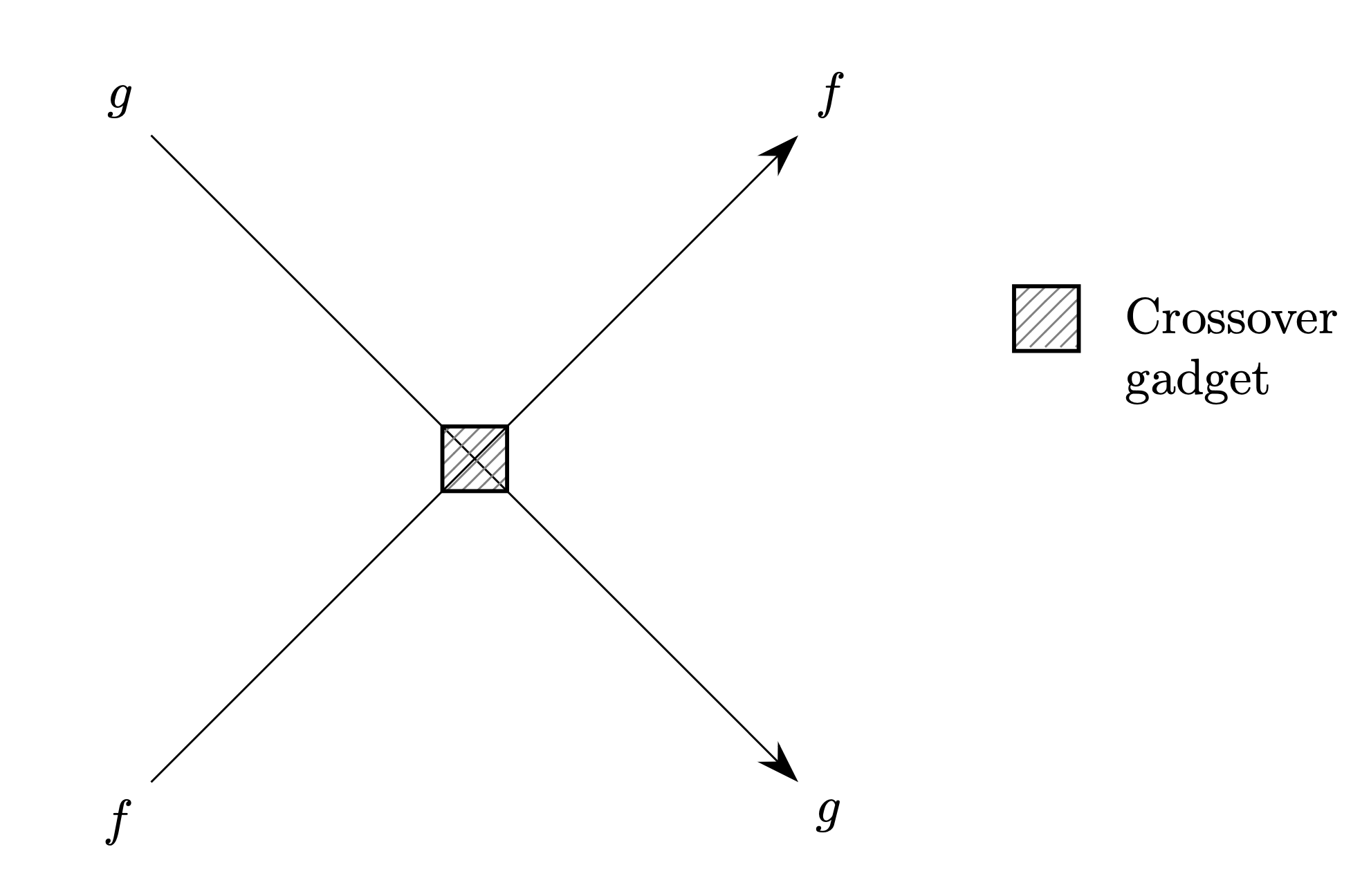} 
        \caption{Placement of the gadget at a crossing} 
        \label{fig:placement} 
    \end{subfigure}

    \label{fig:crossover-gadget}
\end{figure}

In the following subsection, we list some partition lemmas for graphs that will be crucial to our lower bound arguments:

\subsection{Some partition lemmas for planar graphs}\label{partitionlemmas}

We begin with the well-known {\em planar separator theorem} of Lipton and Tarjan \cite{LT77} which is used to prove quadratic lower bounds for read-once planar circuits:

\begin{theorem}[Lipton and Tarjan \cite{LT77}]
\label{partitionthm1}
Let $G=(V, E)$ be a planar graph and $w:V\to [0, 1]$ be a weight function such that $\sum_{v\in V}w(v) \leq 1$. Then there exists a partition $(A, B, C)$ of $V$ such that the sum of the weights of the vertices in $A$ as well as the sum of the weights of the vertices in $B$ is at most $2/3$, $|C| \leq 2\sqrt{2}\sqrt{|V|}$, and all paths from $A$ to $B$ contain a vertex from $C$. \hfill $\square$
\end{theorem}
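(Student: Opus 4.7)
The plan is to follow the classical Lipton--Tarjan BFS-based approach. Write $n = |V|$. As preliminary reductions, if $G$ is disconnected then either some component already has weight exceeding $2/3$ and I would recurse into that component, or I greedily distribute the components between $A$ and $B$ with $C = \emptyset$. I would also triangulate $G$ by adding zero-weight dummy edges so that every face is a triangle; this preserves planarity and weights and will be convenient for the cycle-selection step later.

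Next I would run BFS from an arbitrary root $v_0$, producing levels $L_0, L_1, \ldots, L_r$ with $L_i$ the set of vertices at BFS-distance $i$ from $v_0$. Let $t$ be the smallest level at which the cumulative weight $\sum_{i \le t} w(L_i)$ first exceeds $1/2$; deleting $L_t$ leaves both remaining sides with weight at most $1/2 \le 2/3$. If $|L_t| \le 2\sqrt{2n}$ we are done, taking $A = \bigcup_{i<t}L_i$, $B = \bigcup_{i>t}L_i$, and $C = L_t$. Otherwise, I would choose the largest $t_1 \le t$ with $|L_{t_1}| \le \sqrt{n}$ and the smallest $t_2 > t$ with $|L_{t_2}| \le \sqrt{n}$; since every intermediate level has more than $\sqrt{n}$ vertices, a pigeonhole on $\sum_i |L_i| \le n$ forces $t_2 - t_1 < \sqrt{n}$. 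I would then contract all vertices in levels $\le t_1$ to a single super-root $r^{\star}$ and delete all vertices in levels $\ge t_2$; the resulting planar graph $G'$ inherits from BFS a spanning tree rooted at $r^{\star}$ of depth at most $\sqrt{n}$, so every non-tree edge closes a fundamental cycle of length at most $2\sqrt{n} + 1$, which by planarity is a Jordan curve separating an interior from an exterior.

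The main obstacle — and the genuinely delicate step — is to exhibit one non-tree edge whose fundamental cycle yields a balanced weight split, so that neither the interior nor the exterior carries weight exceeding $2/3$ of $w(V)$. I would handle this by a face-walking argument enabled by triangulation: start from any non-tree edge, and if its fundamental cycle is unbalanced, cross through the unique adjacent triangular face on the heavy side and swap to a neighbouring non-tree edge; this moves the cycle one triangle deeper into the heavier region, and the weights on each side change only by the mass of a single vertex, so by continuity an intermediate choice must be balanced. The final separator is $C = L_{t_1} \cup L_{t_2} \cup V(\text{cycle})$, and after carefully tuning the thresholds $\sqrt{n}$ used for $|L_{t_1}|, |L_{t_2}|$ (which enter $|C|$ additively with the cycle length), one obtains $|C| \le 2\sqrt{2n}$. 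Combined with the weight bounds already enforced for the $t_1/t_2$ split and the cycle's balance, this gives the desired partition $(A, B, C)$.
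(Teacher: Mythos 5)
The paper itself offers no proof of this statement---it is quoted from Lipton and Tarjan with a citation and a closing $\square$---so the only benchmark is the original BFS-based argument, which is exactly the route you sketch: BFS levels, a middle level $t$ where the cumulative weight crosses $1/2$, two thin levels $t_1\le t<t_2$, contraction of the low levels, and a fundamental cycle of a non-tree edge in the resulting shallow spanning tree. Your outline is faithful to that proof, but two steps fall short of the theorem as stated. First, the constant: with the fixed thresholds $|L_{t_1}|,|L_{t_2}|\le\sqrt{n}$ and $t_2-t_1\le\sqrt{n}$, your separator has size roughly $|L_{t_1}|+|L_{t_2}|+2(t_2-t_1)+1\approx 4\sqrt{n}$, which misses the claimed $2\sqrt{2}\sqrt{n}\approx 2.83\sqrt{n}$. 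Lipton and Tarjan get $2\sqrt{2}$ by selecting $t_1$ and $t_2$ against a distance-weighted cost (largest $t_1\le t$ with $|L_{t_1}|+2(t-t_1)\le 2\sqrt{k}$ and smallest $t_2>t$ with $|L_{t_2}|+2(t_2-t-1)\le 2\sqrt{n-k}$, where $k$ counts vertices in levels $\le t$), and then using $\sqrt{k}+\sqrt{n-k}\le\sqrt{2n}$; the ``careful tuning'' you defer to is precisely this argument and needs to be carried out, not asserted.

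Second, the balanced-cycle step, which you correctly identify as the crux, is not right as phrased. Since weights are arbitrary in $[0,1]$, a single vertex may carry weight close to $1$, so ``the weights on each side change only by the mass of a single vertex, hence by continuity some intermediate cycle is balanced'' does not follow: no intermediate cycle need be balanced in that sense. The correct mechanism is a stopping/monotonicity argument: when the cycle is pushed across a triangle into the heavy (interior) side, any displaced vertex joins the cycle itself, so the exterior weight never increases; starting from an interior of weight $>2/3$ (hence exterior $<1/3$) and stopping at the first cycle whose interior weight is $\le 2/3$ leaves both sides at most $2/3$. In addition, one must verify that each triangle move keeps the curve a fundamental cycle of length at most $2(t_2-t_1)+1$; this is the case analysis (the two new triangle edges being tree or non-tree edges) that constitutes the genuinely delicate core of the Lipton--Tarjan proof and is absent from your sketch. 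For the applications in this paper only an $O(\sqrt{n})$ separator is ever used, so your argument suffices in spirit, but as a proof of the stated bound these two gaps must be filled.
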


 The following theorem by Turan \cite{Turan95} is a generalization to planar graphs of a separator lemma for trees proved by Babai et al \cite{BPRS90}, which is itself a generalization to trees of a separator lemma for sequences by Alon and Maasst \cite{AM88}\ . It is useful for obtaining lower bounds for planar circuits:

\begin{theorem}[Turan \cite{Turan95}]
\label{partitionthm2}
Let $Z = \{z_1, \ldots, z_s\}$ and $Z' = \{z_1', \cdots, z_s'\}$ be disjoint sets and let $G = (V, E)$ be a planar graph, some vertices of which are labelled by elements from $Z \cup Z'$ so that each label occurs at most $k$ times. Then there are subsets $Z_0 \subseteq Z$, $Z_0'\subseteq Z'$ and $V^{*}\subseteq V$ such that the following conditions hold:
\begin{enumerate}
    \item $|Z_0| = |Z_0'| \geq s/9^k$
    \item $|V^{*}| \leq 450k\sqrt{|V|}$
    \item After deleting $V^{*}$ none of the remaining components contain labels from both $Z_0$ and $Z_0'$.  
\end{enumerate}
\end{theorem}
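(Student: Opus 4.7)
The plan is to prove Theorem \ref{partitionthm2} by iterating the planar separator theorem (Theorem \ref{partitionthm1}), following the strategy used in the sequence case by Alon and Maass \cite{AM88} and the tree case by Babai et al.\ \cite{BPRS90}. The central idea is to repeatedly apply Theorem \ref{partitionthm1} so that at each step a constant fraction of the remaining labels gets ``committed'' to one of two growing sides of the partition.

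Specifically, I would maintain a set of removed vertices $V^*$ (initially empty) together with an assignment of each connected component of $G \setminus V^*$ to one of two sides $\mathcal{A}$ or $\mathcal{B}$ (initially $\mathcal{A}$ holds all of $G$). Call a label $z \in Z$ \emph{contested} if it has occurrences in components on both sides; define contested $Z'$-labels symmetrically. At each round, apply Theorem \ref{partitionthm1} to the union of components currently containing contested labels (this union is planar since a disjoint union of planar graphs is planar) with a weight function that puts total mass $1$ on the occurrences of currently contested $Z$-labels (distributed evenly, using $k$ occurrences per label); alternate so that odd rounds balance $Z$ and even rounds balance $Z'$. Theorem \ref{partitionthm1} produces $(A,B,C)$ with $|C| \leq 2\sqrt{2}\sqrt{|V|}$ and $w(A), w(B) \leq 2/3$; add $C$ to $V^*$, keep $A$ on its current side, and move $B$ to the opposite side.

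Each round removes at most $2\sqrt{2}\sqrt{|V|}$ vertices into $V^*$, so $\Theta(k)$ rounds stay well within the budget $|V^*| \leq 450k\sqrt{|V|}$. On the label side, the balance $w(A), w(B) \leq 2/3$ says that in each round at least a $1/3$ fraction of the weight (hence, by a standard double counting against the per-label multiplicity bound $k$, at least a $1/3$ fraction of the contested labels) is ``decided'' in the sense that at least one previously misplaced occurrence is either absorbed into $V^*$ or pushed to the correct side. Since a label has at most $k$ occurrences to kill off, and since $Z$ and $Z'$ alternate across $2k$ rounds, iterating gives $|Z_0|, |Z_0'| \geq s / 3^{2k} = s/9^k$ labels whose surviving occurrences all lie on a single side, which is exactly the conclusion of the theorem.

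The main obstacle I anticipate is making the bookkeeping of ``contested vs.\ decided'' labels tight enough to give precisely the $9^k$ factor instead of something exponentially worse. One must argue that the $\leq 2/3$ balance on each side, when combined with the per-label multiplicity bound $k$, really does force a constant fraction of \emph{label-occurrence pairs} (not just weight) to be resolved per round, and that these resolutions align coherently with the global $\mathcal{A}$/$\mathcal{B}$ designation across the $2k$ rounds. A secondary subtlety is the parallel cutting of many components in one round: this is handled cleanly by feeding the disjoint union of all currently contested components into Theorem \ref{partitionthm1} as a single planar graph, so that the $|C| \leq 2\sqrt{2}\sqrt{|V|}$ bound applies globally rather than once per component.
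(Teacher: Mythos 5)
You are attempting a result the paper itself does not prove---it is quoted from Turan \cite{Turan95}---so your sketch has to stand on its own. Your high-level plan (iterating the weighted separator of Theorem \ref{partitionthm1}, alternating attention between $Z$ and $Z'$, and noting that $O(k)$ applications keep $|V^*|$ well within $450k\sqrt{|V|}$) is the right family of ideas, and the separator-budget arithmetic is fine. The genuine gap is exactly the step you flag as your ``main obstacle'': the per-round progress claim. The $2/3$ weight balance does \emph{not} force any constant fraction of contested labels to become decided: it is perfectly consistent with the output of Theorem \ref{partitionthm1} that every contested label has occurrences in both $A$ and $B$ (for instance, each contested label has two occurrences, one landing in $A$ and one in $B$, giving $w(A)=w(B)=1/2$), in which case after your round every label is exactly as contested as before. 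Worse, because your scheme keeps both sides alive and moves whole components from $\mathcal{A}$ to $\mathcal{B}$, occurrences that were previously ``correctly placed'' can become misplaced again, so no monotone potential bounds the number of rounds by $2k$; the only resource available is the multiplicity bound $k$, and your process never consumes it. (Also, as literally initialized, all components sit in $\mathcal{A}$, so no label is contested and the iteration never starts---fixable, but symptomatic of the missing invariant.)

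The arguments of Alon--Maass \cite{AM88}, Babai et al.\ \cite{BPRS90} and Turan \cite{Turan95} avoid this by recursing into a \emph{single} part rather than keeping both sides active. After one separator application with weights on the currently active labels of one family, either a constant fraction of those labels have all surviving occurrences confined to one part---these are banked as candidates for $Z_0$ (resp.\ $Z_0'$), with the other part reserved for the other family---or a constant fraction are split between $A$ and $B$, in which case one recurses inside one part only, where every split label now has at most $k-1$ occurrences. That strict decrease of multiplicity inside the retained region is what bounds the recursion depth by $O(k)$, yields a loss of a factor $3$ per level per family (hence $9^k$ overall), and makes the final side assignment coherent, since labels banked when a part is set aside are never revisited. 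Your write-up needs this one-sided recursion, or an explicit potential function playing the same role, to close the argument; without it the bound $|Z_0|=|Z_0'|\geq s/9^k$ is not established.
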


For lower bounds on planar arithmetic circuits computing multiple outputs, we need the following partition lemma from \cite{Savaga84}. It partitions a planar graph into multiple parts each having their own small separators. This is achieved by applying a version of the Lipton-Tarjan planar separator theorem multiple times.

\begin{theorem}[Savage \cite{Savaga84}]
\label{partitionthm3}

Let $G = (V, E)$ be a planar graph, $V' \subseteq V$ be a subset of its vertices and let $1 \leq p \leq |V'|$. Then there exists a partition $V_1, \cdots, V_p$ of $V$ such that the following conditions hold:

\begin{enumerate}
    \item For all $i \in [p]$, $\dfrac{|V'|}{4p} \leq |V' \cap V_i| \leq \dfrac{4|V'|}{p}$
    \item For all $i \in [p]$ there exists a separator $S_i$ such that $|S_i| \leq 60\sqrt{|V|}$ and no edge joins $V_i$ and $V \setminus (V_i \cup S_i)$.  
\end{enumerate}
    
\end{theorem}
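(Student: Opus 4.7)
The plan is to construct the partition $V_1, \ldots, V_p$ via a recursive application of Theorem \ref{partitionthm1}. First, introduce the weight function $w\colon V \to [0,1]$ with $w(v) = 1/|V'|$ for $v \in V'$ and $w(v) = 0$ otherwise, so that $w(V) = 1$. Then grow a binary tree of induced subgraphs of $G$: the root is $G$ itself, and at every node whose subgraph $H = G[U]$ still carries weight $w(U) > 4/p$, apply Theorem \ref{partitionthm1} to $H$ to obtain a partition $U = A \cup B \cup C$ with $|C| \le 2\sqrt{2}\sqrt{|U|}$, with $w(A), w(B) \le \tfrac{2}{3} w(U)$, and with no $A$--$B$ edges. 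Declare $G[A]$ and $G[B]$ to be the children of this node, and stop the recursion at a leaf once $w(U) \le 4/p$. The leaves then form a partition of $V$, which we clean up by merging any leaf of weight below $1/(4p)$ with a sibling or adjacent leaf in the tree.

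For each leaf $V_i$, take $S_i$ to be the union of all separators $C$ encountered on the root-to-leaf path of the decomposition. Because each separator on the path successively cuts the current subgraph off from the rest of $G$, removing $S_i$ disconnects $V_i$ from $V \setminus (V_i \cup S_i)$. The upper bound $|V'\cap V_i| \le 4|V'|/p$ is immediate from the stopping rule, and the lower bound $|V'\cap V_i| \ge |V'|/(4p)$ follows from the merging step together with the observation that when a node of weight $w(U) > 4/p$ is split, the larger child still retains at least $(w(U)-w(C))/2 \ge \tfrac{1}{3}w(U)$ of the weight, the loss $w(C) \le 2\sqrt{2}\sqrt{|V|}/|V'|$ in the separator being negligible since $p \le |V'|$.

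The main obstacle is verifying the size bound $|S_i| \le 60\sqrt{|V|}$. Naively summing $O(\log p)$ separators of size at most $2\sqrt{2}\sqrt{|V|}$ along the path gives $O(\sqrt{|V|}\log p)$, which is too weak. What is needed is that the subgraph sizes $|U|$ along any root-to-leaf path shrink geometrically, so that the separator sizes telescope into a convergent geometric series bounded by $2\sqrt{2}\sqrt{|V|}\bigl/(1 - \sqrt{2/3}) \le 60\sqrt{|V|}$. This geometric decay in vertex counts, which the purely weighted separator theorem does not by itself guarantee, can be enforced either by invoking Theorem \ref{partitionthm1} with a composite weight function that simultaneously balances $w$ and the uniform vertex weight (paying a small constant factor in the balance), or by a two-stage procedure that first uses the unweighted separator theorem to break $G$ into pieces of vertex count $O(|V|/p)$ and then groups these pieces to balance the $V'$-weight. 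Once this geometric shrinkage is in place, the constant $60$ falls out of the resulting geometric sum.
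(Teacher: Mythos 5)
You have correctly identified the crux yourself: the bound $|S_i| \leq 60\sqrt{|V|}$ needs the subgraphs along each root-to-leaf path to shrink geometrically in \emph{vertex count}, and your recursion, driven only by the $V'$-weight, does not provide this. Neither of your two proposed repairs closes the gap. Applying Theorem \ref{partitionthm1} with a composite weight (a mixture of the $V'$-weight and the uniform vertex weight) only guarantees that the \emph{mixture} is split $2/3$--$1/3$; one side may still receive almost all of the vertices while the other receives almost all of the $V'$-weight, so no geometric decay of $|U|$ follows. The two-stage variant fails for a different reason: after breaking $G$ into pieces of size $O(|V|/p)$, balancing the $V'$-weight may force you to merge many pieces (e.g.\ when $V'$ is concentrated in a few of them) into a single part $V_i$, and then $S_i$ must contain the union of the boundaries of all merged pieces, which can far exceed $60\sqrt{|V|}$. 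The lower bound $|V'\cap V_i|\geq |V'|/(4p)$ via an unspecified ``merging step'' is also not established, and your procedure does not obviously produce exactly $p$ parts as the statement requires.

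The paper does not reprove Savage's theorem (it is cited), but its proof of the forest analogue (Lemma \ref{partitionlemma5}) exhibits exactly the mechanism your argument is missing, and the paper notes that Savage's planar proof differs only in the separator lemma used in place of Lemma \ref{coloredp}. The two missing ideas are: (i) a separator lemma that balances \emph{both} weight functions simultaneously --- $|A|,|B|\leq \tfrac23|V|$ \emph{and} $|V'\cap A|,|V'\cap B|\leq \tfrac23|V'|$ --- with separator size $O(\sqrt{|V|})$ in the planar case (constant size in the forest case); and (ii) the $W_\alpha/U_\alpha$ bookkeeping in the partition tree: each node's set $V_\alpha$ is split into an ``active'' part $W_\alpha$, to which the two-weight separator is applied, and an ``inactive'' part $U_\alpha$ consisting of separator vertices inherited from ancestors, which is merely redistributed between the children so as to balance the $V'$-weight (never re-separated). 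This gives $|W_\alpha|\leq (2/3)^{|\alpha|}|V|$, so the separators along any path form a geometric series summing to $O(\sqrt{|V|})$, which is where the constant $60$ comes from. Finally, instead of a weight-threshold stopping rule, one performs exactly $p-1$ splits, always splitting the leaf of maximum $V'$-weight; this is what yields the two-sided bound $|V'|/(4p)\leq |V'\cap V_i|\leq 4|V'|/p$ for all $p$ leaves simultaneously. Without (i) and (ii) your construction does not yield the stated separator bound, so as written the proposal has a genuine gap.
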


If we restrict to forests instead of the more general planar graphs, we can get a statement analogous to Theorem \ref{partitionthm3} with much smaller (logarithmic) separators and we state this as our next lemma. We defer the proof of the following Lemma to Section \ref{sec:multi-output}.

\begin{lemma}\label{partitionlemma4}
Let $F = (V, E)$ be a forest, $V' \subseteq V$ be a subset of its vertices and let $1 \leq p \leq |V'|$. Then there exists a partition $(V_1, \ldots, V_p)$ of $V$ such that the following conditions hold:

\begin{enumerate}
    \label{cond1}\item For all $i\in [p]$, $\dfrac{|V'|}{3p}\leq |V'\cap V_i|\leq \dfrac{3|V'|}{p}$
    \label{cond2}\item For all $i\in[p]$ there exists a set $S_i$ such that $|S_i| = O(\log (|V'|))$ and no edge joins $V_i$ and $V \setminus (V_i \cup S_i)$.
\end{enumerate}
\end{lemma}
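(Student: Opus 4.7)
The plan is to prove the lemma by recursive bisection of the forest using tree centroids. Weight each vertex $v\in V$ by $1$ if $v\in V'$ and $0$ otherwise, so that the total weight of $F$ equals $|V'|$ and the weight of any subset $V_i$ coincides with $|V'\cap V_i|$. The key structural tool is the classical centroid lemma: any tree with vertex weights summing to $W$ admits a single vertex whose removal leaves connected components of weight at most $W/2$. Without loss of generality I reduce to the single-tree case, either by prepending a weight-$0$ dummy root joined to every tree root of $F$ (which adds at most one vertex to each separator and is absorbed into the $O(\log|V'|)$ bound), or by splitting the budget $p$ proportionally across the tree components using a bin-packing allocation and then running the tree algorithm on each piece.

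The algorithm is a recursive procedure $\mathrm{Partition}(T,q)$ on a subtree $T$ of weight $W_T$ with target $q\geq 1$. If $q=1$, return $V(T)$ with empty separator. If $q\geq 2$, find a centroid $c$ of $T$; removing $c$ yields components $C_1,\ldots,C_m$ of weight at most $W_T/2$. Greedily partition $\{C_1,\ldots,C_m\}$ (placing $c$ itself in either side) into two subforests $L$ and $R$ whose weights $W_L,W_R$ are as balanced as possible; a standard analysis gives $W_L,W_R\in[W_T/4,\,3W_T/4]$. Then adaptively choose
\[
q_L := \max\bigl\{1,\,\mathrm{round}(q\cdot W_L/W_T)\bigr\},\qquad q_R := q - q_L,
\]
and recurse on $(L,q_L)$ and $(R,q_R)$. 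Every final piece descending from the $R$ side acquires $c$ in its separator (and symmetrically for $L$), so the separator of each leaf piece is precisely the set of centroids chosen along its root-to-leaf path in the binary recursion tree.

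Condition~(2) is immediate from this construction: the recursion halves $q$ at each step, so its depth is $O(\log p)\leq O(\log|V'|)$, yielding $|S_i|=O(\log|V'|)$. For condition~(1), the critical invariant is the ratio $W_T/q$ along the recursion path, which starts at $|V'|/p$. The adaptive choice of $q_L$ above keeps $W_L/q_L$ within a factor of $(1+O(1/q))$ of $W_T/q$ at every internal level (with a constant factor absorbed at the bottom levels where $q\in\{2,3\}$). Telescoping these multiplicative errors along the $O(\log p)$-depth root-to-leaf path gives a final deviation bounded by a constant, and the slack factor of $3$ on each side of the window $[|V'|/(3p),\,3|V'|/p]$ is exactly what accommodates this cumulative drift.

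The principal obstacle is controlling the weight drift under recursive bisection. A single centroid split can in the worst case produce a $1{:}3$ imbalance between $W_L$ and $W_R$ (consistent with the centroid bound $W_L,W_R\leq W_T/2$), so a naive non-adaptive allocation such as $q_L=\lfloor q/2\rfloor$ could compound this imbalance to a deviation of $p^{\Omega(1)}$ over $\log p$ levels and violate the lemma. The adaptive choice of $q_L$ from the observed weights $W_L,W_R$, together with separate care for the case in which a single component nearly saturates the centroid bound (which is handled by recursing into that component before attempting to split), is therefore essential. The bookkeeping required to verify that the cumulative multiplicative drift telescopes to $O(1)$, taking advantage of the geometric decrease of $q$ along the recursion path, is the technical heart of the proof.
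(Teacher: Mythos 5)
Your route is genuinely different from the paper's (centroid-based recursive bisection with a budget $q$ allocated proportionally to observed weights, rather than the paper's greedy ``always split the currently heaviest piece'' scheme), and several of the pieces you do carry out are fine: the $[W_T/4,\,3W_T/4]$ balance from greedily grouping the centroid components, the geometric decrease of $q$, hence recursion depth $O(\log p)$ and $|S_i|=O(\log |V'|)$. But there is a genuine gap exactly where you yourself locate the ``technical heart'': condition (1) is asserted, not proved. The claim that the per-level distortion of the ratio $W_T/q$ is $1+O(1/q)$ and that telescoping leaves a cumulative factor absorbed by the slack $3$ does not follow from anything written. With $q_L=\max\{1,\mathrm{round}(qW_L/W_T)\}$, a single level at budget $q$ can distort the ratio by roughly $1+\tfrac{1}{2m}$ where $m$ can be as small as $\lceil qW_L/W_T\rceil-1$; in particular the last few levels (budgets $q\in\{2,\dots,5\}$) can each contribute a factor close to $1.5$, and the higher levels contribute a further $\exp\bigl(\Theta(\sum_k 1/q_k)\bigr)$, an absolute constant that is not shown to be small. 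Nothing in the sketch rules out the product exceeding $3$, so the stated window $[\,|V'|/(3p),\,3|V'|/p\,]$ is not established; to rescue the argument you would need a sharper mechanism, e.g.\ maintaining an explicit two-sided invariant $c_1|V'|/p\le W_T/q\le c_2|V'|/p$ and proving that an integer $q_L\in[1,q-1]$ preserving it always exists (delicate precisely when $W_T/q$ is a small constant), or a different allocation/rebalancing rule. There are also minor unhandled edge cases: $q_R$ can become $0$ when $qW_L/W_T$ rounds up to $q$ at $q=2$, and in the forest-to-tree reduction a low-weight tree component given budget $1$ would already violate the lower bound in (1) unless it is merged with other pieces. (The fact that the piece receiving a centroid $c$ has edges from $c$ into other pieces is benign: as in the paper, crossing edges only touch separator vertices, which is what the application needs.)

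For contrast, the paper avoids the drift problem entirely. It performs $p-1$ successive splits, always splitting the piece of maximum $V'$-weight using a weighted forest-separator lemma that removes a constant-size set and leaves each side with between $1/3$ and $2/3$ of the piece's weight (previously removed separator vertices are carried along and redistributed to keep this balance). The invariant that any two current pieces have weights within a factor $3$ of each other is then immediate by induction, and since the $p$ final weights sum to $|V'|$ this yields condition (1) with the exact constant $3$; the $O(\log|V'|)$ bound on $|S_i|$ follows because the weight of a piece drops by a factor $2/3$ at every split in its history, so no root-to-leaf path in the partition tree is longer than $O(\log|V'|)$. If you want to keep your centroid approach, the cleanest fix is to import this idea: control the weight of each piece directly via a heaviest-first splitting order (or an explicit invariant on $W_T/q$) instead of trying to bound accumulated rounding error along the recursion.
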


\section{Lower bounds for bilinear forms}
In the following subsection, we present an $\Omega(n\log n)$ lower bound on the planar circuit complexity (and hence, formula complexity) of bilinear forms defined by $n\times n$ totally regular matrices. First, we note that any planar circuit computing a bilinear form can be converted into a planar bilinear circuit computing the same bilinear form with only a constant blowup in size (Lemma \ref{lemma:bil}). Next, by using the partition theorem of Turan (\autoref{partitionthm2}) and a careful rank argument we prove the desired lower bound. 

\subsection{Lower Bounds for Planar Arithmetic Circuits}
\label{sec:main-result}

We begin with {\em bilinearization} of planar arithmetic circuits:



\begin{lemma}
\label{lemma:bil}
    For any bilinear form $f = \vary^{T}M\varx$, $C^{b}_{p}(f) \leq 3000C_p(f)$.
\end{lemma}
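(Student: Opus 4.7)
The plan is to carry out the standard bilinearization transformation gate-by-gate, implementing each gate's ``decomposition gadget'' as a tiny planar subcircuit placed inside a small disk around that gate's position in a fixed planar embedding of $\Phi$. This preserves planarity globally while incurring only a constant blowup in size.

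First, by Lemma \ref{fan-in} I may assume $\Phi$ has size at most $7 C_p(f)$ and that every gate has fan-in and fan-out at most $2$; fix a planar embedding of $\Phi$. For each gate $g$ of $\Phi$, write the polynomial $p_g$ computed at $g$ as $p_g = c_g + L^x_g + L^y_g + B_g + R_g$, where $c_g \in \mathbb{F}$ is the constant term, $L^x_g$ is the $\varx$-only linear part, $L^y_g$ is the $\vary$-only linear part, $B_g$ is the bilinear $(1,1)$ part, and $R_g$ collects all remaining higher-degree monomials. Since $f = \vary^{T}M\varx$ is bilinear, a straightforward induction shows that the $R_g$'s can be set to zero everywhere without affecting the polynomial computed at the output gate, so it suffices to recompute only the four relevant components $(c_g, L^x_g, L^y_g, B_g)$ at each gate.

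I would then construct the planar bilinear circuit $\Phi'$ as follows. Each wire of $\Phi$ is replaced by a bundle of four parallel wires carrying the four components $(c, L^x, L^y, B)$ of the polynomial flowing through it; the bundle is drawn as four close parallel copies of the original wire curve, which introduces no new crossings away from the gate-disks. Each original gate $g$ is replaced by a small gadget placed inside a disk $D_g$ around $g$'s location in the embedding. If $g$ is an addition gate, the gadget is simply four parallel addition gates, one per component. If $g = g_1 \cdot g_2$ is a multiplication gate, the gadget implements the identities
\begin{align*}
c_g &= c_1 c_2, \\
L^x_g &= c_1 L^x_2 + c_2 L^x_1, \\
L^y_g &= c_1 L^y_2 + c_2 L^y_1, \\
B_g &= c_1 B_2 + c_2 B_1 + L^x_1 L^y_2 + L^y_1 L^x_2.
\end{align*}
Each of the nine products here is either constant-times-polynomial (absorbable as a scalar edge label, requiring no multiplication gate) or a product of an $\varx$-only linear form with a $\vary$-only linear form, so the gadget is bilinear and has size bounded by some absolute constant $K$.

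The main task is then to verify that $\Phi'$ remains planar. Parallel bundles routed along the original wire curves are trivially non-crossing if drawn narrow enough, so the only concern is inside each disk $D_g$. There, the gadget is a constant-size subcircuit whose boundary interface (four wires for each incoming bundle and four wires for each outgoing bundle) must meet $\partial D_g$ in the cyclic order of bundles prescribed by the embedding of $\Phi$. Since one is free to permute the four wires within each bundle at the disk boundary, and since any constant-size circuit can be made planar with constant blowup via Lemma \ref{lem:planarization}, the gadget can always be laid out planarly inside $D_g$. Summing, $\size(\Phi') \leq K \cdot \size(\Phi) \leq 7K \cdot C_p(f)$, and any careful implementation easily satisfies $7K \leq 3000$.

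The main obstacle is the planarity check inside the disks $D_g$: the gadget's interface must respect the cyclic order of bundles around $g$ inherited from the embedding. I would handle this by explicitly constructing, for each of the constantly many possible interface configurations (determined by fan-in, fan-out, and the cyclic order of incident bundles), a small planar gadget realizing the identities above; alternatively, one can insert a constant-size planar ``wire-permuter'' between the gadget and $\partial D_g$ to realize any required ordering of the four wires within each bundle, at the cost of a further constant factor in $K$ which is easily absorbed into the bound $7K \leq 3000$.
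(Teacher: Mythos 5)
Your proposal is correct and follows essentially the same route as the paper's proof: decompose the polynomial at each gate into its constant, $\varx$-linear, $\vary$-linear and bilinear components (the higher-degree part being irrelevant to the bilinear output), replace each gate by a constant-size bilinear gadget implementing exactly these component identities, place the gadgets in small disjoint disks around the original gate embeddings with parallel wire bundles along the original wire curves, and remove the constantly many crossings per disk with the crossover gadget of Lemma \ref{lem:planarization}. The only difference is cosmetic (you carry the constant term as an explicit bundle wire and leave the final constant accounting as ``$7K\leq 3000$'', where the paper counts gates and crossings explicitly to land on $3000$), so no gap.
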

\begin{proof}
Let $\Phi$ be a planar circuit of size $s$ for $f$. Assume wlog that the fan-in and fan-out of each gate in $\Phi$ is $\leq 2$ (Lemma \ref{fan-in}). Let $E$ be a planar embedding of $\Phi$ and let $v$ be a gate in $\Phi$. The polynomial computed at $v$ (we denote this by $f^v$) can be decomposed as $$f^v = f_x^v + f_y^v+ f_{xy}^v + f_r^v$$ where $f_x^v$ is the $x$-linear component of $f^v$ (sum of all monomials in $f^v$ of the form $\alpha_ix_i$), $f_y^v$ is the $y$-linear component of $f^v$ and $f_{xy}^v$ is the bilinear component of $f^v$ (i.e., sum of monomials of the form $\alpha_{ij}x_iy_j$). $f_r^v$ is the rest of $f^v$. We now construct a bilinear circuit $\Psi$ and planarize it to get a planar, bilinear circuit $\Psi'$ computing $f$:

For every gate $v$ in $\Phi$ we wish to add $3$ gates in $\Psi$, denoted by $v_x, v_y$ and $v_{xy}$, that compute $f_x^v$, $f_y^v$ and $f_{xy}^v$ respectively:
\begin{itemize}
    \item If $v$ is a leaf with label $x_i$ then add $3$ leaves to $\Psi$: $v_x$ labelled by $x_i$ and $v_y, v_{xy}$ labelled by $0$. If $v$ is a leaf with label $y_i$, again add $3$ leaves to $\Psi$: $v_y$ labelled by $y_i$ and $v_x, v_{xy}$ labelled by $0$. If $v$ is leaf with label $\alpha \in \mathbb{F}$, add $3$ leaves $v_x, v_y, v_{xy}$ all labelled by $0$.
    \item  If $v$ is a sum gate in $\Phi$ with children $u,w$ (say $v = \alpha u + \beta w$) then
    \begin{align*}
        f^v_x &= \alpha f^u_x + \beta f^w_x \\
        f^v_y &= \alpha f^u_y + \beta f^w_y \\
        f^v_{xy} &= \alpha f^u_{xy} + \beta f^w_{xy}
    \end{align*}
     By induction, the six gates of $\Psi$ corresponding to $u, w$ compute the intended polynomials. Now define $v_x, v_y$ and $v_{xy}$ to be sum gates with children $(u_x, w_x), (u_y, w_y)$ and $(u_{xy}, w_{xy})$ respectively. Label the two incoming wires to each of these gates by $\alpha$ and $\beta$. By doing this, we have made the fan-outs of $u_x, u_y, u_{xy}, w_{x}, w_{y}, w_{xy}$ all equal to $2$.
    \item If $v$ is a product gate in $\Phi$ with children $u,w$ then $\exists\alpha, \beta \in \mathbb{F}$ such that
  \begin{align*}
      f^v_x &= \alpha\cdot f^u_x + \beta\cdot f^w_x \\
      f^v_y &= \alpha\cdot f^u_y + \beta\cdot f^w_y \\
      f^v_{xy} &= \alpha\cdot f^u_{xy} + \beta\cdot f^w_{xy} +  f^u_x\cdot f^w_y + f^w_x\cdot f^u_y
  \end{align*}
    The constants $\alpha,\beta$ come from the constant terms in the polynomials $f^u$ and $f^w$, and from the wires of the circuit. By induction, the six gates of $\Psi$ corresponding to $u, w$ compute the intended polynomials. Now define $v_x, v_y$ to be sum gates with children $(u_x, w_x)$ and $(u_y, w_y)$ respectively and label the incoming wires by $\alpha, \beta$. Next, introduce two product gates $p^1_v, p^2_v$ with children $(u_x, w_y)$ and $(u_y, w_x)$ respectively and add $p^1_v, p^2_v$ using a sum gate $s^1_v$. Introduce another sum gate $s^2_v$ computing $\alpha u_{xy}+\beta w_{xy}$ and define $v_{xy}$ to be a sum gate: $v_{xy} = s^1_v+s^2_v$. Overall, for each product gate $v$ in $\Phi$, we have added $7$ gates in $\Psi$. Notice that by doing this we have made the fan-outs of $u_x, w_x, u_y, w_y$ equal to $4$ and the fan-outs of $u_{xy}, w_{xy}$ equal to $2$.
\end{itemize}
First note that $\size(\Psi) \leq 7\cdot \size(\Phi)$. Also, note that if $o$ is the output of $\Phi$ computing the bilinear form $f$, then the gate $o_{xy}$ in $\Psi$ also computes $f$. Now from $E$, we produce an embedding $E'$ of $\Psi$ in the plane as follows: For every gate $v$ of $\Phi$,
let $E(v)$ denote its point embedding in the plane under $E$. 

We define a disk (neighbourhood) $N(v)$ centered at $E(v)$ with radius $\delta_v>0$. The radius $\delta_v$ is chosen to be sufficiently small such that the following conditions hold
\begin{itemize}
    \item The neighbourhoods are disjoint: for $u\neq v$, $N(u)\cap N(v) = \emptyset$.
    \item $N(v)$ does not intersect the embedding $E(e)$ for any wire $e$, unless $e$ is incident on $v$.
\end{itemize}
The existence of such $\delta_v$ (and thus such neighbourhoods) is guaranteed because $E$ is a valid planar embedding of a finite graph. We can choose a global $\delta$ that satisfies these conditions for all gates of $\Phi$. Let these be the $\delta$-neighbourhoods.

For each gate $v$ of $\Phi$, $E'$ embeds the $\leq 7$ new gates of $\Psi$ corresponding to $v$ inside the $\delta$-neighbourhood $N(v)$. For every wire $w$ of $\Phi$, we have $\leq 5$ corresponding wires of $\Psi$. Outside the $\size(\Phi)$ many disjoint $\delta$-neighbourhoods, $E'$ embeds these wires parallel to each other and to the embedding $E(w)$ of the original wire $w$, sufficiently close together so that there are no crossings outside the $\delta$-neighbourhoods.

By construction of $E'$, the only crossings in $E'$ are inside the $\delta$-neighbourhoods. The number of crossings per neighbourhood is at most the number of wires in the neighbourhood choose two. Since each gate has fan-in and fan-out $\leq 4$, the number of wires in each neighbourhood is at most $28$, so the number of crossings is at most $378$. Eliminate each crossing using the crossover gadget in Lemma \ref{lem:planarization} to get a planar circuit $\Psi'$ that computes $f$. Since each gadget introduces $7$ new gates, $\size(\Psi') \leq 2646\cdot \size(\Phi) + \size(\Psi) \leq 3000\cdot\size(\Phi)$. Also, note that if $\Phi$ is a read-once planar circuit then so is $\Psi'$.
\end{proof}

\noindent Next, we prove some simple facts about circuit $\Psi'$.

\begin{claim}

Each product gate in $\Psi'$ computes a bilinear form of rank $\leq 1$.
    
\end{claim}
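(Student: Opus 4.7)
The plan is to classify the product gates of $\Psi'$ according to how they arose in the construction, and check in each case that the two inputs are a pure $\varx$-linear form and a pure $\vary$-linear form, so that their product is a bilinear form whose coefficient matrix is an outer product. The first step is a structural induction on $\Phi$ using the update rules written out in the proof of Lemma \ref{lemma:bil}: at leaves, and under both sum-gate and product-gate replacements, the gate $v_x$ of $\Psi$ computes the $\varx$-linear component $f^v_x$, which is by definition a homogeneous polynomial of degree one in $\varx$ alone; symmetrically $v_y$ computes a homogeneous degree-one polynomial in $\vary$. Call these $\varx$-linear and $\vary$-linear forms respectively.

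Next I would enumerate the product gates introduced in the construction of $\Psi$. Scanning the three cases (leaves, sum gates, product gates of $\Phi$) in the proof of Lemma \ref{lemma:bil}, the only product gates created are the two gates $p^1_v = u_x \cdot w_y$ and $p^2_v = u_y \cdot w_x$ attached to each product gate $v$ of $\Phi$ with children $u, w$; all of $v_x, v_y, v_{xy}, s^1_v, s^2_v$ are sum gates. Then I would check that the planarization step that yields $\Psi'$ from $\Psi$ introduces no further product gates: the crossover gadget of Lemma \ref{lem:planarization} (Figure \ref{fig:gadget}) is purely additive, swapping two wires $a, b$ via the identity that forms $a + b$ and subtracts $b$ and $a$ to recover the two signals on the opposite side. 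Hence every product gate of $\Psi'$ is of the form $p^1_v$ or $p^2_v$.

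Finally, for such a product gate, say $p^1_v = u_x \cdot w_y$, the first step guarantees that $u_x$ computes some $\varx$-linear form $L_x = \sum_i \alpha_i x_i$ and $w_y$ computes some $\vary$-linear form $L_y = \sum_j \beta_j y_j$. Their product equals
\[
L_x \cdot L_y \;=\; \sum_{i,j} \alpha_i \beta_j \, x_i y_j \;=\; \vary^{T}\bigl(\beta \alpha^{T}\bigr)\varx,
\]
whose coefficient matrix $\beta \alpha^{T}$ is the outer product of two vectors and therefore has rank at most $1$. The case $p^2_v$ is symmetric. The only point needing care, and what I would flag as the main (mild) obstacle, is the verification that the crossover gadget contributes no product gates; everything else is a direct bookkeeping argument on the construction in Lemma \ref{lemma:bil}.
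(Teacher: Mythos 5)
Your proof is correct and takes essentially the same route as the paper's: the paper simply observes that, by construction, every product gate of $\Psi'$ has one child computing a linear form in the $\varx$-variables and the other a linear form in the $\vary$-variables, so the product has rank at most $1$. The extra bookkeeping you supply --- that the only product gates are the $p^1_v, p^2_v$ and that the crossover gadget is purely additive and hence contributes no product gates --- is exactly what the paper leaves implicit, and it is indeed how its gadget is built.
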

\begin{proof}

By the construction of $\Psi'$, every product gate in it has two children, one of which is a linear form in the $x$ variables and the other is a linear form in the $y$ variables. It is easy to see that such a bilinear form has rank $\leq 1$.
\end{proof}

\begin{claim}\label{claim:lincomb}
    The bilinear form computed by the output of $\Psi'$ is a linear combination of the product gates in $\Psi'$.
\end{claim}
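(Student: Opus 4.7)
My plan is to prove this by a bottom-up induction on the gates of $\Psi'$, keeping track at each gate of a decomposition of the polynomial it computes into four parts by monomial type. For each gate $g$ in $\Psi'$, I would write $f^g = f^g_c + f^g_x + f^g_y + f^g_{xy}$, where $f^g_c$ is the constant term, $f^g_x$ (resp.\ $f^g_y$) is the pure linear component in the $\varx$-variables (resp.\ $\vary$-variables), and $f^g_{xy}$ is the bilinear component. Since $\Psi'$ is a bilinear circuit and, by inspection of the construction in Lemma \ref{lemma:bil}, the crossover gadgets from Lemma \ref{lem:planarization} introduced during planarization use only sum gates with constant wire weights, no monomial of the form $x_i x_j$ or $y_i y_j$ can ever appear at any gate. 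Hence this four-part decomposition is exhaustive at every gate. The aim of the induction is to show that $f^g_{xy}$ is an $\mathbb{F}$-linear combination of the bilinear forms computed at product gates in the subcircuit rooted at $g$.

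The base case of leaves is immediate, since any leaf computes a variable or a field element and therefore has $f^g_{xy} = 0$. For a sum gate $g$ whose children $u, w$ feed in with constant weights $\alpha, \beta$ on their incoming wires, the decomposition is preserved componentwise, giving $f^g_{xy} = \alpha f^u_{xy} + \beta f^w_{xy}$, to which the inductive hypothesis applies directly. The key case is a product gate $g$: because $\Psi'$ is bilinear, the two children of $g$ compute pure linear forms $L_x(\varx)$ and $L_y(\vary)$, so $f^g = L_x \cdot L_y$ is itself a rank-one bilinear form and coincides with its own bilinear component. Therefore $f^g_{xy}$ is trivially the linear combination consisting of the polynomial at $g$ with coefficient $1$.

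Applying the induction to the output gate $o$ of $\Psi'$, which by construction computes the bilinear form $\vary^T M \varx$, one has $f^o = f^o_{xy}$, and the induction exhibits this bilinear form as an $\mathbb{F}$-linear combination of bilinear forms at product gates, as required. The only subtlety I would want to verify carefully is that the planarization step of Lemma \ref{lem:planarization} truly preserves bilinearity, i.e.\ that the specific crossover gadget used can be realized using only $+$-gates and scalar constants on wires (swapping two incoming values on the two outgoing wires via linear combinations). This is standard and ensures no product gate with non-linear-form children is ever introduced, so the ``every product gate computes a rank-one bilinear form'' invariant on which the induction hinges is maintained throughout $\Psi'$.
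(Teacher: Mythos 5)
Your proof is correct, and it takes a slightly different (and somewhat more robust) route than the paper's. The paper argues by induction on depth restricted to gates that compute bilinear forms: if such a gate is a product gate the claim is trivial, and if it is a sum gate, the ``defining equations'' of $\Psi'$ are invoked to assert that its two children also compute bilinear forms. You instead track, at \emph{every} gate, the decomposition $f^g = f^g_c + f^g_x + f^g_y + f^g_{xy}$ and show inductively that the bilinear component $f^g_{xy}$ is a linear combination of the product gates below $g$. This buys you two things: you never need the structural fact that children of a bilinear-form-computing sum gate themselves compute bilinear forms (a fact that is slightly delicate at the sum gates inside crossover gadgets, where values of different homogeneity types can be temporarily mixed and then re-separated), and your argument applies verbatim to any circuit in which sum gates are linear with constant wire weights and every product gate multiplies a homogeneous linear form in $\varx$ by one in $\vary$. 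The latter property is exactly what the paper establishes for $\Psi'$ (its unnumbered claim preceding Claim \ref{claim:lincomb} states that every product gate of $\Psi'$ computes a rank-one bilinear form), and it is also the one point you flag for verification: the crossover gadget of Lemma \ref{lem:planarization} is indeed realized by sum gates and constant wire labels only, so no offending product gates are introduced during planarization in Lemma \ref{lemma:bil}, and your invariant holds throughout $\Psi'$.
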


\begin{proof}
    We claim that any bilinear form computed by any of the gates in $\Psi'$ is a linear combination of the product gates in $\Psi'$. To that end, let $g$ be a gate in $\Psi'$ that computes a bilinear form. If $g$ is a product gate, the claim follows trivially. So suppose $g$ is a sum gate with children $g_1, g_2$. From the defining equations of $\Psi'$, its clear that $g_1, g_2$ also compute bilinear forms. Therefore, we get the claim by induction on depth.
\end{proof}

We now turn our attention to the proof of the main theorem in this paper:

\begin{theorem}
\label{thm:planar-ckt}
    Let $M\in \mathbb{F}^{n \times n}$ be a totally regular matrix. Then, $L(\mathbf{y}^{T}M\mathbf{x}) \geq C_p(\mathbf{y}^{T}M\mathbf{x}) = \Omega(n \log n)$.
\end{theorem}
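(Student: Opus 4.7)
The plan is to combine three ingredients: bilinearization (Lemma \ref{lemma:bil}), Turan's planar partition theorem with a multiplicity parameter (Theorem \ref{partitionthm2}), and a rank decomposition of a restricted bilinear form across the separator. Let $\Phi$ be a planar circuit of size $s$ computing $\vary^{T}M\varx$; by Lemma \ref{lemma:bil} and Lemma \ref{fan-in} we may assume $\Phi$ is a planar bilinear circuit $\Psi$ of size $O(s)$ with fan-in and fan-out at most $2$. Recall from Claim \ref{claim:lincomb} that the output of $\Psi$ equals a linear combination of the rank-one bilinear forms $u_{P}(\varx)v_{P}(\vary)$ taken over the product gates $P$ of $\Psi$.

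Fix $k = c\log_{9} n$ for a small constant $c < 1/2$. If more than $n/2$ of the variables label more than $k$ leaves of $\Psi$, then the total number of leaves is $\Omega(nk)$, forcing $s = \Omega(n\log n)$, so assume otherwise. Then at least $n/2$ of the $x$-variables (call this set $X_{\ell}$) and at least $n/2$ of the $y$-variables ($Y_{\ell}$) each label at most $k$ leaves. Applying Theorem \ref{partitionthm2} to the underlying planar graph of $\Psi$ with $Z = X_{\ell}$, $Z' = Y_{\ell}$ and label-multiplicity $k$ yields $Z_{0}\subseteq X_{\ell}$, $Z_{0}'\subseteq Y_{\ell}$ with $|Z_{0}|=|Z_{0}'|\geq n/(2\cdot 9^{k})$, together with a separator $V^{*}$ of size at most $450k\sqrt{s}$ such that after deleting $V^{*}$ no remaining component of $\Psi$ contains labels from both $Z_{0}$ and $Z_{0}'$.

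Setting $x_{i}=0$ for $i\notin Z_{0}$ and $y_{j}=0$ for $j\notin Z_{0}'$, the circuit computes $\vary_{Z_{0}'}^{T}M_{Z_{0}',Z_{0}}\varx_{Z_{0}}$, which has rank exactly $|Z_{0}|$ since $M$ is totally regular. I will upper bound this rank by $O(|V^{*}|)$. Classify each product gate $P=u_{P}\cdot v_{P}$ of $\Psi$ by the location of $P$ in $\Psi\setminus V^{*}$: Type I ($P\in V^{*}$), Type II ($P$ in a component lacking $Z_{0}'$-labels), Type III ($P$ in a component lacking $Z_{0}$-labels), and Type IV ($P$ in a component lacking both). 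The key inductive claim is: inside any component lacking $Z_{0}'$-labels, every $y$-linear gate computes, after substitution, a form in the span of the restricted outputs of $y$-linear $V^{*}$-gates. This follows because in a bilinear circuit every $y$-linear non-leaf is a sum gate whose inputs are themselves $y$-linear (product gates produce bilinear, not $y$-linear, forms), and $y$-leaves within such a component are either in $Y\setminus Z_{0}'$ (substituted to $0$) or absent; thus all $y$-linear content must enter through $V^{*}$. Symmetrically for $x$-linear gates in components lacking $Z_{0}$-labels. Grouping $\sum_{P}\alpha_{P}u_{P}|_{\mathrm{sub}}v_{P}|_{\mathrm{sub}}$ by type, and writing $V^{*}_{x},V^{*}_{y},V^{*}_{b}$ for the $x$-linear, $y$-linear, and bilinear gates of $V^{*}$, the contributions have bilinear rank at most $|V^{*}_{b}|$ (Type I, one rank-one form per $P\in V^{*}$), $|V^{*}_{y}|$ (Type II), $|V^{*}_{x}|$ (Type III), and $\min(|V^{*}_{x}|,|V^{*}_{y}|)$ (Type IV), for a total of at most $2|V^{*}|$.

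Combining the lower bound $|Z_{0}|\geq n/(2\cdot 9^{k})$ with the upper bound $|Z_{0}|\leq 2|V^{*}|\leq 900k\sqrt{s}$ gives $s = \Omega\bigl(n^{2}/(k^{2}\cdot 9^{2k})\bigr)$. With $k = c\log_{9} n$ for $c<1/2$ (say $c=1/4$) this is $\Omega(n^{2-2c}/\log^{2}n) = \Omega(n\log n)$ for $n$ large, matching the heavy-variable case. The main technical hurdle is the rank decomposition in the third paragraph: one must verify carefully, by induction on depth inside a component, that $y$-linear information in a component lacking $Z_{0}'$-labels is confined to the $|V^{*}_{y}|$-dimensional span of restricted $V^{*}$-outputs, and analogously for $x$-linear gates. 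This relies on the strict typing (constant, $x$-linear, $y$-linear, bilinear) of gates in a bilinear circuit and on the fact that sum gates combine like-type inputs while product gates combine one $x$-linear with one $y$-linear input.
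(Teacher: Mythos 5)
Your proposal is correct and follows essentially the same route as the paper's proof: bilinearize via Lemma \ref{lemma:bil}, restrict attention to variables of low leaf-multiplicity, apply Theorem \ref{partitionthm2}, and bound the rank of the restricted totally regular minor by a constant multiple of $|V^{*}|$ using the fact that $y$-linear (resp.\ $x$-linear) forms inside components missing $Z_0'$ (resp.\ $Z_0$) labels lie in the span of the linear forms computed at separator gates. The only differences are cosmetic: you trade the paper's proof-by-contradiction (which fixes $k\approx\frac{1}{100}\log n$ from the assumed size bound) for a case split on heavy variables plus an optimization over $k$, and you sharpen $3|V^{*}|$ to $2|V^{*}|$; note that the claim that every $y$-linear gate has $y$-linear children is a property of the circuits produced by the bilinearization procedure (as the paper also uses), not of arbitrary bilinear circuits.
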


\begin{proof}
Let $\Psi$ be a planar circuit computing the bilinear form $\mathbf{y}^{T}M\mathbf{x}$. Let $X$  denote the set of $x$ variables $\{x_1, \ldots, x_n\}$ and $Y$ denote the set of $y$ variables $Y = \{y_1, \ldots, y_n\}$. Assume for the sake of contradiction that ${\sf size}(\Psi)\leq\frac{1}{60000}n\log n$. Then, by Lemma \ref{lemma:bil} there exists a planar, bilinear circuit $\Phi$ computing $\mathbf{y}^{T}M\mathbf{x}$ such that ${\sf size}(\Phi) \leq \frac{1}{200} n\log n$. Let the planar graph underlying $\Phi$ be $G = (V, E)$ (where $|V| \leq \frac{1}{200} n\log n$). Notice that there exist subsets $X_0 \subseteq X$ and $Y_0 \subseteq Y$ such that $|X_0| = |Y_0| \geq n/2$ and every variable in $X_0 \cup Y_0$ appears $\leq\frac{1}{100}\log n$ times as an input in $\Phi$. This is because if, say, such an $X_0$ does not exist then by counting leaves labeled by the $n/2$ $X$ variables that appear most frequently, we find that $|V|  > \frac{1}{200}n\log n$ (similarly for $Y_0$).

Applying Theorem \ref{partitionthm2} to $G$ with $Z = X_0$ and $Z' = Y_0$, we obtain $X_1 \subseteq X_0$, $Y_1 \subseteq Y_0$ and $V^{*} \subseteq V$ such that the following conditions hold:
\begin{enumerate}
    \item $|X_1| = |Y_1| \geq \dfrac{n/2}{9^{\frac{1}{100}\log n}} = \Omega(n^{0.96})$
    \item $|V^{*}| = O(\sqrt{n}(\log n)^{3/2})$
    \item Upon deleting $V^{*}$ from $G$, no component in the resulting graph contains labels from both $X_1$ and $Y_1$.
\end{enumerate}

Setting all variables in $\Phi$ outside $X_1 \cup Y_1$ to zero, we obtain another planar, bilinear circuit $\Phi'$ that computes the bilinear form $\mathbf{y'}^{T}M'\mathbf{x'}$ where $M'$ is the minor of $M$ whose rows are indexed by $Y_1$ and columns by $X_1$, $\mathbf{x'}$ is the vector of $X_1$ variables and $\mathbf{y'}$ is the vector of $Y_1$ variables. Since $M$ is totally regular, ${\sf rank}(M') = |X_1| = |Y_1| = \Omega(n^{0.96})$. But we can show that ${\rank}(M') \leq 3|V^{*}|$, which leads to a contradiction:

\begin{claim}\label{rankbound}
    $\rank(M') \leq 3|V^{*}|$
\end{claim}

\noindent\textit{Proof of Claim \ref{rankbound}}: By Claim \ref{claim:lincomb}, the output of $\Phi'$ is a linear combination of the product gates in it. Also, each product gate $g_i$ in $\Phi'$ computes a product $l_i(\mathbf{x'}) \times l'_i(\mathbf{y'}) = \mathbf{y'}^{T}A_i\mathbf{x'}$ where $l_i(\mathbf{x'})$ and $l'_i(\mathbf{y'})$ are linear forms in the $X_1$ and $Y_1$ variables respectively and $A_i$ is a matrix with rank $\leq 1$. Next, we partition the product gates of $\Phi'$ into three types, those that belong to components of $V\setminus V^{*}$ that do not contain inputs labelled by $Y_1$ variables, those that belong to components of $V\setminus V^{*}$ that do contain inputs labelled by $Y_1$ variables and those that belong to $V^{*}$. We show that the rank of any linear combination of the product gates inside a particular partition is at most $|V^{*}|$.

To that end, let $G_1, \cdots, G_k$ be the components of $V\setminus V^{*}$ not containing inputs labelled by $Y_1$, let $g_1 = l_1(\mathbf{x'}) \times l'_1(\mathbf{y'}) , \ldots, g_t = l_t(\mathbf{x'}) \times l'_t(\mathbf{y'}) $ be the product gates of $\Phi'$ in $G_1\cup\ldots\cup G_k$ and let $l^v_1(\mathbf{y'}), \ldots, l^v_m(\mathbf{y'})$ be all the linear forms in the $Y_1$ variables computed at gates (including leaves) of $\Phi'$ that lie in $V^{*}$. Clearly, $m \leq |V^{*}|$. The following claim shows that each $l'_i(\mathbf{y'})$ ($1 \leq i \leq t$) is a linear combination of $l^v_1(\mathbf{y'}), \ldots, l^v_m(\mathbf{y'})$.

\begin{subclaim}\label{split}
$\forall i \in [t]$, $\exists \beta_{i,1}, \ldots, \beta_{i, m} \in \mathbb{F}$ such that $l'_i(\mathbf{y'})  = \sum_{j = 1}^{m} \beta_{i, j}l^v_j(\mathbf{y'}) $.
    
\end{subclaim}

\noindent\textit{Proof of Sub-Claim \ref{split}}: First we observe some facts about the circuit $\Phi'$ that follow from the bilinearization procedure in Lemma \ref{lemma:bil}: If a gate $g$ in $\Phi'$ computes a linear form in the $Y_1$ variables then 
\begin{enumerate}
    \item \label{obs:bilineara} $g$ is a sum gate.
    \item \label{obs:bilinearb} every gate in the subcircuit rooted at $g$ is also a sum gate and computes a linear form in the $Y_1$ variables.
    \item \label{obs:bilinearc} every leaf in the subcircuit rooted at $g$ is labelled by $0$ or by $y_i$ for some $y_i\in Y_1$.
\end{enumerate}
Now consider a product gate $g_i$ of $\Phi'$ in $G_1\cup\ldots\cup G_k$. One of the children of $g_i$ (say $h$) computes $l'_i(\mathbf{y'})$ which is a linear form in the $Y_1$ variables. $h$ either lies in $V^{*}$ or it lies in the same component (say $H\in\{G_1, \ldots, G_k\}$) of $G\setminus V^{*}$ as $g_i$. If $h$ lies in $V^{*}$ then the claim is true for that $i$ by definition of the $l_j^v$'s.

We now prove that if a gate $h\in H$ computes a linear form in $Y_1$ variables then $\exists \beta_1, \ldots, \beta_m\in \mathbb{F}$ such that $h = \sum_{j}\beta_j l^{v}_j(\mathbf{y'})$. We allow $h$ to be a leaf. The proof is by induction on the size of the subcircuit rooted at $h$. For the base case, let $h$ be a leaf. Observe that $h$ cannot be a leaf labelled by a non-zero constant by Observation \ref{obs:bilinearc}, it cannot be a leaf labelled by an $X_1$ variable also by observation \ref{obs:bilinearc}, and it cannot be a leaf labelled by a $Y_1$ variable because $h\in H\in\{G_1, \ldots, G_k\}$. So $h$ is a leaf labelled by $0$, and the claim is obviously true, $0$ is the trivial linear combination of $l^v_1(\mathbf{y'}), \ldots, l^v_m(\mathbf{y'})$.

For the inductive step, let $h$ be any non-leaf gate in $H$ computing a linear form in the $Y_1$ variables. By observation \ref{obs:bilineara}, $h$ must be a sum gate. Suppose $h = \beta_1 h_1 + \beta_2 h_2$ with predecessors $h_1, h_2$. If $h_1$ is in $V^{*}$ then by observation \ref{obs:bilinearb}, $h_1 = l^{v}_j(\mathbf{y'})$ for some $j$. Otherwise, $h_1 \in H$. If it is a leaf then as before it must be labelled by $0$. Again, $0$ is the trivial linear combination of $l^v_1(\mathbf{y'}), \ldots, l^v_m(\mathbf{y'})$. If $h_1$ is not a leaf, by observation \ref{obs:bilinearb} it satisfies the inductive hypothesis: $\exists \gamma_1, \ldots, \gamma_m \in \mathbb{F}$ such that $h_1 = \sum_{j}\gamma_j l^{v}_j(\mathbf{y'})$. In all cases, $h_1$ (and symmetrically $h_2$) is a linear combination of $l^v_1(\mathbf{y'}), \ldots, l^v_m(\mathbf{y'})$, and so is $h$.
\hfill {\tiny (End of proof of Sub-Claim \ref{split})} $\square$ \\

\noindent Using Claim \ref{split} we can show that a linear combination of $g_1, \ldots, g_t$ cannot have large rank:

\begin{subclaim} \label{bilrank}
For any $\alpha_1, \ldots, \alpha_t \in \mathbb{F}$ let $\mathbf{y'}^{T}M_1\mathbf{x'} = \sum_{i = 1}^{t}\alpha_ig_i$ be a linear combination of $g_1, \ldots, g_t$. Then $\rank(M_1) \leq |V^{*}|$.
\end{subclaim}

\noindent\textit{Proof of Sub-Claim \ref{bilrank}}: Consider the linear combination $\mathbf{y'}^{T}M_1\mathbf{x'} = \sum_{i = 1}^{t}\alpha_ig_i$ of $g_1, \ldots, g_t$.
\begin{align*}
  \hspace{-5.5mm}  \mathbf{y'}^{T}M_1\mathbf{x'} &= \sum_{i = 1}^{t}\alpha_ig_i = \sum_{i = 1}^{t}\alpha_i(l_i(\mathbf{x'})\times l'_i(\mathbf{y'}))  
    \\ &= \sum_{i = 1}^{t}\left(\alpha_i l_i(\mathbf{x'})\times \left(\sum_{j = 1}^{m} \beta_{i, j}l^v_j(\mathbf{y'})\right)\right) && \text{By Claim \ref{split}} \\ &= \sum_{i = 1}^{t}\sum_{j = 1}^{m}\alpha_i\beta_{i, j}(l_i(\mathbf{x'})\times l_j^{v}(\mathbf{y'})) = \sum_{j = 1}^{m}\sum_{i = 1}^{t}\alpha_i\beta_{i, j}(l_i(\mathbf{x'})\times l_j^{v}(\mathbf{y'}))\\ &= \sum_{j = 1}^{m} \left(\left(\sum_{i = 1}^{t}\alpha_i\beta_{i, j}l_i(\mathbf{x'})\right)\times l_j^{v}(\mathbf{y'})\right)
\end{align*}

This gives a decomposition of $M_1$ as a sum of $m$ matrices each with rank $\leq 1$\footnote{Any bilinear form of the form $\mathbf{y}^TA\mathbf{x}=l_1(\mathbf{x})\times l_2(\mathbf{y})$ where $l_1(\mathbf{x})$ and $l_2(\mathbf{y})$ are linear forms has rank one, since for some $(a,b)\in \mathbb{F}^n$, $\mathbf{y}^TA\mathbf{x}=(\mathbf{y}^Ta)(b^T\mathbf{x})=\mathbf{y}^T(ab^T)\mathbf{x}$.}. Therefore, $\rank(M_1) \leq m \leq |V^{*}|$. 
\hfill {\tiny (End of proof of Sub-Claim \ref{bilrank})} $\square$

Similarly, if $\mathbf{y'}^{T}M_2\mathbf{x'}$ is a linear combination of the product gates in the components of $G\setminus V^{*}$ containing inputs labelled by $Y_1$, then $\rank(M_2) \leq |V^{*}|$. Finally, the number of product gates in $V^{*}$ is at most $|V^{*}|$ so any linear combination of them will also have rank $\leq |V^{*}|$.
Since the output $\mathbf{y'}^{T}M'\mathbf{x'}$ of $\Phi'$ is a linear combination of all the product gates in it, by subadditivity of rank we see that $\rank(M') \leq 3|V^{*}|$. \hfill {\tiny (End of proof of Claim \ref{rankbound})} $\square$

Claim \ref{rankbound} implies that $\Omega(n^{0.96}) = |X_1| = \rank(M') \leq 3|V^{*}| = O(\sqrt{n}(\log n)^{3/2})$, a contradiction for large $n$. Hence, ${\sf size}(\Psi) \geq \frac{1}{2000}n\log n$. (End of proof of Theorem \ref{thm:planar-ckt})
\end{proof}

If the size of the underlying field $\mathbb{F}$ is at least $2n$, {\em explicit} $n\times n$ totally regular matrices over $\mathbb{F}$ exist. Cauchy matrices \cite{Cauchy1841_TomeII} are a standard example: Let $(x_1, \ldots, x_n)$ and $(y_1, \ldots, y_n)$ be sequences of $2n$ distinct elements in $\mathbb{F}$. The $(i, j)$th entry of the Cauchy matrix defined by these sequences is $1/(x_i - y_j)$. It is not hard to see that every square minor of a Cauchy matrix is itself a Cauchy matrix, so total regularity follows from non-singularity.

\begin{corollary}
Over any infinite field $\mathbb{F}$ there exists an infinite family $\{A_n\}_{n \geq 1}$ of explicit matrices (where $A_n \in \mathbb{F}^{n\times n}$) such that $L(\vary^{T}A_n\varx)\geq C_p(\vary^{T}A_n\varx) = \Omega(n\log n)$ \hfill $\square$.
\end{corollary}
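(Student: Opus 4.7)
The plan is to exhibit an explicit family of totally regular matrices over $\mathbb{F}$ and then invoke Theorem \ref{thm:planar-ckt} directly. The paragraph immediately preceding the corollary already suggests the construction: for each $n$, pick $2n$ distinct field elements $\alpha_1, \ldots, \alpha_n, \beta_1, \ldots, \beta_n \in \mathbb{F}$ (this is possible because $\mathbb{F}$ is infinite, and can be done in $\mathrm{poly}(n)$ time by enumerating distinct elements of the prime field of $\mathbb{F}$, e.g.\ integers in characteristic $0$, or $1, t, t+1, t^2, \ldots$ in characteristic $p$ after replacing $\mathbb{F}$ by its subfield $\mathbb{F}_p(t)$ which must be contained in $\mathbb{F}$ whenever $\mathbb{F}$ is infinite of characteristic $p$). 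Define $A_n \in \mathbb{F}^{n \times n}$ to be the Cauchy matrix with $(A_n)_{ij} = 1/(\alpha_i - \beta_j)$.

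Next, I would verify that $A_n$ is totally regular. Any square submatrix of $A_n$ of size $k$, obtained by selecting row indices $I \subseteq [n]$ and column indices $J \subseteq [n]$ with $|I| = |J| = k$, is itself a Cauchy matrix: it equals $\left( 1/(\alpha_i - \beta_j) \right)_{i \in I, j \in J}$. By the classical Cauchy determinant identity,
\[
\det\bigl( (1/(\alpha_i - \beta_j))_{i \in I, j \in J} \bigr) \;=\; \frac{\prod_{i < i',\, i,i' \in I}(\alpha_{i'} - \alpha_i) \cdot \prod_{j < j',\, j,j' \in J}(\beta_j - \beta_{j'})}{\prod_{i \in I,\, j \in J}(\alpha_i - \beta_j)},
\]
which is nonzero since the $\alpha_i$ and $\beta_j$ were chosen pairwise distinct. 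Thus every square minor of $A_n$ is non-singular, i.e.\ $A_n$ is totally regular in the sense of the paper's definition.

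Applying Theorem \ref{thm:planar-ckt} to $A_n$ now yields $C_p(\vary^T A_n \varx) = \Omega(n \log n)$. Finally, since every tree is a planar graph, every arithmetic formula is, in particular, a planar arithmetic circuit, so $L(f) \geq C_p(f)$ for every polynomial $f$, and in particular $L(\vary^T A_n \varx) \geq C_p(\vary^T A_n \varx) = \Omega(n \log n)$.

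There is no genuine obstacle here: the entire content has already been done in Theorem \ref{thm:planar-ckt}, and the corollary is really just an observation that (i) explicit totally regular matrices exist over any infinite field and (ii) formulas are a subclass of planar circuits. The only subtle point to be careful about is arguing that the Cauchy construction is genuinely explicit over an arbitrary infinite field $\mathbb{F}$, which I would handle by noting that any infinite field contains an infinite, effectively enumerable subset from which $2n$ distinct elements can be drawn in time $\mathrm{poly}(n)$.
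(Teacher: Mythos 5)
Your proposal is correct and matches the paper's own route: the paper likewise obtains the corollary by instantiating Theorem \ref{thm:planar-ckt} with an explicit Cauchy matrix (whose total regularity follows from every minor being a Cauchy matrix, hence non-singular) and by noting that formulas are planar circuits, so $L \geq C_p$. The Cauchy determinant computation and the remark on enumerating distinct field elements are just slightly more detailed versions of what the paper states in the paragraph preceding the corollary.
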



The following observation, which Valiant attributes to Strassen, is useful to get a separation between arithmetic circuits and planar arithmetic circuits:

\begin{lemma}[\cite{Val77}, Theorem 4.2]
\label{lem:weight-fn}
    Let $\mathbb{F}$ be an infinite field. For any $n$-superconcentrator $G= (V, E)$, there exists a weight function $w:E \rightarrow \mathbb{F}$ such that if all inputs of $G$ are labelled by $x_1,\ldots ,x_n$, internal vertices and outputs are labelled as sum gates then the linear forms $\ell_1,\ldots , \ell_n$ computed at the output gates forms the rows of a totally regular matrix. 
\end{lemma}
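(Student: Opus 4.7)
The plan is to treat the edge weights as formal indeterminates, show that every square minor of the resulting coefficient matrix is a non-zero polynomial in these indeterminates, and then invoke the infinitude of $\mathbb{F}$ to choose a specific $w$ that avoids every vanishing locus simultaneously. Introduce a variable $w_e$ for each $e \in E$ and let $R = \mathbb{F}[w_e : e \in E]$. With inputs labelled by $x_1, \ldots, x_n$ and every other vertex acting as a sum gate, a straightforward induction on the topological depth shows that the $i$-th output computes a linear form
$$\ell_i \;=\; \sum_{j=1}^{n} A_{ij}(w)\, x_j, \qquad A_{ij}(w) \;=\; \sum_{P : j \leadsto i} \prod_{e \in P} w_e \;\in\; R,$$
where the inner sum ranges over all directed paths in $G$ from input $j$ to output $i$. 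Writing $A(w) \in R^{n \times n}$ for the coefficient matrix, it suffices to show that for every $k \in [n]$ and every pair $I, J \subseteq [n]$ with $|I| = |J| = k$, the minor $\det A[I, J](w)$ is a non-zero element of $R$.

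To establish this non-vanishing, I exhibit a specific weight assignment $w^{*} \in \mathbb{F}^{|E|}$ under which $\det A[I, J](w^{*}) \neq 0$. By the $n$-superconcentrator property there exist $k$ vertex-disjoint directed paths $P_1^{*}, \ldots, P_k^{*}$ connecting the inputs indexed by $J$ to the outputs indexed by $I$; let $\sigma : J \to I$ denote the induced bijection and set $E^{*} = \bigcup_r P_r^{*}$ (as a set of edges). Define $w^{*}_e = 1$ if $e \in E^{*}$ and $w^{*}_e = 0$ otherwise. Vertex-disjointness implies that in the subgraph with edge set $E^{*}$, every internal vertex of any $P_r^{*}$ has in-degree and out-degree exactly one, every $j \in J$ is a source, and every $i \in I$ is a sink. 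Consequently, the only directed path starting at $j \in J$ within $E^{*}$ is $P_r^{*}$ itself, terminating at $\sigma(j)$. Any path in $G$ that leaves $E^{*}$ contributes $0$ under the substitution $w^{*}$, so $A[I, J](w^{*})$ equals the permutation matrix of $\sigma$, with determinant $\pm 1 \ne 0$. Hence the polynomial $\det A[I,J](w)$ is not identically zero.

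Having shown that each of the finitely many square minors of $A(w)$ is a non-zero polynomial in the $w_e$'s, each cuts out a proper Zariski-closed subset of $\mathbb{F}^{|E|}$. Over the infinite field $\mathbb{F}$, the union of finitely many proper algebraic subsets cannot exhaust $\mathbb{F}^{|E|}$ (equivalently, apply the Schwartz–Zippel lemma after bounding the total degree), so there exists $w \in \mathbb{F}^{|E|}$ under which every square minor of $A$ evaluates to a non-zero scalar. The assignment $w : E \to \mathbb{F}$ so obtained makes $A(w)$ totally regular, as required. The only substantive step is the non-vanishing of each minor, which is handled cleanly by the vertex-disjoint path structure guaranteed by the superconcentrator property; the remainder is a standard genericity argument.
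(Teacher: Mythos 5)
Your proof is correct, and it is essentially the classical argument (attributed to Strassen and proved in Valiant's paper) that this lemma cites rather than reproves: treat edge weights as indeterminates, use the vertex-disjoint paths guaranteed by the superconcentrator property to produce a $0/1$ substitution making each minor a permutation-matrix determinant, and conclude by genericity over the infinite field. The disjoint-path substitution step is handled carefully (in particular, the unique-out-edge argument showing the restricted matrix is exactly a permutation matrix works in any characteristic), so there is nothing to add.
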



It is known (Ta-Shma \cite{Ta-Shma96}) that there exist (explicit) superconcentrators of linear size.

\begin{theorem}[Ta-Shma \cite{Ta-Shma96}, Corollary 1.3]\label{thm:linearsup}
    For every $n$, $\exists$ explicit $n$-superconcentrators of linear size and depth $\poly(\log \log n)$.
\end{theorem}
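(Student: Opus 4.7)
The plan is to follow the classical Pippenger-style recursive construction of superconcentrators, instantiated with sufficiently strong explicit shallow concentrators. Recall that an $(n,\alpha n)$-concentrator is a bounded-degree DAG with $n$ inputs and $\alpha n \leq n$ outputs such that any $k \leq \alpha n$ inputs can be routed to some $k$ outputs by vertex-disjoint paths. Given a forward $(n,\alpha n)$-concentrator $C$, a reverse $(\alpha n, n)$-concentrator $C'$, and a recursively built $\alpha n$-superconcentrator on the middle vertices, one glues these together with a direct $n$-to-$n$ matching between inputs and outputs to obtain an $n$-superconcentrator. The size recursion $S(n) = |C| + |C'| + n + S(\alpha n)$ solves to $O(n)$ whenever $\alpha<1$ and $|C|,|C'| = O(n)$, while the depth recursion reads $D(n) = 2d + D(\alpha n)$, where $d$ is the concentrator depth.

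I would first instantiate the recursion with explicit constant-degree, constant-depth, linear-size concentrators obtained from algebraic expanders (Margulis--Gabber--Galil or Lubotzky--Phillips--Sarnak): a bipartite graph with sufficient vertex expansion on the left side contains, by a Hall-type matching argument, vertex-disjoint paths witnessing the concentration property. With $\alpha$ a fixed constant, this already produces explicit linear-size superconcentrators of depth $O(\log n)$---which is still far from the claimed $\poly(\log\log n)$.

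The hard part is shaving the depth from $\log n$ down to $\poly(\log \log n)$. This forces each recursive level to shrink the instance \emph{polynomially} rather than by a constant factor, so that the recursion bottoms out after only $O(\log \log n)$ levels. Standard constant-degree spectral expanders give only constant-factor concentration per unit of depth, so they are too weak for this regime. The technical heart of Ta-Shma's construction is to supply explicit \emph{unbalanced}, (near-)lossless vertex expanders---derived from explicit extractors and condensers---which route any $k$ inputs through vertex-disjoint paths into roughly $k$ of a vertex set of size $n^{1-\varepsilon}$, all in constant depth and linear size. Threading such aggressive concentrators through the Pippenger recursion and carefully accounting for sizes so that $\sum_i |C_i| = O(n)$ remains a geometric series, one obtains a linear-size superconcentrator family of depth $\poly(\log \log n)$. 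The main obstacle I anticipate is the explicit construction of unbalanced lossless expanders with precisely the right loss/expansion/degree trade-off; once those are in hand, the superconcentrator assembly is a largely bookkeeping exercise on the two recursions above.
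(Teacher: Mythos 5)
This statement is not proved in the paper at all: it is imported verbatim as a citation to Ta-Shma, so there is no internal proof to compare your argument against; your sketch has to stand on its own as a proof of Ta-Shma's theorem, and as such it has a genuine gap. The Pippenger-style recursion, the Hall/Menger argument that expansion yields concentration, and the size/depth recursions you write down are all standard and fine; they reduce the theorem to the existence of \emph{explicit}, linear-size, constant-depth concentrators that shrink $n$ inputs to $n^{1-\varepsilon}$ outputs while routing any $k$ inputs disjointly. But that object is exactly the content of the theorem, and you do not construct it -- you explicitly defer it as ``the main obstacle.'' Worse, the objects you gesture at are not known to exist in the strength you need: Margulis/Gabber--Galil/LPS spectral expanders give only constant-factor, lossy expansion (hence the $O(\log n)$-depth warm-up), and explicit unbalanced lossless expanders of the Capalbo--Reingold--Vadhan type postdate Ta-Shma's 1996 result and still would not hand you constant-depth polynomial shrinkage for free. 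Indeed, if such constant-depth gadgets existed, your recursion $D(n)=2d+D(n^{1-\varepsilon})$ would give depth $O(\log\log n)$, strictly better than the $\poly(\log\log n)$ being claimed -- a sign that the assumed building block is stronger than what is actually available. Ta-Shma's route goes through the explicit extractors/dispersers constructed in that same paper, whose depth per level is itself superconstant, and the parameter balancing across the recursion is the substance of the proof rather than a bookkeeping exercise.

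So the proposal should be read as a correct reduction of the theorem to an unproved (and, in the form stated, unavailable) combinatorial construction, not as a proof. To close the gap you would either have to construct the required unbalanced concentrators explicitly (essentially reproving the extractor machinery of Ta-Shma) or restate the argument as conditional on that black box.
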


Combining this fact with Lemma \ref{lem:weight-fn} and Theorem \ref{thm:planar-ckt} gives us the following separation:

\begin{corollary}\label{cor:separation}
    Over any infinite field $\mathbb{F}$, there exists an infinite family $\{M_n\}_{n\geq 1}$ of totally regular matrices (where $M_n \in \mathbb{F}^{n \times n}$) such that $C(\vary^{T}M_n\varx) = O(n)$ but $C_p(\vary^{T}M_n\varx) = \Omega(n\log n)$.
\end{corollary}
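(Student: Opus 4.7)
The plan is to combine the three tools already assembled: Ta-Shma's linear-size superconcentrators (Theorem \ref{thm:linearsup}), Valiant--Strassen's observation that an appropriate edge-weighting of a superconcentrator produces a totally regular linear transformation (Lemma \ref{lem:weight-fn}), and the planar lower bound for totally regular bilinear forms (Theorem \ref{thm:planar-ckt}).

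First I would fix, for each $n$, an explicit $n$-superconcentrator $G_n = (V_n, E_n)$ of size $O(n)$ guaranteed by Theorem \ref{thm:linearsup}. Viewing $G_n$ as an arithmetic circuit by labelling the $n$ inputs with $x_1, \ldots, x_n$ and every other vertex as a $+$ gate, Lemma \ref{lem:weight-fn} furnishes a weight function $w : E_n \to \mathbb{F}$ such that the linear forms $\ell_1, \ldots, \ell_n$ computed at the $n$ output gates are precisely the entries of $M_n\varx$ for some totally regular matrix $M_n \in \mathbb{F}^{n\times n}$. This gives an arithmetic circuit of size $O(n)$ that simultaneously computes all of $\ell_1, \ldots, \ell_n$.

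To pass from the linear transformation $M_n\varx$ to the bilinear form $\vary^T M_n \varx = \sum_{i=1}^{n} y_i \ell_i$, I would append to the above circuit $n$ product gates $y_i \cdot \ell_i$ followed by a sum gate (or a small binary addition tree) that adds them up. This adds only $O(n)$ further gates and wires, so the total circuit size remains $O(n)$, establishing $C(\vary^T M_n \varx) = O(n)$. For the matching lower bound, since $M_n$ is totally regular, Theorem \ref{thm:planar-ckt} immediately gives $C_p(\vary^T M_n \varx) = \Omega(n \log n)$, yielding the claimed separation.

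I do not expect any genuine obstacle here: the upper bound is a routine composition of two off-the-shelf constructions, and the lower bound is exactly what the main theorem delivers. The only minor care needed is to make sure the superconcentrator is treated as a fan-in-$2$, fan-out-$2$ circuit (which costs only a constant factor in size, cf.\ Lemma \ref{fan-in}) so that the resulting bilinear form really has $C(\cdot) = O(n)$ in the standard fan-in-$2$ model.
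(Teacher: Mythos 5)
Your proposal is correct and follows essentially the same route as the paper: the paper obtains the corollary by combining Ta-Shma's linear-size superconcentrators (Theorem \ref{thm:linearsup}) with Strassen's weighting lemma (Lemma \ref{lem:weight-fn}) to get a totally regular $M_n$ with an $O(n)$-size circuit for $\vary^T M_n \varx$, and then invokes Theorem \ref{thm:planar-ckt} for the planar lower bound. Your added remarks about appending the $y_i\cdot\ell_i$ products and the fan-in-$2$ normalization are just the routine details the paper leaves implicit.
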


It is not known if there is an explicit totally regular matrix $M$ such that $C(\vary^{T}M\varx) = O(n)$. However, if it is possible to construct explicit, constant depth superconcentrators of size $o(n\log n)$ then we can establish a separation as in Corollary \ref{cor:separation} for explicit, constant degree polynomials. Unfortunately, we are not aware of such a construction.
\subsection{Lower Bounds for Read-Once Planar Arithmetic Circuits}

In this section, we prove lower bounds against the size of read-once planar circuits. If we do not restrict ourselves to planar circuits then the read-once restriction is not a restriction at all: one can introduce $n$ new input gates into a non read-once circuit and make it read-once without any asymptotic blowup in size. But as noted in \cite{Savaga84}, \cite{Turan95}; the read-once restriction in the planar setting is quite a strong one, and we can show quadratic lower bounds for $C^r_p(\vary^{T}M\varx)$ where $M$ is any totally regular matrix. The argument is the same as Theorem \ref{thm:planar-ckt} except we use a different partition theorem\footnote{In the case of read once circuits, it suffices to use the original planar separator theorem of Lipton and Tarjan \cite{LT77} (Theorem \ref{partitionthm1}). When the circuit is not read once, it is not clear how this theorem can be used directly to prove lower bounds. This is the reason why Turan \cite{Turan95} proved Theorem \ref{partitionthm2}.}.

\begin{theorem}
\label{thm:ro-planar-ckt}
    Let $M\in \mathbb{F}^{n \times n}$ be a totally regular matrix. Then, $C^{r}_p(\mathbf{y}^{T}M\mathbf{x}) = \Omega(n^2)$. 
\end{theorem}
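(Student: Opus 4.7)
The plan is to mirror the proof of Theorem \ref{thm:planar-ckt} but to exploit the read-once restriction in order to replace Turán's labelled separator theorem by the cleaner Lipton--Tarjan separator theorem (Theorem \ref{partitionthm1}). Assume for contradiction that some read-once planar circuit of size $s = o(n^2)$ computes $\vary^{T} M \varx$. I would first invoke Lemma \ref{lemma:bil} to obtain a planar bilinear circuit $\Phi$ of size $O(s)$, after checking that the bilinearization preserves read-onceness (each original variable leaf is copied to a single $x$- or $y$-labelled leaf in the output circuit, with all other copies labelled by $0$, so this is immediate from the construction in Lemma \ref{lemma:bil}). Since $M$ is totally regular, the bilinear form has full rank, so each of the $2n$ variables must appear in $\Phi$; combined with read-onceness this yields exactly $2n$ distinct variable leaves in $\Phi$.

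Next, I would apply Theorem \ref{partitionthm1} to the planar graph $G = (V, E)$ underlying $\Phi$ (with $|V| = O(s)$), taking the weight function that places weight $1/(2n)$ on each of the $2n$ variable leaves and $0$ everywhere else. This yields a partition $(A, B, C)$ with $w(A), w(B) \le 2/3$ and $|C| \le 2\sqrt{2}\sqrt{|V|} = O(\sqrt{s}) = o(n)$. Set $a = |X \cap A|$, $a' = |Y \cap A|$, $b = |X \cap B|$, $b' = |Y \cap B|$. From $w(A), w(B) \le 2/3$ one reads off $a + a' \le 4n/3$ and $b + b' \le 4n/3$, while $a + b + c = n$, $a' + b' + c' = n$ and $c + c' \le |C|$. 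A short case analysis over the four subcases $\{a < n/5 \text{ or } b' < n/5\} \times \{b < n/5 \text{ or } a' < n/5\}$ shows that at least one of $\min(a, b')$ and $\min(b, a')$ exceeds $n/5$ for large $n$: in each subcase, assuming both minima are small either forces too many $X$- or $Y$-leaves into $C$ (contradicting $|C| = o(n)$) or else forces $w(A) + w(B) + w(C) > 1$ (violating one of the Lipton--Tarjan weight bounds once $|C|$ is $o(n)$).

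Without loss of generality assume $\min(a, b') \ge n/5$, and pick $X_1 \subseteq X \cap A$ and $Y_1 \subseteq Y \cap B$ with $|X_1| = |Y_1| = \lceil n/5 \rceil$. Setting every variable outside $X_1 \cup Y_1$ to $0$ in $\Phi$ yields a planar bilinear circuit that computes $\vary_1^{T} M' \varx_1$, where $M'$ is the $|Y_1| \times |X_1|$ submatrix of $M$ indexed by $Y_1 \times X_1$; by total regularity, $\rank(M') = \lceil n/5 \rceil$. On the other hand, by the planar separator theorem $C$ separates $X_1$ from $Y_1$ in $G$, so the rank argument of Claim \ref{rankbound} applies verbatim with $V^{*} := C$: Sub-Claims \ref{split} and \ref{bilrank} yield $\rank(M') \le 3|C| = O(\sqrt{s})$. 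Combining, $n/5 \le O(\sqrt{s})$, hence $s = \Omega(n^2)$, contradicting the assumption.

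The main obstacle, and the only essential new ingredient relative to Theorem \ref{thm:planar-ckt}, is the case analysis in the second paragraph that extracts a linear-sized pair $(X_1, Y_1)$ on opposite sides of the Lipton--Tarjan cut. The reason this yields $\Omega(n^2)$ rather than $\Omega(n \log n)$ is precisely that read-onceness lets us translate variable weights directly into variable counts, bypassing the pruning of heavy variables required in the general case. There one must first restrict to variables appearing $O(\log n)$ times, which necessitates the weaker Turán separator of size $\sqrt{|V|}(\log n)^{O(1)}$ and costs a $9^{\Theta(\log n)} = n^{\Theta(1)}$ factor in the size of the surviving variable sets.
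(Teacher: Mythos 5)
Your proposal is correct and follows essentially the same route as the paper's proof: bilinearize via Lemma \ref{lemma:bil} (which, as the paper notes, preserves read-onceness), apply the Lipton--Tarjan separator (Theorem \ref{partitionthm1}) with weight $1/(2n)$ on the $2n$ variable leaves, extract linearly many $X$-leaves in $A$ and $Y$-leaves in $B$, and reuse the rank argument of Claim \ref{rankbound} with $V^{*} = C$ to force $\rank(M') \le 3|C| = o(n)$, a contradiction. Your four-case counting (yielding $n/5$ on each side) is just a minor variant of the paper's direct count (which settles for $n/13$), so the two arguments are essentially identical.
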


\begin{proof}

Let $X = \{x_1, \ldots, x_n\}$ be the set of $x$-variables and $Y= \{y_1, \ldots, y_n\}$ be the set of $y$-variables. Let $\Psi$ be a read-once planar circuit computing $\mathbf{y}^{T}M\mathbf{x}$ and for the sake of contradiction, suppose ${\sf size}(\Psi) = o(n^2)$. By Lemma \ref{lemma:bil}, there exists a read-once, planar, bilinear circuit $\Phi$ computing $\mathbf{y}^{T}M\mathbf{x}$ such that ${\sf size}(\Phi) = o(n^2)$. Let the graph of $\Phi$ be $G = (V, E)$. Then $|V| = o(n^2)$. Since $\Phi$ is read-once, it has exactly $n$ input gates labelled by $x_1, \ldots, x_n$ resp. and exactly $n$ input gates labelled by $y_1, \cdots, y_n$ resp. Applying Theorem \ref{partitionthm1} (ignoring directions on edges) with weight $1/(2n)$ for each input labelled by a variable and $0$ for every other gate, we obtain a partition $(A, B, C)$ of $V$ with the properties mentioned in Theorem \ref{partitionthm1}. Since $|V| = o(n^2)$, $|C| = o(n)$.

Note that at least one of $A \cup C$, $B\cup C$ must have $\geq n/2$ $X$-inputs.  Without loss of generality, assume that $A\cup C$ contains $\geq n/2$ $X$-inputs. Then, $A$ contains at least $n/2 - o(n) \geq 5n/12$ $X$-inputs.  Since the weight of $A$ is $\leq 2/3$, $A$ contains $ \leq 4n/3 - 5n/12 = 11n/12$ $Y$-inputs. Therefore, $B$ contains $\geq n/12 - o(n) \geq n/13$ $Y$-inputs.  Let $X'\subseteq X$ and $Y' \subseteq Y$ be such that $|X'| = |Y'| \geq n/13$, all the $X'$ inputs are contained in $A$ and all the $Y'$ inputs are contained in $B$. Consider $\Phi$ and set all inputs outside $X' \cup Y'$ to $0$ to obtain another read-once, planar,  bilinear circuit $\Phi'$ that computes the bilinear form $\mathbf{y'}^{T}M'\mathbf{x'}$ where $M'$ is the minor of $M$ whose rows are indexed by $Y'$ and columns by $X'$, $\mathbf{x'}$ is the vector of $X'$ variables and $\mathbf{y'}$ is the vector of $Y'$ variables. Since $M$ is totally regular, $\rank(M') = |X'| = |Y'| \geq n/13$. However, we obtain an upper bound on the rank of $M'$ in the following claim which gives the desired contradiction. 
\begin{claim}
\label{claim:rankbound2}
$\rank(M') \leq 3|C| = o(n)$.
\end{claim}

\begin{proof}[Sketch]
Similar to the proof of Claim \ref{rankbound}. Partition the product gates of $\Phi'$ into $3$ types: those appearing in $A$, those appearing in $B$ and those appearing in $C$. We can show as in Claim \ref{split} that there exist $\leq|C|$ linear forms $l_1(\vary'), \ldots, l_m(\vary')$ such that for every product gate $g = \ell_g(\varx')\times \ell'_g(\vary')$ in $A$, $\ell'_g(\vary')$ can be expressed as a linear combination of $l_1(\vary'), \ldots, l_m(\vary')$. This implies that the rank of any linear combination of product gates in $A$ (and symmetrically $B$) is at most $|C|$. There are at most $|C|$ product gates in $C$ so the rank of any linear combination of them is also at most $|C|$. Since the output of $\Phi'$ is a linear combination of the product gates in it, we are done by subadditivity of rank.
\end{proof}

Claim \ref{claim:rankbound2} implies that $n/13 \leq  |X'| = \rank(M') \leq 3|C| = o(n)$, a contradiction for large $n$. Hence ${\sf size}(\Psi) = \Omega(n^2)$. 
\end{proof}



We note that the $\Omega(n^2)$ lower bound in Theorem \ref{thm:ro-planar-ckt} is tight for totally regular matrices. This is because  there exist totally regular matrices $A \in \mathbb{F}^{n \times n}$ that have circuits of size $O(n)$ computing the associated bilinear form (Lemma \ref{lem:weight-fn}). Planarizing this circuit using Lemma \ref{lem:planarization} gives us a \textit{read once} planar circuit of size $O(n^2)$ computing the same bilinear form. We also note that a lower bound for read-once planar circuits asymptotically better than Theorem \ref{thm:ro-planar-ckt} implies superlinear lower bounds for general circuits. This is a direct consequence of the planarization procedure: by Lemma \ref{lem:planarization} any circuit can be planarized with at most a quadratic blowup in size.

\begin{obs}
    Let $\mathbb{F}$ be any field and let $\{f_{n}\}_{n \geq 1}$ be a family of polynomials where $f_{n} \in \mathbb{F}[x_1, \ldots, x_n]$. Then, for any $\epsilon>0$, $C^r_p(f_n) = \Omega(n^{2+\epsilon})$ implies that $C(f_n) = \Omega(n^{1 + \epsilon/2})$.
\end{obs}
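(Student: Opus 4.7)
The plan is to prove the contrapositive. Assume $C(f_n) = o(n^{1+\epsilon/2})$; I will exhibit a read-once planar circuit for $f_n$ of size $o(n^{2+\epsilon})$, contradicting the hypothesis $C^r_p(f_n) = \Omega(n^{2+\epsilon})$. The argument has two steps: first make the circuit read-once while preserving size up to constants, then planarize using the quadratic-blowup procedure of Lemma \ref{lem:planarization}.

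First I would start with any fan-in $2$ circuit $\Phi$ of size $s = o(n^{1+\epsilon/2})$ computing $f_n$ (by Lemma \ref{fan-in} we may assume fan-in $\leq 2$ at linear cost). The read-once reduction is essentially free in the general (non-planar) setting: for each variable $x_i$, replace all input gates labelled by $x_i$ with a single input gate labelled by $x_i$ whose outgoing wires feed into the same places as before. This produces a (possibly non-planar) circuit $\Phi_1$ of size at most $s$ that computes the same polynomial and satisfies the read-once condition. Reapplying Lemma \ref{fan-in} if needed restores fan-in and fan-out $\leq 2$ at the cost of a factor of $7$.

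Next I would apply the planarization procedure from Lemma \ref{lem:planarization} to $\Phi_1$. This yields a planar circuit $\Phi_2$ of size $O(s^2) = o(n^{2+\epsilon})$ that computes $f_n$. Crucially, the crossover gadget used in the planarization introduces only internal arithmetic gates (it routes signals around a crossing without creating any new input gates), and therefore every variable still labels at most one input gate in $\Phi_2$. Thus $\Phi_2$ is a read-once planar arithmetic circuit of size $o(n^{2+\epsilon})$, giving $C^r_p(f_n) = o(n^{2+\epsilon})$ and contradicting the hypothesis.

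The only subtle point — and essentially the unique thing to verify — is that the planarization step preserves the read-once property. This is immediate from inspecting the crossover gadget, since it acts entirely on the internal (non-input) structure of the circuit. So there is no real obstacle beyond composing the two known constructions in the correct order (read-once first, then planarize).
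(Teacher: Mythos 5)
Your proposal is correct and is essentially the paper's own argument: the paper derives this observation directly from the planarization procedure (Lemma \ref{lem:planarization}), implicitly using the same two facts you spell out, namely that a general circuit can be made read-once for free by merging all leaves with the same variable label, and that the crossover gadget adds only internal gates and hence preserves the read-once property. One minor remark: the detour through Lemma \ref{fan-in} is unnecessary (and that lemma is stated for planar circuits anyway), since $C(f_n)$ already counts fan-in-$2$ circuits by definition and Lemma \ref{lem:planarization} needs only fan-in $2$, not bounded fan-out.
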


It is possible to construct explicit matrices over finite fields for which the bound in Theorem \ref{thm:ro-planar-ckt} holds. The idea is to construct matrices $n\times n$ each of whose $n/13 \times n/13$ submatrix has full rank, such matrices can be constructed from good codes. This has been observed multiple times in the literature, we reproduce below the relevant result from \cite{Lokam11} (\cite{Lokam11}, Theorem 2.7)

\begin{theorem}\label{lem:approx-totally-regular}
    Let $q = p^2$ for some prime $p \geq 17$ and let $\epsilon = 2/(\sqrt{q} - 1)$. Then for every $n$ large enough, there exists an explicit $n \times n$ matrix $M_n$ over $\mathbb{F}_{q}$ all of whose $n/13 \times n/13$ submatrices have rank $\Omega(n)$.
\end{theorem}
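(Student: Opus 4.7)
The plan is to construct $M_n$ explicitly as a Gram-like matrix $G^{\top} G$, where $G$ is the generator matrix of an algebraic-geometric (AG) code drawn from the Garcia--Stichtenoth tower of function fields over $\mathbb{F}_q$. This tower yields explicit $[n, k, d]_q$ codes saturating the Tsfasman--Vladut--Zink bound, and the Riemann--Roch analysis gives analogous control on the dual distance: for a curve of genus $g$ with evaluation divisor of degree $\alpha n$, one has $R \geq \alpha - g/n$, $\delta \geq 1 - \alpha$, and $\delta^{\perp} \geq \alpha - 2g/n$, with $g/n \to \epsilon_0 := 1/(\sqrt{q}-1)$ along the tower.

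Since $p \geq 17$ forces $\epsilon_0 \leq 1/16 < 1/13$, I fix a small constant $\gamma > 0$ with $2\epsilon_0 + \gamma < 2/13$ and choose an AG code with evaluation divisor of degree $\alpha n$ for $\alpha = 1/13 + 2\epsilon_0 + \gamma$. The resulting code has dimension $k = (1/13 + \epsilon_0 + \gamma)\, n$ and dual distance $d^{\perp} \geq (1/13 + \gamma)\, n > n/13$, so every $n/13$ columns of its generator matrix $G \in \mathbb{F}_q^{k \times n}$ are linearly independent.

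Set $M_n := G^{\top} G \in \mathbb{F}_q^{n \times n}$; this is explicit because $G$ is. For any $R', S' \subseteq [n]$ with $|R'| = |S'| = n/13$, the submatrix factors as $M_n[R', S'] = G[:, R']^{\top} \cdot G[:, S']$. Both factors are $k \times (n/13)$ matrices of full column rank $n/13$ by the dual distance property, so the Sylvester rank inequality gives
\[ \rank\bigl(M_n[R', S']\bigr) \;\geq\; \tfrac{n}{13} + \tfrac{n}{13} - k \;=\; \Bigl(\tfrac{1}{13} - \epsilon_0 - \gamma\Bigr)\, n \;=\; \Omega(n). \]

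The main obstacle is justifying the existence of explicit AG codes whose primal and dual parameters are both tight; this is the deep content of the Garcia--Stichtenoth tower combined with the duality theory of AG codes, which I invoke as a black box. Given that, the rest is pure parameter tuning: the hypothesis $p \geq 17$ corresponds precisely to $\epsilon_0 < 1/13$, i.e., $\sqrt{q} > 14$, leaving a positive gap for $\gamma$ and hence a rank that is a positive constant fraction of $n$. The factor of two in the theorem's $\epsilon = 2/(\sqrt{q}-1)$ is precisely the $2 \epsilon_0$ appearing in the dual distance bound $\delta^{\perp} \geq \alpha - 2g/n$, which is the true quantitative loss of the construction.
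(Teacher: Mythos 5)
Your construction is essentially correct, but note that the paper itself gives no proof of this statement: it imports it verbatim from Lokam's survey (\cite{Lokam11}, Theorem~2.7), where the matrix also comes from algebraic function fields over $\mathbb{F}_{p^2}$, with the rank defect of submatrices controlled by the genus via a direct Riemann--Roch argument. Your route is a genuine alternative in mechanism: you take a generator matrix $G$ of an evaluation code $C_L(D,\alpha n Q)$, use the designed distance of the dual (residue) code to get that every $n/13$ columns of $G$ are independent, and then pass to the Gram-type product $G^{\top}G$ with Sylvester's rank inequality. Quantitatively the two routes coincide: with the divisor degree chosen optimally as $a\approx n/13+2g$ your bound becomes $\rank \geq n/13-g-O(1)$, i.e.\ exactly a genus defect, and the hypothesis $p\geq 17$ is precisely what makes $g/n\to 1/(\sqrt{q}-1)\leq 1/16<1/13$. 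One bookkeeping remark: you peg the dimension $k$ to the asymptotic ratio $\epsilon_0$ rather than to the actual genus; this happens to be safe because the Drinfeld--Vladut bound forces $g\geq(\epsilon_0-o(1))n$ for any curve with at least $n$ rational points (otherwise $k=\alpha n-g+1$ could exceed your stated value and the Sylvester bound could degenerate), but it is cleaner to choose $a$ as a function of the actual genus of the curve you use.

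The one genuine gap is the ``for every $n$ large enough'' clause. The Garcia--Stichtenoth tower produces curves whose numbers of rational places are sparse (consecutive members differ in size by a factor of roughly $q$), and simply using $n$ of the points of the next-larger member destroys your margin, since then $g$ can be of order $qn/(\sqrt{q}-1)\gg n/13$; the black box ``explicit codes attaining TVZ'' is usually stated for an infinite sequence of lengths, not all lengths. You need either a denser explicit family or a patching step. A cheap fix at the matrix level: if $A\in\mathbb{F}_q^{m\times m}$ has the property and $m\geq n/K$ for a constant $K$ (e.g.\ $K\approx q$ from the tower), take $M_n$ to be the $n\times n$ matrix tiled by copies of $A$ (Kronecker product of an all-ones block pattern with $A$); any $n/13$ rows (resp.\ columns) of $M_n$ contain at least $m/13$ distinct rows (resp.\ columns) of $A$, so every $n/13\times n/13$ submatrix of $M_n$ contains an $m/13\times m/13$ submatrix of $A$ and hence has rank $\Omega(m)=\Omega(n)$. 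With that addition your argument yields the stated theorem.
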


One can carry out the argument in Theorem \ref{thm:ro-planar-ckt} using the matrices obtained from Theorem \ref{lem:approx-totally-regular} and get a quadratic bound over small finite fields:

\begin{theorem}\label{thm:ro-lb-smallfields}
    Let $q = p^2$ for a prime $p\geq 17$. There exists a family of explicit bilinear forms $\{\mathbf{y}^{T}M_n\mathbf{x}\}_{n \geq 1}$ where $M_n\in \mathbb{F_q}^{n \times n}$ such that $C_p^r(\mathbf{y}^TM_n\mathbf{x}) = \Omega(n^2)$.
\end{theorem}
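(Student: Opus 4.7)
The plan is to reuse the argument of Theorem \ref{thm:ro-planar-ckt} almost verbatim, with the matrix $M_n$ from Theorem \ref{lem:approx-totally-regular} replacing the abstract totally regular matrix. Total regularity is invoked in Theorem \ref{thm:ro-planar-ckt} for exactly one purpose: to guarantee that the submatrix $M'$ cut out by the Lipton--Tarjan partition has full rank $|X'| = |Y'| \geq n/13$. The weaker guarantee of Theorem \ref{lem:approx-totally-regular}, that every $n/13 \times n/13$ submatrix of $M_n$ has rank $\Omega(n)$, will suffice to drive the same contradiction through.

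First I would fix $M_n$ as in Theorem \ref{lem:approx-totally-regular}, assume toward a contradiction that some read-once planar circuit for $\vary^{T} M_n \varx$ has size $o(n^2)$, and pass via Lemma \ref{lemma:bil} to a read-once planar bilinear circuit $\Phi$ of the same asymptotic size. Applying Theorem \ref{partitionthm1} to the underlying graph with weight $1/(2n)$ on each variable-labeled leaf yields a partition $(A,B,C)$ with $|C| = o(n)$, and the same bookkeeping as in the proof of Theorem \ref{thm:ro-planar-ckt} extracts $X' \subseteq X$ and $Y' \subseteq Y$ with $|X'| = |Y'| \geq n/13$ whose corresponding inputs lie entirely in $A$ and in $B$ respectively. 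Zeroing out all other variables in $\Phi$ yields a read-once planar bilinear circuit $\Phi'$ computing $\vary'^{T} M' \varx'$, where $M'$ is the submatrix of $M_n$ indexed by $(Y', X')$.

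Next I would derive two contradictory rank estimates on $M'$. The upper bound $\rank(M') \leq 3|C| = o(n)$ follows directly from Claim \ref{claim:rankbound2}, whose proof is agnostic to the underlying matrix: it relies only on the separator property, the rank-$1$ structure of product gates in a bilinear circuit, and the fact that the output is a linear combination of product gates. For the lower bound, since $|X'|, |Y'| \geq n/13$, the matrix $M'$ contains some $\lceil n/13 \rceil \times \lceil n/13 \rceil$ submatrix, which is also a $\lceil n/13 \rceil \times \lceil n/13 \rceil$ submatrix of $M_n$ and hence has rank $\Omega(n)$ by Theorem \ref{lem:approx-totally-regular}. Combining the two gives $\Omega(n) \leq \rank(M') \leq o(n)$, a contradiction for all sufficiently large $n$, and so $C_p^r(\vary^{T} M_n \varx) = \Omega(n^2)$.

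The main thing to verify --- which I expect to be routine but is the only nontrivial check --- is that Claim \ref{claim:rankbound2} really is matrix-agnostic. Its proof lifts the argument of Claim \ref{rankbound} to the read-once planar setting: every product gate on the $A$-side of the separator can be written as $\ell(\varx')\cdot \ell'(\vary')$ where $\ell'(\vary')$ factors through at most $|C|$ linear forms supported on separator gates, and symmetrically for the $B$-side. No invocation of total regularity appears in that step; total regularity was only used in the matching rank \emph{lower} bound in the proof of Theorem \ref{thm:ro-planar-ckt}. Beyond this sanity check, the only new ingredient is the explicit construction imported from \cite{Lokam11}, so the theorem follows with minimal modification.
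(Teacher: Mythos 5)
Your proposal is correct and is essentially the paper's own argument: the paper proves Theorem \ref{thm:ro-lb-smallfields} by exactly this substitution, running the proof of Theorem \ref{thm:ro-planar-ckt} with the matrices of Theorem \ref{lem:approx-totally-regular}, whose $n/13 \times n/13$ rank guarantee was chosen precisely to replace total regularity in the rank lower bound, while Claim \ref{claim:rankbound2} (the separator-based rank upper bound) is indeed matrix-agnostic. Nothing further is needed.
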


Strassen's observation in Lemma \ref{lem:weight-fn} combined with Theorem \ref{thm:ro-planar-ckt} gives us the following quadratic separation between circuit complexity and read-once planar circuit complexity:

\begin{corollary}\label{cor:separation1}
    Over any infinite field $\mathbb{F}$, there exists an infinite family $\{M_n\}_{n \geq 1}$ of totally regular matrices (where $M_n \in \mathbb{F}^{n \times n}$) such that $C(\vary^{T}M_n\varx) = O(n)$ but $C^r_p(\vary^{T}M_n\varx) = \Omega(n^2)$.
\end{corollary}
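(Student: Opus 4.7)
The plan is to combine the existence of linear-size superconcentrators (Theorem \ref{thm:linearsup}) with Strassen's observation (Lemma \ref{lem:weight-fn}) to obtain the upper bound, and then invoke Theorem \ref{thm:ro-planar-ckt} for the lower bound.

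First, for each $n$, I would take the explicit $n$-superconcentrator $G_n = (V_n, E_n)$ of linear size guaranteed by Theorem \ref{thm:linearsup}. By Lemma \ref{lem:weight-fn}, since $\mathbb{F}$ is infinite, there exists a weight function $w \colon E_n \to \mathbb{F}$ such that if one labels the $n$ inputs of $G_n$ by $x_1, \ldots, x_n$ and treats every internal vertex and every output as a sum gate (with incoming wires scaled by $w$), then the $n$ linear forms $\ell_1, \ldots, \ell_n$ computed at the output gates are precisely the rows of some totally regular matrix $M_n \in \mathbb{F}^{n \times n}$. In particular, this weighted superconcentrator, viewed as an arithmetic circuit $\Gamma_n$ with multiple outputs, computes the linear transformation $M_n \varx$ and has size $O(n)$ (since the number of gates is at most the number of vertices of $G_n$, which is $O(n)$).

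To turn this into a (single-output) circuit for the bilinear form $\vary^{T} M_n \varx = \sum_{i=1}^{n} y_i \ell_i$, I would take $\Gamma_n$ and append an output layer consisting of $n$ product gates computing $y_i \cdot \ell_i$ for $i \in [n]$, and then sum them using a balanced binary tree of $n-1$ addition gates. This adds only $O(n)$ gates and wires, so the resulting circuit has size $O(n)$, giving $C(\vary^{T} M_n \varx) = O(n)$. Since $M_n$ is totally regular, Theorem \ref{thm:ro-planar-ckt} immediately implies $C^r_p(\vary^{T} M_n \varx) = \Omega(n^2)$, which completes the separation. There is no real obstacle here beyond noting that Lemma \ref{lem:weight-fn} delivers exactly the matrix we need, and that collapsing the multi-output circuit $\Gamma_n$ into a single-output bilinear-form circuit costs only $O(n)$ extra gates.
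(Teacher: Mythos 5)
Your proposal is correct and follows essentially the same route as the paper: the paper obtains this corollary by combining Ta-Shma's linear-size superconcentrators (Theorem \ref{thm:linearsup}) with Strassen's weighting observation (Lemma \ref{lem:weight-fn}) for the $O(n)$ upper bound, and Theorem \ref{thm:ro-planar-ckt} for the $\Omega(n^2)$ lower bound, exactly as you do. Your explicit construction of the single-output circuit (product gates $y_i\cdot \ell_i$ plus a summation tree, costing $O(n)$ gates) is the same step the paper carries out in the proof of Corollary \ref{cor:separation2}.
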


Here, we can use Ben-Or's trick from \cite{NW95} to get an almost quadratic separation for an {\em explicit} family polynomials of degree $4$. These polynomials are explicit in the sense that there is an algorithm which takes $n$ as input and outputs the $n$th polynomial in the family in $\poly(n)$ time. The idea (due to Ben-Or) is to introduce new variables on the wires of the circuit. In the sequel we need explicit depth-two superconcentrators which can be obtained from the following theorem of Ta-Shma:

\begin{theorem}[Ta-Shma \cite{Ta-Shma96}, Corollary 1.2]\label{thm:depthtwosup}
    For every $n$, there exist explicit $n$-superconcentrators of size $n^{1+o(1)}$ and depth two.
\end{theorem}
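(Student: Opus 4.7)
The plan is to construct, for each $n$, an explicit three-layer graph $G_n$ with $n$ input vertices, a middle layer $M$ of size $O(n)$, and $n$ output vertices, whose edges come from two explicit bipartite graphs with strong expansion glued along $M$. Every $I$-to-$O$ path in $G_n$ then has length two, so the depth bound is automatic, and if each bipartite graph has degree $d = n^{o(1)}$, then the total edge count is $O(nd) = n^{1+o(1)}$.

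For the building block, I would invoke an explicit construction of bipartite expanders $H = (U, W, E_H)$ with $|U| = n$, $|W| = \Theta(n)$, left-degree $d = n^{o(1)}$, and near-lossless expansion, meaning that for every $S \subseteq U$ with $|S| \leq \alpha n$ one has $|N_H(S)| \geq (1 - \epsilon) d |S|$. Such objects can be obtained either from algebraic constructions (e.g., Ramanujan-type or Parvaresh--Vardy-style codes) or from iterated applications of the zig-zag product. Set $G_n = H_1 \cup H_2$, where $H_1$ goes from the inputs $I$ to the middle layer $M$, and $H_2$ (read in reverse) goes from $M$ to the outputs $O$.

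To verify the $n$-superconcentrator property, fix $k \in [n]$ and arbitrary $k$-subsets $I' \subseteq I$ and $O' \subseteq O$. By Menger's theorem, it suffices to exhibit $k$ vertex-disjoint paths from $I'$ to $O'$, i.e., a choice of $k$ distinct middle vertices $m_1, \ldots, m_k$ together with perfect matchings $\phi : I' \to \{m_i\}$ in $H_1$ and $\psi : O' \to \{m_i\}$ in $H_2$. Equivalently, one seeks a perfect matching in the auxiliary bipartite graph on $I' \cup O'$ whose edges record ``shares a usable middle vertex.'' To invoke Hall's theorem, one shows that for every $T \subseteq I'$ the set of middle vertices reachable from $T$ in $H_1$ that \emph{also} reach $O'$ in $H_2$ has size at least $|T|$: near-lossless expansion of $H_1$ gives $|N_{H_1}(T)| \geq (1 - \epsilon) d |T|$, and a counting argument applied to the expansion of $H_2$ on the complement $O \setminus O'$ rules out all but a small fraction of these middle vertices from failing to connect to $O'$, leaving enough to satisfy Hall's condition. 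Vertex-disjointness of the resulting paths follows from the matching.

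The main obstacle is obtaining an explicit bipartite expander with degree $n^{o(1)}$ and near-lossless expansion of the required quality; this is where the technical core of the construction lies and is the step that a non-probabilistic proof has to work hardest for. Once such an expander is available, the Menger/Hall matching argument sketched above is essentially routine, and the final size and depth bounds fall out from counting edges across the two glued layers.
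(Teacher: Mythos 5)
The paper does not actually prove this statement: it is imported as a black box (Corollary 1.2 of Ta-Shma \cite{Ta-Shma96}), so the only obligation in the paper is the citation. Judged as a self-contained argument, your sketch has a genuine gap, in fact two. First, the parameters are internally inconsistent: with a middle layer $M$ of size $O(n)$ and left-degree $d = n^{o(1)}$ superconstant, near-lossless expansion $|N(S)| \geq (1-\epsilon)d|S|$ cannot hold for all $|S| \leq \alpha n$ with $\alpha$ constant, since for $|S| = \alpha n$ that would force $|M| \geq (1-\epsilon)d\alpha n$, far exceeding $O(n)$. Lossless expansion is only available for sets of size $O(|M|/d)$, so your Hall verification has no support for the scales $k$ between roughly $n/d$ and $n$. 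This is precisely why the known constructions (Wigderson--Zuckerman and Ta-Shma) partition the middle layer into $\Theta(\log n)$ blocks, one per scale of $k$, with different graphs and parameters per block, and route a given pair $(I',O')$ by an iterative matching argument across blocks rather than a single application of Hall's theorem.

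Second, and more fundamentally, gluing two good expanders along a common middle layer does not by itself yield a superconcentrator: expansion of $H_1$ and $H_2$ separately says nothing about $N_{H_1}(T)$ and $N_{H_2}(O')$ intersecting. In the extreme, $H_1$ could expand into one half of $M$ and $H_2$ into the other half, each an excellent lossless expander, and then there are no input-to-output paths at all. Relatedly, your counting step is false for small $k$: for $|O'| = k$ the set $N_{H_2}(O')$ has size about $dk$, a vanishing fraction of $|M|$, so almost all middle vertices do fail to connect to $O'$; nothing rules out that $N_{H_1}(T)$ lies entirely among them. What is actually needed is a disperser/extractor-type guarantee that neighborhoods of small sets are well spread, so that the two neighborhoods must overlap in a way that supports a matching for every scale $k$ simultaneously; constructing such objects explicitly with total size $n^{1+o(1)}$ is the technical core of Ta-Shma's result and is exactly the part your sketch leaves out. (As an aside, explicit constant-degree lossless expanders are known, so the expander itself is not the bottleneck you identify; the issue is that lossless expansion is the wrong property for this routing problem.)
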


\begin{corollary}\label{cor:separation2}
    Over any infinite field $\mathbb{F}$, there exists an explicit family $\{f_n\}_{n \geq 1}$ of degree $4$ polynomials, where $f_n$ is $n^{1 + o(1)}$-variate, such that $C(f_n) = n^{1+o(1)}$ but $C^r_p(f_n) = \Omega(n^2)$
\end{corollary}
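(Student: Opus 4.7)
The plan is to apply Ben-Or's trick on top of an explicit depth-two superconcentrator, in the same spirit as the proof sketch behind Corollary \ref{cor:separation}. Concretely, I would invoke Theorem \ref{thm:depthtwosup} to obtain an explicit depth-two $n$-superconcentrator $G$ of size $N = n^{1+o(1)}$. Label its $n$ inputs by $x_1,\ldots,x_n$ and its $n$ outputs by $v_1,\ldots,v_n$. Introduce a fresh variable $z_e$ for each edge $e$ of $G$, and build the natural arithmetic circuit: for each non-input vertex $v$, define $g_v = \sum_{e=(u \to v)} z_e \cdot g_u$ (with $g_u = x_u$ for inputs); finally set $f_n = \sum_{i=1}^n y_i \cdot g_{v_i}$. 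Because $G$ has depth two, each output polynomial $g_{v_i}$ has total degree $3$ in $(\varx,\mathbf{z})$ (one $x$-variable and two $z$-variables per source-to-sink path), so $f_n$ is a polynomial of total degree $4$ in $n^{1+o(1)}$ variables. Since $G$ is explicit, so is the family $\{f_n\}_{n \geq 1}$.

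The upper bound $C(f_n) = n^{1+o(1)}$ is immediate from the construction: the circuit just described has size $O(N) = n^{1+o(1)}$, and enforcing fan-in two via the balancing in Lemma \ref{fan-in} preserves this asymptotically.

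For the lower bound I would argue by contradiction. Suppose $\Phi'$ is a read-once planar circuit of size $o(n^2)$ computing $f_n$. By Lemma \ref{lem:weight-fn} applied to the superconcentrator $G$, over the infinite field $\mathbb{F}$ there exists an assignment $\alpha \in \mathbb{F}^{|E(G)|}$ of constants to the $z$-variables such that the linear forms obtained at $v_1,\ldots,v_n$ form the rows of a totally regular matrix $M \in \mathbb{F}^{n \times n}$. Then $f_n(\varx,\vary,\alpha) = \vary^{T} M \varx$. Substituting $z_e \leftarrow \alpha_e$ at the corresponding leaves of $\Phi'$ yields a planar circuit of size $\leq \size(\Phi') = o(n^2)$ computing $\vary^{T} M \varx$; since constant substitutions do not duplicate any $x$- or $y$-variable, this projected circuit remains read-once. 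This contradicts Theorem \ref{thm:ro-planar-ckt}, so $C_p^r(f_n) = \Omega(n^2)$.

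The two things to be careful about are mostly bookkeeping rather than deep steps: first, confirming that the depth-two property of $G$ really caps the total degree of $f_n$ at $4$ (the $y$ factor contributes degree $1$ and the two edges along any input-to-output path contribute degree $3$ in $(\varx,\mathbf{z})$); second, confirming that setting the $z$-leaves of $\Phi'$ to constants preserves read-onceness of the $x$- and $y$-variables. All the heavy lifting is outsourced to Theorem \ref{thm:depthtwosup} for the explicit depth-two superconcentrator, Lemma \ref{lem:weight-fn} for the existence of a totally regular projection, and Theorem \ref{thm:ro-planar-ckt} for the quadratic lower bound on read-once planar circuit complexity of bilinear forms defined by totally regular matrices.
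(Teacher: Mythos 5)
Your proposal is correct and follows essentially the same route as the paper's proof: build the degree-$4$ polynomial from an explicit depth-two superconcentrator (Theorem \ref{thm:depthtwosup}) via Ben-Or's edge-variable trick, then project the $z$-variables using Lemma \ref{lem:weight-fn} to obtain a totally regular bilinear form and invoke Theorem \ref{thm:ro-planar-ckt} for the contradiction. The bookkeeping points you flag (degree cap of $4$ and preservation of read-onceness under constant substitution) are exactly the ones the paper's argument relies on.
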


\begin{proof}
    In order to get the explicit family of polynomials, we start with an explicit depth 2 superconcentrator $G = (V, E)$ 
    with $n^{1+o(1)}$ edges from Theorem \ref{thm:depthtwosup}. For every edge $e \in E$ introduce a variable $z_e$. Now we construct an arithmetic circuit $\Psi$ with inputs $\{x_1, \ldots, x_n\}\cup\{y_1, \ldots, y_n\}\cup\{z_e\}_{e\in E}$ from $G$ as follows:
    \begin{enumerate}
        \item For the $n$ input vertices of $G$, add $n$ inputs in $\Psi$ labelled by $x_1, \ldots, x_n$ respectively. 
        \item For every internal vertex and output $v$ of $G$ add a sum gate $g_v$ in $\Psi$ whose inputs will be specified next.
        \item For every edge $e\in E$ going from $u \in V$ to $v \in V$, introduce a product gate $g_e$ one of whose children is $g_u$ and the other is a newly introduced input gate labelled by $z_e$. Let the output of $g_e$ feed into $g_v$.
        \item If $v_1, \ldots, v_n$ are the outputs of $G$, multiply $g_{v_1}, \ldots, g_{v_n}$ by $y_1, \ldots, y_n$ respectively and add them up. This introduces $\leq 3n$ new gates. Let the gate computing this sum be the output of $\Psi$.
        
    \end{enumerate}

    Clearly, the number of wires in $\Psi = n^{1 + o(1)}$ and so $\size(\Psi) = n^{1 + o(1)}$. The output of $\Psi$ is a polynomial in variables $\{x_1, \ldots, x_n\}\cup \{y_1, \ldots, y_n\} \cup \{z_e\}_{e\in E}$, there are $n^{1 + o(1)}$ of them. The degree of the output polynomial is $4$ since the depth of $G$ is $2$. Let $f(\mathbf{x}, \mathbf{y}, \mathbf{z})$ be the output of $\Psi$. Since $G$ was explicit and depth $2$, $f$ is also explicit and we have that $C(f) = n^{1 + o(1)}$. Now suppose we had a read-once planar circuit $\Phi$ computing $f$ such that $\size(\Phi) = o(n^2)$. Since $G$ is a superconcentrator, there exists $\alpha \in \mathbb{F}^{|E|}$ (by Lemma \ref{lem:weight-fn}) such that $f(\mathbf{x}, \mathbf{y}, \alpha) = \mathbf{\vary^{T}A\varx}$ for some totally regular matrix $A$. Projecting the $z$ variables in $\Phi$ to $\alpha$, we get a read-once planar circuit of size $o(n^2)$ computing $\mathbf{\vary^{T}A\varx}$. This contradicts Theorem \ref{thm:ro-planar-ckt}, and so ${\sf size}(\Phi) = \Omega(n^2)$.  
\end{proof}

In the boolean setting, read-once planar circuit complexity and formula complexity are known to be incomparable \cite{Turan95}. We show that this is the case in the arithmetic setting as well, although the separation we have is not as strong as in the boolean case. For this, we again use depth two superconcentrators. Here we note that one can build a bilinear formula for $\vary^{T}A\varx$ (for some totally regular $A$) from a depth two superconcentrator. This is done by Nisan and Wigderson in \cite{NW95}, we give a short, different proof here that suffices for our purposes:

\begin{claim}[Nisan and Wigderson \cite{NW95}]\label{smallformula}
        Over any infinite field $\mathbb{F}$, there exists a family of totally regular matrices $\{A_n\}_{n \geq 1}$ such that $L(\vary^{T}A_n\varx) \leq L^{b}(\vary^{T}A_n\varx) = n^{1 + o(1)}$
\end{claim}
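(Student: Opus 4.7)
The plan is to combine Ta-Shma's explicit depth-two superconcentrator (Theorem \ref{thm:depthtwosup}) with the Strassen/Valiant construction (Lemma \ref{lem:weight-fn}) and read off the bilinear formula directly from the depth-two structure, using the Nisan--Wigderson correspondence between depth-two linear circuits and bilinear formulas that the paper already mentioned.

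Concretely, I would first invoke Theorem \ref{thm:depthtwosup} to obtain, for every $n$, an explicit depth-two $n$-superconcentrator $G_n=(V,E)$ with $|E|=n^{1+o(1)}$. Denote its inputs by $I_1,\ldots,I_n$ and outputs by $O_1,\ldots,O_n$. I may assume without loss of generality that every input-to-output path passes through some middle layer $M_1,\ldots,M_K$ (any direct $I_i\to O_j$ edge can be rerouted through a dummy one-input middle gate, which blows up $|E|$ by at most a constant factor). Applying Lemma \ref{lem:weight-fn} to $G_n$ produces edge weights so that when every non-input vertex is treated as a sum gate, the outputs compute the coordinates of $A_n\varx$ for some totally regular matrix $A_n\in\mathbb{F}^{n\times n}$.

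Next, I would extract a factorization $A_n=UV$ from the three-layer structure: each middle gate $M_k$ computes a linear form $m_k(\varx)=v_k^T\varx$, where $v_k$ records the weights of the edges $I_i\to M_k$, and each output satisfies $\ell_j=\sum_k U_{jk}\,m_k$ where $U_{jk}$ is the weight of the edge $M_k\to O_j$. Taking the rows of $V$ to be the $v_k$'s and the columns of $U$ to be the $u_k=(U_{jk})_j$'s, the total number of nonzero entries of $U$ and $V$ is exactly $|E|=n^{1+o(1)}$. This immediately yields the bilinear formula
\[
\vary^T A_n\varx \;=\; \sum_{k=1}^{K}\bigl(\vary^T u_k\bigr)\bigl(v_k^T\varx\bigr),
\]
of size $n^{1+o(1)}$, showing $L^b(\vary^T A_n\varx)=n^{1+o(1)}$. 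The inequality $L(\vary^T A_n\varx)\le L^b(\vary^T A_n\varx)$ then holds up to an absolute constant, since the displayed sum-of-products converts into an ordinary arithmetic formula whose total gate count is linear in the number of nonzero entries of the $u_k$'s and $v_k$'s.

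I do not foresee any substantive obstacle: the construction is essentially a black-box reduction to the two quoted results. The only points that need a moment of care are the layering assumption on $G_n$ (used so that the middle-layer gates cleanly produce the linear forms $m_k$ and hence the factorization $A_n=UV$) and the verification that the depth-two bilinear formula size given by the construction equals the edge count of $G_n$; both are straightforward.
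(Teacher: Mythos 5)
Your proposal is correct and follows essentially the same route as the paper's own proof: take Ta-Shma's explicit depth-two superconcentrator (Theorem \ref{thm:depthtwosup}), apply Lemma \ref{lem:weight-fn} to make its outputs compute the rows of a totally regular matrix $A_n$, read off the factorization $A_n=UV$ from the two layers of edges, and write $\vary^TA_n\varx=\sum_k(\vary^Tu_k)(v_k^T\varx)$ of size $|E|=n^{1+o(1)}$. Your extra remarks (rerouting direct input-to-output edges through dummy middle vertices, and the constant-factor conversion from the bilinear formula to an ordinary formula) are minor points the paper glosses over and do not change the argument.
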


\begin{proof}
By Lemma \ref{thm:depthtwosup}, there exists an explicit depth two $n$-superconcentrator $G = (V = I\cup M\cup O, E)$ with $n$ inputs $I$, $n$ outputs $O$ and $k = n^{1 + o(1)}$ middle vertices $M$. By Lemma \ref{lem:weight-fn}, we can label input vertices $I$ of $G$ by $x_1,\ldots ,x_n$, the middle vertices $M$ and outputs $O$ of $G$ by addition gates and the edges by constants from $\mathbb{F}$ (if $\mathbb{F}$ is large enough) such that the outputs of the resulting circuit $\Phi(G)$ compute linear forms $l_1^{T}\mathbf{x}, \ldots, l_n^{T}\mathbf{x}$ where the vectors $l_1^{T}, \ldots, l_n^{T}$ are the rows of a totally regular matrix $A$. Since $G$ has depth $2$, this gives us a factorization $A = UV$ such that $U \in \mathbb{F}^{n \times k}$,  $V \in \mathbb{F}^{k \times n}$ and $|U| + |V| \leq |E| = O(n^{1 + o(1)})$ (where $|M|$ denotes the number of non-zero entries of a matrix $M$). Let the columns of $U$ be $u_1, \ldots, u_k$ and the rows of $V$ be $v_1^{T}, \ldots, v_k^{T}$. It is easy to see that $\sum_{i = 1}^{k}\mathbf{y}^{T} (u_i v_i^{T}) \mathbf{x}$ is a bilinear formula computing $\vary^{T}A\varx$ and the size of the formula is at most $|E| = n^{1 + o(1)}$, the number of edges in $G$. 
\end{proof}

\begin{corollary}\label{lemma:f-and-ro}
Over any infinite field $\mathbb{F}$, there exists an explicit family of polynomials $\{f_n\}_{n \geq 1}$ where $f_n$ is a polynomial in $n^{1 + o(1)}$ variables of degree $4$ such that $L(f_n) = n^{1 + o(1)}$ but $C^r_p(f_n) = \Omega(n^2)$.
\end{corollary}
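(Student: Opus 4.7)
The plan is to mimic the construction in Corollary \ref{cor:separation2} but to arrange the polynomial so that it admits an explicit, small \emph{formula} (in addition to a small circuit), while still collapsing to a bilinear form $\vary^{T}A\varx$ on a totally regular $A$ after a suitable projection of the auxiliary $z$-variables.

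First, I would invoke Theorem \ref{thm:depthtwosup} to obtain, for each $n$, an explicit depth-two $n$-superconcentrator $G = (I \cup M \cup O, E)$ with $|I| = |O| = n$, $|M| \leq n^{1+o(1)}$ and $|E| = n^{1+o(1)}$. Label the inputs in $I$ by $x_1, \ldots, x_n$, the outputs in $O$ by $y_1, \ldots, y_n$, and introduce a fresh variable $z_e$ for each edge $e \in E$. Define
$$f_n(\varx, \vary, \mathbf{z}) \;=\; \sum_{v \in M}\Bigl(\sum_{u \in I\,:\,(u,v)\in E} x_u\, z_{(u,v)}\Bigr)\Bigl(\sum_{w \in O\,:\,(v,w)\in E} y_w\, z_{(v,w)}\Bigr).$$
This is explicit, has $n^{1+o(1)}$ variables, and is of degree $4$. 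It is essentially the polynomial computed by the depth-two circuit built from $G$ in the proof of Corollary \ref{cor:separation2}, but written in a way that immediately exhibits a formula structure.

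For the formula upper bound, I would read the expression above as a tree: the root is a sum with $|M|$ summands, each summand is the product of two ``linear parts'' whose total leaf count equals the degree of the corresponding middle vertex. The total number of leaves is therefore $2|E|$, and the total number of internal gates is $O(|E|)$, giving $L(f_n) = n^{1+o(1)}$. Note that although $x_i$ and $y_j$ may appear multiple times as leaves, that is allowed in a formula; the underlying graph is still a tree, so this really is a formula, not a general circuit.

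For the lower bound, I would apply Strassen's observation (Lemma \ref{lem:weight-fn}) to the superconcentrator $G$ to obtain constants $\alpha \in \mathbb{F}^{|E|}$ such that the substitution $\mathbf{z} \mapsto \alpha$ turns the circuit naturally associated with $f_n$ into one whose outputs are the rows of a totally regular matrix $A_n$; equivalently, $f_n(\varx, \vary, \alpha) = \vary^{T} A_n \varx$. Now if one had a read-once planar circuit for $f_n$ of size $o(n^2)$, projecting the $z$-leaves to the corresponding entries of $\alpha$ would yield a read-once planar circuit of size $o(n^2)$ for $\vary^{T} A_n \varx$, since constant substitutions preserve both planarity and the read-once condition on the remaining $\varx, \vary$ variables. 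This would contradict Theorem \ref{thm:ro-planar-ckt}. The one point that deserves care is confirming that the projection genuinely yields a read-once planar circuit in the $\varx, \vary$ variables (each $x_i$ and each $y_j$ still labels at most one leaf) — but this is inherited directly from the read-once assumption on the original circuit, since substitution only relabels $z$-leaves with constants and removes no $\varx, \vary$ leaves.
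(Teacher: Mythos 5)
Your proposal is correct and follows essentially the same route as the paper: the same explicit degree-$4$ polynomial built from Ta-Shma's explicit depth-two superconcentrator with a fresh $z_e$ per edge (as in Corollary \ref{cor:separation2}), and the same lower-bound argument of projecting $\mathbf{z}\mapsto\alpha$ via Lemma \ref{lem:weight-fn} to a bilinear form $\vary^{T}A_n\varx$ with $A_n$ totally regular and invoking Theorem \ref{thm:ro-planar-ckt}, noting that the substitution preserves planarity and read-onceness. The only cosmetic difference is that you exhibit the $n^{1+o(1)}$-size formula directly as a sum over middle vertices of products of two linear parts, which is exactly the sum-of-products structure the paper obtains via Claim \ref{smallformula}.
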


\begin{proof}
    The family in question is exactly the one from the previous corollary (Corollary \ref{cor:separation1}).

    By Theorem \ref{thm:ro-planar-ckt}, $C^{r}_{p}(\vary^{T}A\varx) = \Omega(n^2)$ for any totally regular matrix $A$. By Claim \ref{smallformula}, there exists a totally regular $A$ that is computed by a (bilinear) formula of size $n^{1 + o(1)}$. This gives a separation for a (non-explicit) bilinear form.\\

    To get an explicit polynomial for which the separation holds, repeat the procedure used in the proof of Corollary \ref{cor:separation2}, i.e., introduce a new variable for every edge in the depth $2$ superconcentrator. The resulting polynomial will have degree $4$.
\end{proof}

In the other direction, we have the following easy separation:

\begin{lemma}\label{lemma:easy}
    There exists a family of polynomials $\{f_n\}_{n \geq 1}$ where $f_n$ is a polynomial in $n$ variables of degree $n$ such that $C^r_p(f_n) = O(n\log n)$ but $L(f_n) = \Omega(n^2)$.
\end{lemma}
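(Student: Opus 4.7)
The plan is to exhibit the separation with $f_n = x_1^n + x_2^n + \cdots + x_n^n$, which is exactly the polynomial already used for item~\ref{item3} of the earlier corollary. The only work is to verify that the $O(n\log n)$ construction witnessing the upper bound there can be taken to be read-once, and then to quote the known $\Omega(n^2)$ formula lower bound.

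For the upper bound, I would compute each $x_i^n$ by repeated squaring: place $x_i$ at a single leaf and raise it to the $n$-th power using a binary tree of $\lceil \log_2 n\rceil$ multiplication gates, and then combine the $n$ resulting monomials $x_1^n,\ldots,x_n^n$ using a balanced binary tree of addition gates. The resulting object is in fact a formula (hence planar), each variable $x_i$ labels exactly one leaf, and the size is $O(n\log n)$. This gives $C^r_p(f_n)=O(n\log n)$.

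For the lower bound $L(f_n)=\Omega(n^2)$ I would simply cite the ingredient already used in item~\ref{item3}: Chatterjee et al.\ \cite{CKSV22} prove that any ABP computing $x_1^n+\cdots+x_n^n$ has size $\Omega(n^2)$, and since a formula of size $s$ can be converted into an ABP of size $O(s)$, this immediately yields a matching formula lower bound.

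I do not expect any real obstacle. The upper bound is a one-line repeated-squaring construction that is manifestly read-once and planar (each $x_i$ lies at the root of a disjoint multiplication subtree, and the $n$ subtrees are joined by a top-level addition tree), so the only thing to check is that no variable is reused anywhere, which is immediate from the construction. The lower bound is a direct quotation of a known result already invoked in the earlier corollary.
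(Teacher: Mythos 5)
Your choice of polynomial and overall plan coincide with the paper's ($f_n=\sum_{i=1}^n x_i^n$, repeated squaring for the upper bound, an $\Omega(n^2)$ formula lower bound), but your upper-bound step as written is flawed, and in fact self-contradictory. A repeated-squaring computation of $x_i^n$ from a \emph{single} leaf labelled $x_i$ necessarily reuses intermediate gates (each squaring gate takes the previous gate as both of its inputs), so the underlying graph has fan-out $2$ and parallel edges and is not a tree; conversely, a genuine binary tree with $\lceil\log_2 n\rceil$ product gates has only $O(\log n)$ leaves, hence computes a polynomial of individual degree $O(\log n)$ in $x_i$ and cannot compute $x_i^n$. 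So the object you describe cannot simultaneously be a formula, be read-once, and be correct; indeed, if $f_n$ had a formula of size $O(n\log n)$, that would contradict the very bound $L(f_n)=\Omega(n^2)$ you prove two lines later. The repair is exactly what the paper intends: keep the repeated-squaring \emph{circuit} (chains with doubled edges, which are planar), hang the $n$ disjoint chains off a balanced addition tree, and observe that the result is a read-once \emph{planar circuit} of size $O(n\log n)$ --- this gives $C^r_p(f_n)=O(n\log n)$ without ever claiming it is a formula.

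For the lower bound your route differs from the paper's: you invoke the $\Omega(n^2)$ ABP lower bound of Chatterjee et al.\ \cite{CKSV22} together with the standard formula-to-ABP simulation. That is correct, but it imports a heavy external result. The paper argues directly and elementarily: since $\deg_{x_i} f_n = n$ and, in a formula, the individual degree in $x_i$ of the computed polynomial is at most the number of leaves labelled $x_i$, every $x_i$ must label at least $n$ leaves, so $L(f_n)\geq n^2$. This counting argument keeps the lemma self-contained and is worth adopting once you fix the upper-bound step.
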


\begin{proof}
    Consider $f(x_1, \ldots, x_n) = \sum_{i = 1}^{n}x_i^n$. Since the degree of each $x_i$ in $f$ is $n$, in any formula computing $f$ each $x_i$ must appear $n$ times and so $L(f) \geq n^2$. On the other hand, $C^{r}_p = O(n \log n)$, the usual circuit that raises each $x_i$ to the $n$th power using $\log n$ product gates and then adds all the products is planar.
\end{proof}

It would be interesting to see a multilinear polynomial which has small read-once planar circuit complexity and large formula complexity. Together, Lemma \ref{lemma:easy} and Corollary \ref{lemma:f-and-ro} show that the complexity measures $C^r_p$ and $L$ are incomparable.

\subsection{Lower Bounds for Planar Algebraic Branching Programs}


As mentioned in the introduction, planar algebraic branching programs can be converted into planar arithmetic circuits without blowup in size. We describe this conversion in the following lemma:

\begin{lemma}
\label{lemma:abptockt}
    Let $A$ be a planar ABP with $v$ vertices and $e$ edges computing $f(x_1, \ldots, x_n)\in\mathbb{F}[x_1,\ldots,x_n]$. Then there exists a planar circuit $\Phi$ (with indegree of each gate $\leq 2$) computing $f$ with ${\sf size}(\Phi) = O(v)$. 
\end{lemma}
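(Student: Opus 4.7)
The plan is the standard forward-pass conversion from ABP to circuit, executed in a way that preserves the planar embedding. For each ABP vertex $u$, we will introduce a gate $P_u$ in the circuit $\Phi$ that computes the sum of weights of all $s$-to-$u$ paths in $A$; then $P_t = f$. The defining recurrence is $P_v = \sum_{(u,v) \in E(A)} \ell_{u,v} \cdot P_u$, where $\ell_{u,v}$ is the label of edge $(u,v)$. So at each ABP vertex $v$ we need one product gate per incoming edge (multiplying $P_u$ by $\ell_{u,v}$) and one sum gate (of fan-in equal to the in-degree of $v$) adding these products. Whenever $\ell_{u,v}$ is a variable we attach a fresh leaf labeled by that variable to the product gate; constant labels can be absorbed onto the wire.

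To preserve planarity I would fix a planar embedding of $A$ and perform all replacements locally. For each ABP edge $e = (u,v)$ choose a small open disc $D_e$ in the interior of the drawn curve of $e$, so that the $D_e$ are pairwise disjoint and disjoint from the vertex positions. Inside $D_e$ place the product gate $G_e$ for $e$: one input wire enters $D_e$ along the edge curve from $u$, the output wire exits along the curve toward $v$, and if $\ell_{u,v}$ is a variable we place a leaf labeled by it in a face of the embedding adjacent to $e$ (one always exists) and connect it to $G_e$ by a short wire that stays inside $D_e$ and the adjacent face, crossing nothing. For each ABP vertex $v$ choose a small open disc $D_v$ disjoint from all $D_e$'s and containing no other vertex; inside $D_v$ place the sum gate for $v$, which receives the incoming product wires in the cyclic order induced by the embedding, so no crossings are introduced. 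For the source $s$ we set $P_s$ to the constant $1$.

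Finally I would apply Lemma~\ref{fan-in} to reduce any high-fan-in sum gate (inside its disc $D_v$) to a balanced binary tree of sum gates; since the replacement is a planar tree that fits inside $D_v$, planarity is preserved. This yields a fan-in $\leq 2$ planar circuit $\Phi$ computing $f$, with one (possibly expanded) sum gate per ABP vertex, one product gate per ABP edge, and at most one fresh leaf per ABP edge, for a total of $O(v + e)$ gates. Because $A$ is planar, Euler's formula gives $e = O(v)$, and therefore $\size(\Phi) = O(v)$, as required.

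The main obstacle I anticipate is the planarity bookkeeping: one must check that placing leaf inputs for variable-labeled edges and expanding high-fan-in sum gates into local subtrees can be done simultaneously without any wire of the new circuit crossing another. This reduces to the local observation that each $D_e$ has an adjacent face into which a leaf can be dropped and each $D_v$ is an open disc in which an arbitrary planar tree of sum gates can be drawn with its external wires arranged in the prescribed cyclic order; once this is verified, the size bound is immediate from planarity of $A$.
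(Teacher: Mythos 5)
Your proposal is correct and follows essentially the same route as the paper's proof: place a product gate on (a subdivision of) each edge with an attached leaf for its label, turn ABP vertices into sum gates, expand high fan-in via balanced binary trees locally (which preserves planarity), and bound the size by $e = O(v)$ using planarity. The only cosmetic difference is that you keep the source as a constant-$1$ gate while the paper deletes the source and turns its out-neighbours directly into leaves labelled by the corresponding edge labels; both yield the same $O(v)$ bound.
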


\begin{proof}
    Let $G = (V, E)$ be the planar graph of the ABP $A$ computing $f$. Let $s$ be its source and $t$ the sink. Construct a circuit $\Phi$ from $G$ as follows:

    \begin{enumerate}
        \item Remove the source $s$ from the graph and let every $u \in N^+(s)$ be a leaf of the circuit labelled by $[s,u]$ ($N^+(s)$ denotes the out neighbourhood of $s$, $[u, v]$ denotes the label of the edge $uv$ in $A$).
        \item Let every $v \in V \setminus (\{s\}\cup N^+(s))$ be a sum gate.
        \item Subdivide every edge $e = (u, v)$ of $G$ and let the newly introduced vertex $v_e$ be a product gate. Attach a leaf $l_e$ to $v_e$ labelled by $[u, v]$. So $v_e$ is a product gate with predecessors $l_e$ and $u$.
    \end{enumerate}

It is easy to see that the resulting circuit $\Phi'$ with output $t$ computes the polynomial $f(x_1, \ldots, x_n)$ and ${\sf size}(\Phi') \leq v - 1 + 2e = O(v + e) = O(v)$ (Note: $G$ is planar, so $e = O(v)$). Furthermore, since $G$ is planar,  $\Phi'$ is also planar. Now covnert $\Phi'$ into a fan-in $2$ circuit $\Phi$ by replacing the incoming wires at every gate by a balanced binary tree. Again, this preserves planarity. We have introducing at most $4e$ additional gates while going from $\Phi'$ to $\Phi$ and so ${\sf size}(\Phi) = O(v)$. Clearly $\Phi$ computes $f$, and we are done.
\end{proof}

We can now use the lower bound for planar circuits from Theorem \ref{thm:planar-ckt} to get a lower bound for planar ABPs:

\begin{theorem}\label{abplowerbound}
    Let $\mathbf{M}\in \mathbb{F}^{n \times n}$ be a totally regular matrix. Then, $A_p(\vary^{T}M\varx) = \Omega(n \log n)$.
\end{theorem}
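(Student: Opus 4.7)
The plan is to reduce Theorem \ref{abplowerbound} directly to the planar circuit lower bound of Theorem \ref{thm:planar-ckt} via the ABP-to-circuit conversion from Lemma \ref{lemma:abptockt}. Specifically, suppose $A$ is a planar ABP of size $v$ computing the bilinear form $\vary^{T}M\varx$. By Lemma \ref{lemma:abptockt}, we obtain a planar arithmetic circuit $\Phi$ (with fan-in at most $2$) computing the same polynomial with $\size(\Phi) = O(v)$. Since $M$ is totally regular, Theorem \ref{thm:planar-ckt} yields $\size(\Phi) = \Omega(n \log n)$, and hence $v = \Omega(n \log n)$.

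The only content needed beyond invoking the two results is to check that the hypotheses of Lemma \ref{lemma:abptockt} are satisfied here, which they are: the ABP is planar by assumption, and the bilinear form $\vary^{T}M\varx$ is a polynomial in $2n$ variables, so the lemma produces a planar circuit of size linear in that of the ABP. Combining the linear-size conversion with the $\Omega(n \log n)$ lower bound finishes the proof in essentially one line.

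There is no real obstacle to overcome. The only subtle point is that the lower bound of Theorem \ref{thm:planar-ckt} is stated for planar arithmetic circuits with no restriction on fan-in, so the fact that Lemma \ref{lemma:abptockt} produces a fan-in $2$ circuit (and in particular one whose size is linear in the number of vertices of the ABP) is more than sufficient. The same argument would work verbatim for any reasonable notion of ``planar ABP,'' including the unlayered version advertised in the statement of Theorem \ref{thm:planar-abp}, since Lemma \ref{lemma:abptockt} makes no use of layeredness.
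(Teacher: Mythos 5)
Your proposal is correct and matches the paper's proof exactly: the paper also derives Theorem \ref{abplowerbound} by combining the planarity-preserving ABP-to-circuit conversion of Lemma \ref{lemma:abptockt} with the $\Omega(n\log n)$ planar circuit lower bound of Theorem \ref{thm:planar-ckt}. Your added remark that the conversion does not use layeredness is consistent with the paper's claim that the bound holds for unlayered planar ABPs.
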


\begin{proof}
    Combining  Lemma \ref{lemma:abptockt} with the lower bound for planar arithmetic circuits we get the desired result.
\end{proof}

By modifying Strassen's construction a bit (Lemma \ref{lem:weight-fn}), it is possible to establish separations between ABP complexity and planar ABP complexity:

\begin{lemma}\label{lem:smallabp}

Over any infinite field $\mathbb{F}$, there exists an infinite family $\{M_n\}_{n\geq 1}$ of totally regular matrices (where $M_n\in \mathbb{F}^{n \times n}$) such that $A(\vary^{T}M_n\varx) = O(n)$.
    
\end{lemma}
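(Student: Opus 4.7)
The plan is to mirror the Valiant--Strassen construction (Lemma \ref{lem:weight-fn} and Corollary \ref{cor:separation}) but to realize the weighted superconcentrator as an ABP rather than as a linear circuit. By prepending a source that emits the $x_j$'s and appending a sink that absorbs the $y_i$'s, the sum of path-weight products through the network will equal precisely $\vary^{T}M_n\varx$.

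Concretely, I would first invoke Theorem \ref{thm:linearsup} to obtain an explicit $n$-superconcentrator $G=(V,E)$ of size $|V|=O(n)$, and then use Lemma \ref{lem:weight-fn} to fix an edge weight function $w:E\to\mathbb{F}$ for which the associated linear circuit (inputs $x_1,\ldots,x_n$, internal vertices and outputs as $+$-gates, each edge scaling by its weight) computes at its $n$ outputs the linear forms $\ell_i(\varx)=\sum_j (M_n)_{ij}\,x_j$, where $M_n\in\mathbb{F}^{n\times n}$ is totally regular. I would then build an ABP $P$ by adjoining a fresh source $s$ and a fresh sink $t$ to $G$: add edges $s\to u_j$ labeled $x_j$ for each input vertex $u_j$ of $G$, retain all internal edges of $G$ with their constant labels $w(e)$, and add edges $v_i\to t$ labeled $y_i$ for each output vertex $v_i$ of $G$. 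The resulting $P$ has $|V|+2=O(n)$ vertices, single source $s$ and single sink $t$.

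For correctness, every $s$-to-$t$ path in $P$ decomposes uniquely as $s\to u_j\to\cdots\to v_i\to t$, carrying label-product $x_j\cdot(\text{product of internal weights along }u_j\rightsquigarrow v_i)\cdot y_i$. Summing over all $u_j\rightsquigarrow v_i$ paths inside $G$ yields $x_j\,(M_n)_{ij}\,y_i$ by the defining property of $w$, and summing over all pairs $(i,j)$ then gives $\sum_{i,j}(M_n)_{ij}\,y_i x_j=\vary^{T}M_n\varx$. Hence $A(\vary^{T}M_n\varx)\leq |V(P)|=O(n)$.

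The only technicality I expect is that the ABP definition in the preliminaries is phrased for layered DAGs, whereas the construction above only guarantees a single-source, single-sink DAG. This is consistent with the rest of the paper, since the matching lower bound Theorem \ref{thm:planar-abp} is explicitly stated for not-necessarily-layered ABPs, so the measure $A(\cdot)$ is clearly intended to cover that case. If strict layering is insisted upon, one can layerify $P$ at a cost of a $\text{depth}(G)=\poly(\log\log n)$ factor in the vertex count, yielding the (still near-linear) bound $A(\vary^{T}M_n\varx)=n^{1+o(1)}$; reaching a clean $O(n)$ bound would then require starting from a sufficiently shallow explicit linear-size superconcentrator, which is the one point where some care is needed.
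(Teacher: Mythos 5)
Your construction is exactly the one the paper uses: take the explicit linear-size superconcentrator from Theorem \ref{thm:linearsup}, attach a source with edges labelled $x_j$ and a sink with edges labelled $y_i$, and invoke Lemma \ref{lem:weight-fn} to choose constant labels on the internal edges so that the path-sum equals $\vary^{T}M_n\varx$ for a totally regular $M_n$, giving an $O(n)$-vertex ABP. Your closing remark about layering is a fair observation about the preliminaries, but the paper treats $A(\cdot)$ as covering unlayered ABPs (as in Theorem \ref{thm:planar-abp}) and does not address it either, so the proofs match in substance.
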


\begin{proof}
    We start with an explicit $n$-superconcentrator $G=(V, E)$ of linear size, ie, $|V|, |E| = O(n)$ (see Theorem \ref{thm:linearsup}). We add two new vertices $s, t$ to $G$, connect $s$ to every input of $G$ and $t$ to every output. For every $i\in[n]$ label the $i$th edge out of $s$ by $x_i$ and the $i$th edge into $t$ by $y_i$. Now note that the internal nodes behave exactly like sum gates in a linear circuit, so we may label each edge by a constant from $\mathbb{F}$ such that the resulting ABP computes $\vary^{T}M\varx$ and $M\in\mathbb{F}^{n\times n}$ is totally regular (by Lemma \ref{lem:weight-fn}). 
\end{proof}

Combining Lemma \ref{lem:smallabp} and Theorem \ref{abplowerbound} we get the desired separation:

\begin{corollary}
    Over any infinite field $\mathbb{F}$, there exists an infinite family $\{M_n\}_{n\geq 1}$ of totally regular matrices (where $M_n\in \mathbb{F}^{n \times n}$) such that $A(\vary^{T}M_n\varx) = O(n)$ but $A_p(\vary^{T}M_n\varx) = \Omega(n\log n)$.
\end{corollary}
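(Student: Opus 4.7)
The plan is straightforward: the corollary is obtained by combining the two immediately preceding results. First, I would invoke Lemma \ref{lem:smallabp} to extract, for every $n$, a totally regular matrix $M_n \in \mathbb{F}^{n\times n}$ together with an ABP of size $O(n)$ computing the bilinear form $\vary^{T} M_n \varx$. This directly yields the upper bound $A(\vary^{T} M_n \varx) = O(n)$ on the (unrestricted) ABP complexity. The construction used there builds the ABP by taking an explicit linear-size superconcentrator (Theorem \ref{thm:linearsup}), prepending a new source $s$ with edges labelled $x_1,\ldots,x_n$ to the inputs, appending a new sink $t$ with edges labelled $y_1,\ldots,y_n$ from the outputs, and using Strassen's observation (Lemma \ref{lem:weight-fn}) to choose constants on the internal edges so that the resulting $s$-to-$t$ weighted-path-sum equals $\vary^{T} M_n \varx$ for a totally regular $M_n$.

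Second, I would feed the \emph{same} family $\{M_n\}_{n \geq 1}$ into Theorem \ref{abplowerbound}. Because each $M_n$ is totally regular, that theorem immediately gives $A_p(\vary^{T} M_n \varx) = \Omega(n \log n)$. Combining the $O(n)$ upper bound with this $\Omega(n\log n)$ planar lower bound produces the claimed superlinear separation between ABP and planar ABP complexity for this explicit family of bilinear forms.

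There is essentially no technical obstacle here: the corollary is a one-line glueing of Lemma \ref{lem:smallabp} and Theorem \ref{abplowerbound}. The only point that requires any care at all is making sure that the very same family $\{M_n\}$ simultaneously witnesses both bounds, and this is automatic because Lemma \ref{lem:smallabp} produces its matrices via the Strassen weight argument, which by design yields \emph{totally regular} matrices — precisely the hypothesis needed to invoke Theorem \ref{abplowerbound}. Thus the proof proposal reduces to: fix $\{M_n\}$ to be the family from Lemma \ref{lem:smallabp}, read off $A(\vary^{T} M_n \varx) = O(n)$ from that lemma, and read off $A_p(\vary^{T} M_n \varx) = \Omega(n\log n)$ from Theorem \ref{abplowerbound}.
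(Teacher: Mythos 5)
Your proposal is correct and matches the paper exactly: the corollary is obtained by taking the totally regular family from Lemma \ref{lem:smallabp} for the $O(n)$ ABP upper bound and applying Theorem \ref{abplowerbound} to the same family for the $\Omega(n\log n)$ planar ABP lower bound. Nothing further is needed.
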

\section{Lower Bounds for Multi-Output Planar Circuits and Partial Derivative Complexity}
\label{sec:multi-output}
In the case of multi-output planar circuits, it is possible to show lower bounds better than $\Omega(n\log n)$. As mentioned earlier, an $\Omega(n^2)$ lower bound on the size of any read-once planar circuit computing $M\varx$ (for any totally regular $M$) follows from the work of Valiant \cite{Val75} and Lipton and Tarjan \cite{LT77}. In this section we show an $\Omega(n^{4/3})$ lower bound on the size of (not necessarily read-once) planar circuits computing $M\varx$ and an $\Omega(n^2/\log n)$ lower bound on the size of multi-output formulas computing $M\varx$. We use (resp.) Theorem \ref{partitionthm3} and Lemma \ref{partitionlemma4} for these lower bounds. Similar statements for multi-output planar boolean circuits are proved in \cite{Turan95}, \cite{Savaga84} using crossing sequence arguments.
\subsection{Lower Bounds for Multi-Output Circuits}

\begin{theorem}\label{multiop}
    Let $M\in \mathbb{F}^{n\times n}$ be any totally regular matrix. Then $C_p(M\varx) = \Omega(n^{4/3})$.
\end{theorem}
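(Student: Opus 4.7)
The plan is to combine the rank argument from Theorems \ref{thm:planar-ckt} and \ref{thm:ro-planar-ckt} with Savage's partition lemma (Theorem \ref{partitionthm3}), applied to the set of output gates. The key preliminary observation is a \emph{linearization at the coefficient level}: for every gate $v$ of a circuit $\Phi$ for $M\varx$, let $c^v \in \mathbb{F}^n$ denote the vector of coefficients of the degree-$1$ part of the polynomial computed at $v$. A direct calculation shows that $c^v$ is always an $\mathbb{F}$-linear combination of the $c^u$ over the predecessors $u$ of $v$: for sum gates this is immediate, and for a product gate $v=uw$ one has $c^v = c^u_0 \cdot c^w + c^w_0 \cdot c^u$, where $c^u_0, c^w_0$ are constant terms. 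At a variable leaf $x_k$ we have $c^v = e_k$; at a constant leaf $c^v = 0$; and at the $j$-th output, $c^{y_j}$ is the $j$-th row $M_j$ of $M$.

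Next I would apply Theorem \ref{partitionthm3} to the underlying (undirected) planar graph of $\Phi$ with $V'$ being the $n$ output gates and parameter $p = \lfloor n/(300\sqrt{s}) \rfloor$, where $s = \size(\Phi)$. If $s \geq n^2/90000$ the bound is immediate, so assume $p \geq 1$. The theorem yields a partition $V_1, \ldots, V_p$ of $V$ with $|O_i| := |V' \cap V_i| \geq n/(4p) \geq 75\sqrt{s}$, together with separators $S_i$ of size $\leq 60\sqrt{s}$ such that no edge joins $V_i$ to $V \setminus (V_i \cup S_i)$. Let $X_i \subseteq [n]$ be the set of variable indices labelling at least one leaf in $V_i$. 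Induction on depth inside $V_i$ (using that every predecessor of a gate in $V_i$ lies in $V_i \cup S_i$) then yields: for every $v \in V_i$, $c^v$ lies in the $\mathbb{F}$-span of $\{e_k : k \in X_i\} \cup \{c^s : s \in S_i\}$.

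Specialising to the outputs $j \in O_i$ and projecting onto the coordinates $[n] \setminus X_i$ (which zeroes out the $e_k$ contributions) shows that the submatrix $M|_{O_i,\ [n] \setminus X_i}$ has rank at most $|S_i|$. Total regularity of $M$ forces this rank to equal $\min(|O_i|, n - |X_i|)$. Since by our choice of $p$ we have $|O_i| \geq 75\sqrt{s} > 60\sqrt{s} \geq |S_i|$, the minimum must be attained by the second term, and hence $|X_i| \geq n - |S_i| \geq n - 60\sqrt{s}$ for each $i$. To conclude, I would observe that each variable leaf of $\Phi$ lies in exactly one $V_i$, so $\sum_i |X_i|$ is bounded above by the total number of variable leaves, which is at most $s$. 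Putting everything together (and using $p \geq n/(600\sqrt{s})$ and $n - 60\sqrt{s} \geq n/2$, valid once $s \leq n^2/14400$, which we may assume),
\[
s \;\geq\; \sum_{i=1}^{p}|X_i| \;\geq\; p(n - 60\sqrt{s}) \;\geq\; \frac{n}{600\sqrt{s}} \cdot \frac{n}{2} \;=\; \frac{n^2}{1200\sqrt{s}},
\]
which rearranges to $s = \Omega(n^{4/3})$.

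The only subtle point I anticipate is the linearization step: intermediate gates of $\Phi$ can compute genuinely non-linear polynomials, so one cannot naively talk about ``the linear form at each gate''. The $c^v$ viewpoint bypasses this by treating each gate as carrying a length-$n$ coefficient vector and noting that the recurrence governing these vectors is $\mathbb{F}$-linear, completely independent of the nonlinear contributions. Once this is in place, the rest is a tuning of the partition parameter $p$ and is routine.
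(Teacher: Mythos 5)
Your proposal is correct, and it reaches the paper's bound by a genuinely different mechanism for exploiting total regularity, even though the skeleton (Savage's partition theorem, Theorem \ref{partitionthm3}, applied with $V'$ the set of output gates) is the same. The paper converts total regularity into a connectivity statement: Claim \ref{disjpaths} shows via Menger's theorem that the circuit graph must admit many vertex-disjoint paths from variable-leaf sets to outputs (generalizing Valiant's superconcentrator observation), then picks $p=(|V|+n)/n$, counts input gates to locate a single block $V_i$ containing at most $n/2$ inputs but $\omega(n^{2/3})$ outputs, and derives a contradiction because the $\omega(n^{2/3})$ disjoint paths into that block would all have to cross a separator of size $o(n^{2/3})$. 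You instead keep total regularity as a rank statement: the degree-one coefficient vectors $c^v$ propagate $\mathbb{F}$-linearly through the circuit (your observation that $c^v=c^u_0c^w+c^w_0c^u$ at product gates is exactly the point that makes this sound despite nonlinear intermediate gates), so the rows of $M$ indexed by outputs in a block $V_i$ are spanned, outside the coordinates $X_i$ read inside the block, by the at most $|S_i|$ separator vectors; total regularity then forces $|X_i|\geq n-|S_i|$, and summing the leaf counts over \emph{all} blocks with $p\approx n/\sqrt{s}$ gives $s\gtrsim p\cdot n$, i.e.\ $s=\Omega(n^{4/3})$. Your route avoids Menger entirely and is closer in spirit to the span argument of Sub-Claim \ref{split} used for the bilinear-form bounds, which makes the multi-output result feel more of a piece with the rest of the paper; it also yields the quantitatively sharper intermediate statement that every block of the partition must read nearly all $n$ variables. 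The paper's connectivity route, on the other hand, needs only the graph of the circuit (not the algebra of coefficient vectors) once Claim \ref{disjpaths} is in place, and that claim is reused verbatim for the formula bound in Theorem \ref{multiop-for}. Your parameter bookkeeping ($p=\lfloor n/(300\sqrt{s})\rfloor$, the dichotomies $s\geq n^2/90000$ and $s\leq n^2/14400$, and $\sum_i|X_i|\leq s$ since the $V_i$ partition the leaves) is sound, so I see no gap.
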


\begin{proof}
    Let $\Phi$ be a planar circuit computing $M\varx$ with outputs gates $O_1, \ldots, O_n$ and let $L_1\sqcup\ldots\sqcup L_n$ be its set of leaves labelled by variables, where $L_i$ contains all the leaves labelled by $x_i$. Because $M$ is totally regular the following statement is true:

    \begin{claim}\label{disjpaths}

        $\forall k \in [n]$, $\forall$ subsets $\{O_{i_1}, \ldots, O_{i_k}\}$ of $k$ outputs of $\Phi$, $\forall$ $k$ sets $L_{j_1}, \ldots, L_{j_k}$, $\exists$ a permutation $\pi:[k]\rightarrow[k]$ such that $\forall t\in [k]$, $\exists$ a path $P_t$ from an input in $L_{j_{t}}$ to $O_{\pi(i_t)}$ and the paths $P_1, \ldots, P_k$ are vertex disjoint.
        
    \end{claim}

    \begin{proof}

     This is an immediate generalization of Valiant's observation that the graph of a (read-once) circuit computing $A\varx$ is an $n$-superconcentrator \cite{Val75} (for any totally regular $A$). The idea is that if the maximum number of vertex disjoint paths from $L_{j_1}, \ldots, L_{j_k}$ to $O_{i_1}, \ldots, O_{i_k}$ was less than $k$, then by Menger's theorem \cite{Menger1927} there would exist a $(\{L_{j_1}\cup \ldots\cup L_{j_k}\},\{O_{i_1}\cup \ldots\cup O_{i_k}\})$ cut of size strictly less than $k$, which would in turn imply that the minor of $M$ whose rows are indexed  by $O_{i_1}, \ldots, O_{i_k}$ and columns by $x_{j_1}, \ldots, x_{j_k}$ cannot be full rank.
    \end{proof}
     Now we can apply Theorem \ref{partitionthm3} to get the desired lower bound:\\

    Let $G=(V, E)$ be the underlying undirected graph of $\Phi$. Now suppose ${\sf size}(\Phi) = |V| = o(n^{4/3})$ (otherwise we're done). Apply Theorem \ref{partitionthm3} to $G$ with $V' = \{O_1, \ldots, O_n\}$ and $p = (|V| + n)/n = o(n^{1/3}) < n = |V'|$. Let $V_1, \ldots, V_p$ be the partition so obtained and let $S_1\ldots, S_p$ be the corresponding separators. Note that the number of input gates of $\Phi$ is at most $(|V| + n)/2$: If $\Phi$ has $t$ inputs each with indegree $0$ and outdegree $\geq 1$, and $k$ non-input gates each with indegree $2$ out of which $\leq n$ (the outputs) have outdegree $0$ and the rest have outdegree $\geq 1$ (so that $t + k = |V|$) then $|E| = \sum_{v}d^{-}(v) = 2k$ and $|E| = \sum_{v}d^{+}(v) \geq |V| - n$. So $k \geq (|V| - n)/2 \implies t \leq (|V| + n)/2$. Here $d^{-}(v)$ and $d^{+}(v)$ denote the indegree and outdegree of $v$ resp.\\

    So there must exist a $V_i\in\{V_1, \ldots, V_p\}$ that contains at most $n/2$ input gates for otherwise the number of inputs of $\Phi$ would be $>(|V| + n)/2$. Let $X_i \subseteq \{x_1,\ldots ,x_n\}$ be a set of $n/2$ variables not appearing in $V_i$. Let $X_i' \subseteq X_i$ be such that $|X_i'| = \Omega(n)$ and all leafs labelled by $X_i'$ appear in $V \setminus (V_i \cup S_i)$. Such a set exists since $|S_i| \leq 60\sqrt{|V|} = o(n^{2/3})$. Let $A_i$ be the set of outputs in $V_i$. Then $|A_i| \geq \dfrac{n}{4p} = \omega(n^{2/3})$. By Claim \ref{disjpaths}, there must be at least $|A_i| = \omega(n^{2/3})$ vertex disjoint paths from the sets ($L_j$'s) of leaves corresponding to variables in $X_i'$ to outputs in $A_i$. All these paths must go through $S_i$. But as we saw before, $|S_i| = o(n^{2/3})$. This is a contradiction, and so ${\sf Size}(\Phi) = \Omega(n^{4/3})$.
\end{proof}

For proving  better lower bounds for multi-output formulas we first prove the improved partition lemma (Lemma \ref{partitionlemma4}).
For this purpose, it will be convenient to work with partition trees. We identify the vertices of a binary tree with binary strings in the natural way: the root is identified with $\epsilon$, the empty string, and the left and right children of a vertex $\alpha$ are identified with $\alpha0, \alpha1$ respectively. A partition tree $T$ for a set $V$ is a binary tree each of whose vertices are labelled by subsets of $V$ (we say that a vertex $\alpha$ of $T$ is labelled by the subset $V_\alpha$). $V_{\epsilon}$ is labelled with $V$ and for every non-leaf vertex $\alpha$ of $T$, $(V_{\alpha0}, V_{\alpha1})$ forms a partition of $V_{\alpha}$. It is easy to see that the subsets labelling the leaves of $T$ form a partition of $V$. We restate and prove Lemma \ref{partitionlemma4} in the language of partition trees: 

\begin{lemma}\label{partitionlemma5}(Equivalent to Lemma \ref{partitionlemma4})
Let $F = (V, E)$ be a forest, $V' \subseteq V$ be a subset of its vertices and let $1 \leq p \leq |V'|$. Then there exists a partition tree $T_p$ of $V$ with $p$ leaves such that the following conditions hold:

\begin{enumerate}
    \item For all leaves $\alpha$ of $T_p$, $\dfrac{|V'|}{3p}\leq |V'\cap V_{\alpha}|\leq \dfrac{3|V'|}{p}$
    \item For all vertices $\alpha$ of $T_p$ there exists a set $S_{\alpha}$ such that $|S_{\alpha}| \leq O(\log (|V'|))$ and no edge joins $V_{\alpha}$ and $V \setminus (V_{\alpha} \cup S_{\alpha})$. \hfill $\square$
\end{enumerate}
\end{lemma}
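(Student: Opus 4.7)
The plan is to construct the partition tree $T_p$ inductively on $p$, starting from the trivial tree $T_1$ (a single root labelled by $V$) and obtaining $T_p$ from $T_{p-1}$ by splitting a single leaf into two. To make the induction carry the right information, I would attach to every vertex $\alpha$ not just the set $V_\alpha$ but also a refinement $V_\alpha = W_\alpha \sqcup U_\alpha$, where $W_\alpha$ is the ``active'' portion still sitting inside one connected region of $F$ and $U_\alpha$ collects the separator vertices inherited from ancestors along the root-to-$\alpha$ path.

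The main tool I would need is a coloured balanced separator lemma for forests: for any forest $H = (V,E)$ with a distinguished subset $V' \subseteq V$, there exist an absolute constant $k$ and a partition $(A,B,C)$ of $V$ with $|A|,|B| \leq \tfrac{2}{3}|V|$, $|V'\cap A|,|V'\cap B| \leq \tfrac{2}{3}|V'|$, $|C| \leq k$, and $C$ an $(A,B)$-separator. This is standard, provable by a centroid argument on trees combined with a small overhead to balance the ``colour'' $V'$ in addition to $V$; I would cite it from the literature (it appears in Bhatt--Leighton and in the closely related work of Babai et al.\ used by Tur\'an) or reprove it in a few lines.

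Given this tool, the inductive step is the following. Pick the leaf $\alpha$ of $T_{p-1}$ that maximises $|V'\cap V_\alpha|$, apply the coloured separator lemma to the subforest of $F$ induced by $W_\alpha$ to get $(A_\alpha,B_\alpha,C_\alpha)$, and set $W_{\alpha 0} = A_\alpha$, $W_{\alpha 1} = B_\alpha$. Next, distribute $U_\alpha \cup C_\alpha$ between $U_{\alpha 0}$ and $U_{\alpha 1}$ in such a way that $|V'\cap V_{\alpha j}| \leq |V'\cap V_\alpha|$ for $j \in \{0,1\}$; this is possible because each of $A_\alpha,B_\alpha$ has $V'$-weight at most $\tfrac{2}{3}|V'\cap W_\alpha|$, leaving room to absorb the at most $k$ extra vertices. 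A short induction then shows that the $V'$-weights of any two leaves of $T_p$ differ by a factor of at most three, which together with the fact that their weights sum to $|V'|$ yields the bounds $|V'|/(3p) \leq |V'\cap V_\alpha| \leq 3|V'|/p$ of item~1.

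For item~2, I would set $S_\alpha$ to be the union of $C_{\alpha^i}$ taken over the strict prefixes $\alpha^i$ of $\alpha$; by construction $U_\alpha \subseteq S_\alpha$, and a routine induction on $|\alpha|$ shows that $S_\alpha$ separates $V_\alpha$ from $V \setminus (V_\alpha \cup S_\alpha)$ in $F$. The shrinking property $|V'\cap W_\alpha| \leq (\tfrac{2}{3})^{|\alpha|}|V'|$, guaranteed by the coloured balance, bounds the depth of $T_p$ by $O(\log|V'|)$, whence $|S_\alpha| \leq k \cdot O(\log|V'|) = O(\log|V'|)$ as required. The one delicate point I expect is the bookkeeping of the $(W_\alpha,U_\alpha)$ refinement and the precise distribution of $U_\alpha \cup C_\alpha$ among the children: both the $V'$-balance invariant and the separator property must be preserved simultaneously, and this forces the choice of coloured separator lemma (balancing $V'$, not just $V$) rather than any convenient balanced separator. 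Once the invariants are laid out, the remainder is a straightforward induction.
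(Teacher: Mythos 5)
Your construction follows the paper's proof essentially step for step: a partition tree built by repeatedly splitting the leaf of maximum $V'$-weight, the bookkeeping $V_\alpha = W_\alpha \sqcup U_\alpha$, the coloured separator lemma for forests (the paper cites the same sources), the separators $S_\alpha = \bigcup_{i} C_{\alpha^{i}}$ over the prefixes of $\alpha$, and a depth bound of $O(\log |V'|)$. However, there is a genuine gap in the one step you yourself flag as delicate: the rule for distributing $U_\alpha \cup C_\alpha$ among the two children. You only require $|V'\cap V_{\alpha j}| \leq |V'\cap V_\alpha|$ for $j\in\{0,1\}$, which is vacuous since $V_{\alpha j}\subseteq V_\alpha$. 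The paper's proof requires the distribution to be chosen so that $|V'\cap V_{\alpha j}| \leq \tfrac{2}{3}\,|V'\cap V_\alpha|$ for \emph{both} children, and this stronger invariant is what drives both conclusions. For item~1, the $\tfrac{2}{3}$ upper bound on each child is exactly what gives the $\tfrac{1}{3}$ lower bound on the other child, hence $|V'\cap V_\alpha| \leq 3|V'\cap V_{\alpha j}|$; combined with the maximality of the split leaf this yields the factor-$3$ comparability of all leaves. Under your stated constraint, all of $U_\alpha\cup C_\alpha$ may be handed to one child and none to the other, producing a leaf $\beta$ with $|V'\cap V_\beta| = 0$ and destroying the lower bound $|V'|/(3p)$. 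For item~2, your depth bound rests on the decay of $|V'\cap W_\alpha|$, but the leaf chosen for splitting is the one maximizing $|V'\cap V_\alpha|$, and under your weak rule the entire $V'$-weight can sit inside $U_\alpha$ and be passed intact to a single child at every level; that branch then keeps being selected, the tree can reach depth $\Theta(p)$, and $|S_\alpha|$ can be $\Omega(p)$ rather than $O(\log |V'|)$. The paper's depth argument instead uses $1 \leq |V'\cap V_\alpha| \leq (2/3)^{|\alpha|}\,|V'|$ for split nodes, which follows only from the $\tfrac{2}{3}$-balance of the full sets $V_\alpha$, not of the $W_\alpha$'s alone.

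The repair is precisely the observation you make in passing: since $|V'\cap A_\alpha|, |V'\cap B_\alpha| \leq \tfrac{2}{3}|V'\cap W_\alpha|$, there is enough slack to distribute the vertices of $U_\alpha \cup C_\alpha$ (one at a time, always to the currently lighter side) so that both children end up with $V'$-weight at most $\tfrac{2}{3}|V'\cap V_\alpha|$. With this stronger invariant substituted for your stated one, your ``short induction'' for the factor-$3$ property and the $O(\log|V'|)$ bound on $|S_\alpha|$ go through exactly as in the paper; without it, both fail.
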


\begin{proof}
    We need the following well known lemma (see \cite{BL82}, \cite{BCLR86})

\begin{lemma}
 \label{coloredp}
        Let $F = (V,E)$ be a forest and let $V'\subseteq V$.
        Then there exists a constant $k$ and a partition $(A, B, C)$ of $V$ such that $|A|\leq 2|V|/3, |B| \leq 2|V|/3$, $|C| \leq k$ and $|V' \cap A| \leq 2|V'|/3$,  $|V' \cap B| \leq 2|V'|/3$. Furthermore, all paths from $A$ to $B$ contain a vertex from $C$.
\end{lemma}
    
    Now, let $F=(V, E)$ be a forest and let $V'\subseteq V$. The only difference between the proof of Theorem \ref{partitionthm3} in \cite{Savaga84} and Lemma \ref{partitionlemma5} is the use of Lemma \ref{coloredp} (see \cite{Turan95}, Lemma 4 for an exposition of a similar result using the partition tree terminology we use here). We construct, by induction on $p$, a partition tree $T_p$ with $p$ leaves such that partition induced by it's leaves satisfies the required properties. For each vertex $\alpha$ of $T_p$, in addition to $V_\alpha$ we maintain also a partition $(W_\alpha, U_\alpha)$ of $V_\alpha$. The case when $p=1$ is trivial, here $V_\epsilon = V$, $W_{\epsilon} = V$, $U_\epsilon = \emptyset$ and $S_{\epsilon} = \emptyset$. Now suppose $2 \leq p \leq |V'|$ and we have constructed $T_{p - 1}$. Pick a leaf $\alpha$ of $T_{p - 1}$ that such that $|V'\cap V_{\alpha}|$ is maximized. To get $T_p$ from $T_{p - 1}$ we attach two leaves $\alpha0, \alpha1$ to $\alpha$. To get a complete description of $T_p$ we must define $V_{\alpha0} = W_{\alpha0} \sqcup U_{\alpha0}$  and $V_{\alpha1} = W_{\alpha1} \sqcup U_{\alpha1}$. Applying Lemma \ref{coloredp} to the subforest of $F$ induced by $W_{\alpha}$, we get a partition $(A_{\alpha}, B_{\alpha}, C_{\alpha})$. Define $W_{\alpha0} = A_{\alpha}$, $W_{\alpha1} = B_{\alpha}$ and define $(U_{\alpha0}, U_{\alpha1})$ to be a partition of $U_\alpha \cup C_{\alpha}$ such that $$|V'\cap (W_{\alpha0}\cup U_{\alpha0})|\leq  \dfrac{2}{3}|V'\cap V_{\alpha}| \text{ and}$$ $$|V'\cap (W_{\alpha1}\cup U_{\alpha1})|\leq  \dfrac{2}{3}|V'\cap V_{\alpha}|$$

    These definitions imply that the following hold:

    \begin{enumerate}
        \item For any two leaves $\beta, \gamma$ of $T_{p}$, $|V'\cap V_{\beta}| \leq 3|V'\cap V_{\gamma}|$
        \begin{proof}
            For any two leaves $\beta, \gamma$ that were also leaves in $T_{p - 1}$, the claim holds by induction on $p$. For $\beta \neq \alpha0, \alpha1$, $|V'\cap V_{\beta}| \leq |V' \cap V_\alpha|$ because we picked $\alpha$ that maximizes $V'\cap V_\alpha$, and $|V'\cap V_\alpha| \leq 3|V'\cap V_{\alpha0}|, 3|V'\cap V_{\alpha1}|$ due to Lemma \ref{coloredp}. Therefore $|V'\cap V_\beta| \leq 3|V'\cap V_{\alpha i}|$ for $i = 0, 1$. On the other hand, $3|V' \cap V_\beta| \geq |V' \cap V_{\alpha}| \geq |V'\cap V_{\alpha i}|$ for $i = 0, 1$ where the first inequality follows by induction and the second because $V_{\alpha i}\subseteq V_\alpha$.
        \end{proof}

        \item For every vertex $\alpha$ in $T_p$, $W_{\alpha} \leq (2/3)^{|\alpha|}|V|$
        
        \begin{proof}
            By the parameters of Lemma \ref{coloredp} and the definition of the $W_{\alpha}$'s, $|W_{\alpha i}| \leq 2|W_{\alpha}|/3$ for $i = 0, 1$. Since $W_\epsilon = V$, the claim follows.
        \end{proof}
        
        \item For every inner vertex $\alpha$ of $T_{p}$, $|C_{\alpha}| \leq k$. This follows from condition $3$ of lemma \ref{coloredp}.

        \item For every vertex $\alpha$ of $T_{p}$, let $S_{\alpha} = \bigcup_{i = 1}^{|\alpha| - 1}C_{\alpha^{i}}$ where $\alpha^{i}$ is the prefix of $\alpha$ of length $i$. Then $S_{\alpha}$ is a $V_{\alpha}$, $V\setminus(V_{\alpha}\cup S_{\alpha})$ separator. 

        \begin{proof}
            Consider a vertex $v$ in $V\setminus (V_\alpha \cup S_\alpha)$. Suppose it lies in $V_\beta$ where $\beta \neq \alpha$. The least common ancestor of $\alpha, \beta$ in $T_p$ is $\alpha^i$ for some $i < |\alpha|$. Then, $C_{\alpha^i}$ separates $v$ from $V_\alpha$.
        \end{proof}

        \item For every vertex $\alpha$, we have that $|S_{\alpha}| = O(\log |V'|)$. 

        \begin{proof}

            The maximal length of any root to leaf path in $T_{p}$ must be $O(\log |V'|)$. Consider a longest root to leaf path and let $\alpha$ be the penultimate node on the path, before the leaf. It holds that $1 \leq |V'\cap V_\alpha|$ since at every stage of the induction, we pick $\alpha$ that maximizes $|V'\cap V_\alpha|$. But also, $|V'\cap V_{\alpha}| \leq |V'|(2/3)^{|\alpha|}$. So, $|\alpha| = O(\log |V'|)$.
            On the other hand, $|C_{\alpha}| \leq k$ for every inner vertex $\alpha$ of $T_{p}$ and so by the union bound, the claim follows.
        \end{proof}
        This implies condition $2$ of lemma \ref{partitionlemma4}. \qedhere
    \end{enumerate}
\end{proof}

We can now prove, as promised, the improved lower bound for multi output formulas:

\begin{theorem}\label{multiop-for}
    Let $M\in \mathbb{F}^{n\times n}$ be any totally regular matrix. Then $L(M\varx) = \Omega(n^2/\log n)$.
\end{theorem}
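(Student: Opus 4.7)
The plan is to mirror the proof of Theorem \ref{multiop}, exploiting the fact that the underlying undirected graph of a multi-output formula is a \emph{forest} (not merely planar). This lets us use the forest partition result (Lemma \ref{partitionlemma4}/\ref{partitionlemma5}), whose separators have size $O(\log n)$ rather than $O(\sqrt{|V|})$, yielding the improved bound $\Omega(n^2/\log n)$.

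Let $\Phi$ be a multi-output formula computing $M\varx$, and let $F = (V,E)$ be its underlying undirected forest. By the standard transformation (analogous to Lemma \ref{fan-in}, but now trees of pass-through gates preserve the forest property) we may assume every gate in $\Phi$ has fan-in $\leq 2$, with only a constant-factor blowup in size. Note that Claim \ref{disjpaths} applies unchanged, since its proof invokes only total regularity of $M$ together with Menger's theorem and never uses planarity. Suppose for contradiction that $|V| \leq c\,n^2/\log n$ for a sufficiently small constant $c > 0$; in particular $|V| \leq n^2$, so that $p := \lceil (|V|+n)/n \rceil$ satisfies $1 \leq p \leq n$. The counting argument from the proof of Theorem \ref{multiop} (which used only fan-in $\leq 2$ and the fact that outputs have out-degree $0$) bounds the number of input gates of $\Phi$ by $(|V|+n)/2$.

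Now apply Lemma \ref{partitionlemma4} to $F$ with $V' := \{O_1,\ldots,O_n\}$ and with the chosen $p$, obtaining a partition $V_1,\ldots,V_p$ of $V$ together with separators $S_1,\ldots,S_p$ satisfying $|S_i| = O(\log n)$ and $|V_i \cap V'| \geq n/(3p)$ for every $i$. By averaging over the parts, some $V_i$ contains at most $(|V|+n)/(2p) \leq n/2$ leaves, so at least $n/2$ variables from $\{x_1,\ldots,x_n\}$ have no leaf in $V_i$; of these, at most $|S_i| = O(\log n)$ can have a leaf in $S_i$, so a subset $X_i' \subseteq \{x_1,\ldots,x_n\}$ of size $\Omega(n)$ has all of its labelled leaves in $V \setminus (V_i \cup S_i)$. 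Let $A_i$ denote the set of outputs in $V_i$, so $|A_i| \geq n/(3p)$.

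By Claim \ref{disjpaths} there exist $|A_i|$ vertex-disjoint paths from the leaves labelled by variables in $X_i'$ to the outputs in $A_i$, and every such path must cross the separator $S_i$. Hence $|A_i| \leq |S_i| = O(\log n)$, which combined with $|A_i| \geq n/(3p)$ forces $n/p = O(\log n)$. Since $p = \lceil (|V|+n)/n \rceil$, this rearranges to $|V| = \Omega(n^2/\log n)$, contradicting the assumption once $c$ is chosen small enough. The main technical ingredient is the forest partition lemma with logarithmic separators; once Lemma \ref{partitionlemma4}/\ref{partitionlemma5} is in hand, the rest of the argument is a direct adaptation of the planar-circuit proof, with $O(\sqrt{|V|})$ replaced throughout by $O(\log n)$.
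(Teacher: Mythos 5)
Your proposal is correct and follows essentially the same route as the paper's proof: the forest partition lemma (Lemma \ref{partitionlemma4}) with $O(\log n)$ separators, the input-counting bound of $(|V|+n)/2$, the choice of a part $V_i$ with few inputs, and Claim \ref{disjpaths} forcing $\omega(\log n)$ vertex-disjoint paths through $S_i$. The only cosmetic differences (taking $p$ as a ceiling, phrasing the contradiction as $n/p = O(\log n)$) do not change the argument.
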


\begin{proof}

Let $\Phi$ be a multi-output formula computing $M\varx$ with outputs gates $O_1, \ldots, O_n$ and let $L_1\sqcup\ldots\sqcup L_n$ be its set of leaves labelled by variables, where $L_i$ contains all the leaves labelled by $x_i$. Observe that claim \ref{disjpaths} continues to hold.

Let $F=(V, E)$ be the underlying undirected forest of $\Phi$. Now suppose ${\sf size}(\Phi) = |V| = o(n^2/\log n)$ (otherwise we're done). Apply lemma \ref{partitionlemma4} to $F$ with $V' = \{O_1, \ldots, O_n\}$ and $p = (|V| + n)/n = o(n/\log n) < n = |V'|$. Let $V_1, \ldots, V_p$ be the partition so obtained and let $S_1\ldots, S_p$ be the corresponding separators. Note that the number of input gates of $\Phi$ is at most $(|V| + n)/2$: If $\Phi$ has $t$ inputs each with indegree $0$ and outdegree $\geq 1$, and $k$ non-input gates each with indegree $2$ out of which $\leq n$ (the outputs) have outdegree $0$ and the rest have outdegree $= 1$ (so that $t + k = |V|$) then $|E| = \sum_{v}d^{-}(v) = 2k$ and $|E| = \sum_{v}d^{+}(v) \geq |V| - n$. So $k \geq (|V| - n)/2 \implies t \leq (|V| + n)/2$. Here $d^{-}(v)$ and $d^{+}(v)$ denote the indegree and outdegree of $v$ respectively.\\

So there must exist a $V_i\in\{V_1, \ldots, V_p\}$ that contains at most $n/2$ input gates for otherwise the number of inputs of $\Phi$ would be $>(|V| + n)/2$. Let $X_i \subseteq \{x_1,\ldots ,x_n\}$ be a set of $n/2$ variables not appearing in $V_i$. Let $X_i' \subseteq X_i$ be such that $|X_i'| = \Omega(n)$ and all leafs labelled by $X_i'$ appear in $V \setminus (V_i \cup S_i)$. Such a set exists since $|S_i| = O(\log(|V'|)) = O(\log n)$. Let $A_i$ be the set of outputs in $V_i$. Then $|A_i| \geq \dfrac{n}{4p} = \omega(\log n)$. By Claim \ref{disjpaths}, there must be at least $|A_i| = \omega(\log n)$ vertex disjoint paths from the sets ($L_j$'s) of leaves corresponding to variables in $X_i'$ to the outputs in $A_i$. All these paths must go through $S_i$. But as we saw before, $|S_i| = O(\log n)$. This is a contradiction, and so ${\size}(\Phi) = \Omega(n^{2}/\log n)$.
\end{proof}

Note that this bound is optimal up to the $\log n$ factor: any linear transformation can be computed by an $O(n^2)$ size multi-output formula (by constructing disjoint formulas for each linear form).

\subsection{Complexity of Partial Derivatives}

We now turn our attention to partial derivative complexity. Baur and Strassen\cite{BS83} proved that for any polynomial $f\in\mathbb{F}[x_1, \ldots, x_n]$, $C(\partial_{x_1}(f), \ldots, \partial_{x_n}(f)) = O(C(f))$. That is, if there exists a (fan-in $2$) circuit of size $s$ computing $f$ then there exists another multi-output (fan-in $2$) circuit of size $O(s)$ that simultaneously computes all the first order partial derivatives of $f$. We observe that an analogous result \textit{cannot} hold for formulas and planar circuits, while it does hold for read-once planar circuits. \autoref{multiop-for} combined with the construction in Claim \ref{smallformula} immediately gives the following lower bound on the partial derivative complexity of multi-output formulas: 

\begin{corollary}
\label{obs-pds1}
    Over any infinite field $\mathbb{F}$ there exists a family $\{f_n\}_{n\geq 1}$ of polynomials where $f_n \in \mathbb{F}[x_1, \ldots, x_n]$ such that $L(f_n) = n^{1 + o(1)}$ but $L(\partial_{x_1}(f_n), \ldots, \partial_{x_n}(f_n)) = \Omega(n^2/\log n)$. 
\end{corollary}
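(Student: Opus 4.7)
The plan is to directly combine the two ingredients highlighted by the authors: the existence (Claim \ref{smallformula}) of an explicit family of totally regular matrices whose bilinear form has small formula complexity, and the multi-output formula lower bound for $M\varx$ from Theorem \ref{multiop-for}. Concretely, let $\{A_m\}_{m \geq 1}$ be the family of totally regular matrices guaranteed by Claim \ref{smallformula}, so that $L(\vary^{T}A_m\varx) = m^{1+o(1)}$ on $2m$ variables $\{x_1,\ldots,x_m\}\cup\{y_1,\ldots,y_m\}$. Define $f_n$ (for $n = 2m$, with the natural re-indexing of variables) to be the bilinear form $\vary^{T}A_m\varx$. Then $L(f_n) = n^{1+o(1)}$, so the upper bound half is immediate.

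For the lower bound, the key observation is that partial derivatives of a bilinear form recover the underlying linear transformations: for each $i \in [m]$,
\begin{equation*}
    \partial_{y_i}(\vary^{T}A_m\varx) \;=\; \big(A_m\varx\big)_i,
\end{equation*}
so the collection of all $y$-partial derivatives of $f_n$ is exactly the set of $m$ linear forms comprising the linear transformation $A_m\varx$. Hence any multi-output formula computing all first-order partial derivatives of $f_n$ in particular computes $A_m\varx$. Since $A_m$ is totally regular, Theorem \ref{multiop-for} gives $L(A_m\varx) = \Omega(m^2/\log m)$, and therefore $L(\partial_{x_1}(f_n),\ldots,\partial_{x_n}(f_n)) = \Omega(n^2/\log n)$, as required.

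I do not foresee a real obstacle here; the proof is essentially a two-line assembly. The only minor bookkeeping point is that the statement is phrased with $f_n$ on $n$ variables while Claim \ref{smallformula} naturally gives a bilinear form on $2m$ variables, but this is handled simply by taking $n = 2m$ (or, if one prefers an exact $n$-variate statement, by padding with a variable that does not appear). One could alternatively derive the same conclusion from the $x$-partials, since $\partial_{x_j}(\vary^{T}A_m\varx) = (A_m^{T}\vary)_j$ and $A_m^{T}$ is also totally regular; either direction suffices.
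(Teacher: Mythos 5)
Your proposal is correct and is exactly the paper's argument: the authors obtain this corollary by combining Theorem \ref{multiop-for} with the construction in Claim \ref{smallformula}, i.e., taking $f_n$ to be the bilinear form $\vary^{T}A\varx$ of small formula complexity and noting that its first-order partial derivatives contain the linear forms of $A\varx$, so the multi-output formula lower bound applies. Your handling of the $2m$ versus $n$ variable count and the remark about using either the $x$- or $y$-partials are the same routine bookkeeping the paper leaves implicit.
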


Similarly, \autoref{multiop} combined with the construction in Claim \ref{smallformula} gives us a lower bound on the partial derivative complexity of planar circuits:

\begin{corollary}
\label{obs-pds2}
    Over any infinite field $\mathbb{F}$ there exists a family $\{f_n\}_{n\geq 1}$ of polynomials where $f_n \in \mathbb{F}[x_1, \ldots, x_n]$ such that $C_p(f_n) = n^{1 + o(1)}$ but $C_p(\partial_{x_1}(f_n), \ldots, \partial_{x_n}(f_n)) = \Omega(n^{4/3})$.
\end{corollary}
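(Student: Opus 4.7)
The plan is to instantiate $f_n$ as a bilinear form and then observe that its first-order partial derivatives encode exactly the type of linear transformation for which Theorem \ref{multiop} gives a super-linear lower bound. Concretely, I would set $m = \lfloor n/2 \rfloor$ and take $f_n(\varx,\vary) = \vary^{T} A_m \varx$ where $\varx, \vary$ are disjoint vectors of $m$ variables each and $A_m \in \mathbb{F}^{m \times m}$ is a totally regular matrix provided by Claim \ref{smallformula} (so that $A_m$ admits a bilinear formula of size $m^{1+o(1)}$). Viewed as a polynomial in the $2m = n$ (or $n-1$) variables $\varx \cup \vary$, this yields the claimed family.

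For the upper bound, Claim \ref{smallformula} gives $L(f_n) \leq m^{1+o(1)} = n^{1+o(1)}$, and since every formula is a planar circuit (its underlying graph is a tree), we immediately get $C_p(f_n) \leq L(f_n) = n^{1+o(1)}$. For the lower bound, I would use the crucial observation that $\partial_{y_i} f_n = (A_m \varx)_i$, i.e., the $m$ partial derivatives of $f_n$ with respect to the $y$-variables together constitute precisely the linear transformation $A_m \varx$. Hence any multi-output planar circuit simultaneously computing the full set of first-order partial derivatives of $f_n$ in particular computes the transformation $A_m \varx$, and Theorem \ref{multiop} (applied to the $m \times m$ totally regular matrix $A_m$) forces it to have size $\Omega(m^{4/3}) = \Omega(n^{4/3})$. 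The partial derivatives with respect to the $x$-variables can simply be ignored by projecting to the relevant output gates; this only decreases the size.

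There is no real obstacle here beyond trivial bookkeeping; the only point to be mindful of is the factor-of-two discrepancy between the $n$ variables of $f_n$ and the $m = n/2$ dimension of $A_m$, which gets absorbed into the asymptotic $\Omega(\cdot)$ and $n^{1+o(1)}$ notation. An entirely analogous argument, using Theorem \ref{multiop-for} in place of Theorem \ref{multiop} and observing that formulas are planar, reproves Corollary \ref{obs-pds1} with the improved lower bound $\Omega(n^2/\log n)$ for multi-output formulas, which explains why the two corollaries are stated in parallel.
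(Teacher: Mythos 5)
Your proposal is correct and is essentially the paper's own argument: the paper proves this corollary in one line by combining Theorem \ref{multiop} with the construction of Claim \ref{smallformula}, exactly as you do --- taking $f_n=\vary^{T}A\varx$ for a totally regular $A$ with an $n^{1+o(1)}$-size (bilinear) formula, noting that formulas are planar for the upper bound, and that the $y$-derivatives of $f_n$ are precisely the linear transformation $A\varx$, to which the $\Omega(n^{4/3})$ multi-output planar circuit lower bound applies. The factor-of-two bookkeeping you mention is harmless and matches the paper's (implicit) treatment.
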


In the case of read-once planar circuits, we observe that the proof of Baur-Strassen theorem (as demonstrated in \cite{CKW10}, this proof is due to Kaltofen and Singer \cite{KS90}) extends easily to the case of read-once planar circuits. Hence, we have the following corollary:

\begin{corollary}
For any polynomial $f\in\mathbb{F}[x_1, \ldots, x_n]$, $C^r_p(\partial_{x_1}(f), \ldots, \partial_{x_n}(f)) = O(C^r_p(f))$.
\end{corollary}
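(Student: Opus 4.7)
The plan is to adapt the Kaltofen--Singer style proof of the Baur--Strassen theorem (as presented in \cite{CKW10}) and verify that the construction preserves both read-once-ness and planarity. Let $\Phi$ be a read-once planar circuit of size $s$ computing $f$; using \autoref{fan-in}, I may assume all fan-ins and fan-outs of $\Phi$ are at most $2$. I will construct a read-once planar circuit $\Phi^\star$ of size $O(s)$ that simultaneously computes all the $\partial f / \partial x_i$. For each gate $v$ of $\Phi$, introduce a derivative gate $v^\star$ intended to compute $\partial f/\partial v$; set $o^\star = 1$ at the output gate $o$, and for every other gate $v$ with parents $u_1, \ldots, u_k$ in $\Phi$ define $v^\star = \sum_i \delta_{u_i, v}\cdot u_i^\star$, where $\delta_{u,v} = 1$ if $u$ is a sum gate and $\delta_{u,v}$ equals the value computed at the sibling of $v$ (the other child of $u$) when $u$ is a product gate. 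The $n$ outputs of $\Phi^\star$ are the gates $\ell_i^\star$ corresponding to the unique leaf $\ell_i$ labelled $x_i$ (and a constant $0$ leaf if $x_i$ does not appear in $\Phi$).

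The read-once property of $\Phi^\star$ follows immediately from the construction: the only gates added are the derivative gates $v^\star$ and, for each wire $v \to u$ with $u$ a product gate, an auxiliary product gate $p_{u,v}$ computing $w \cdot u^\star$, where $w$ is the sibling of $v$. None of these new gates is a variable-labelled leaf; they take their inputs from gates already present in $\Phi$ or among the newly added gates. Hence the set of variable-leaves in $\Phi^\star$ equals that of $\Phi$, and each variable still labels at most one leaf. The size bound $\size(\Phi^\star) = O(s)$ is immediate because each wire of $\Phi$ contributes $O(1)$ new gates and wires, and $|E(\Phi)| = O(s)$ since $\Phi$ has bounded fan-in.

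The main obstacle is maintaining planarity. Starting from a planar embedding of $\Phi$, I would place each derivative gate $v^\star$ inside a small disk $D_v$ centred at $v$, and each auxiliary gate $p_{u,v}$ inside $D_u$, choosing the disks pairwise disjoint and avoiding all wires not incident on the centre. Each new wire is then routed locally: the backward wire from $p_{u,v}$ (inside $D_u$) to $v^\star$ (inside $D_v$) is drawn as a narrow tube parallel to the original wire $v \to u$, and the wire carrying the sibling value $w$ into $p_{u,v}$ runs along a tube parallel to the wire $w \to u$. Because the original fan-ins and fan-outs are at most $2$, each disk $D_u$ contains only $O(1)$ gates and is met by only $O(1)$ incident wires, so the interior routing produces only $O(1)$ crossings per disk. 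Each such crossing is eliminated with the crossover gadget of \autoref{lem:planarization}, which costs a constant number of gates per crossing and, crucially, introduces no variable-labelled leaves, preserving the read-once property. The final circuit has size $O(s) = O(C^r_p(f))$ and simultaneously computes all first-order partial derivatives of $f$, giving the corollary.
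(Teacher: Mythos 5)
Your proposal is correct and follows essentially the same route as the paper: the Kaltofen--Singer ``mirror image'' derivative circuit of \cite{CKW10}, with the observation that in the read-once planar case the new wires can be routed parallel to the old ones so that only $O(1)$ crossings arise per gate (hence linearly many overall), each removable by the crossover gadget of Lemma \ref{lem:planarization}, which adds no variable-labelled leaves and so preserves read-once-ness. Your write-up is in fact a more detailed version of the paper's brief justification, so there is nothing to correct.
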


The above corollary holds as the circuit for simultaneously computing 
$\partial_{x_1}(f), \ldots, \partial_{x_n}(f)$ of a polynomial $f$ can be constructed using a copy of the circuit for $f$ and a copy of its mirror image (see \cite{CKW10}, Theorem $9.10$) and in the case of read-once planar circuits, it is easy to check that this construction introduces only a linear number of edge crossings each of which can be eliminated by introducing the crossover gadget  in Lemma \ref{lem:planarization}.

\section{Discussion and Open Problems}
We conclude with some interesting open problems and related discussion. 
\begin{enumerate}
    \item A long-standing open problem in Algebraic Complexity Theory is to obtain an explicit $n$-variate constant degree polynomial such that any arithmetic circuit computing it requires size $\Omega(n\log n)$. In fact, even in the case of formulas, no quadratic lower bound is known for polynomials of constant degree. 
    \item In this work, we were able to prove an $\Omega(n\log n)$ lower bound on the planar circuit complexity of the bilinear form $y^T M x$ where $M$ is any $n\times n$ totally regular matrix. It is challenging to prove $\Omega(n^{1+\delta})$  lower bound (for some $\delta>0$) for planar arithmetic circuits for any explicit $n$-variate polynomial, not necessarily of constant degree. Note that in the case of boolean planar circuits, the best known lower bound is $\Omega(n\log^2 n)$\cite{Groger91}.
    In fact, obtaining an $\Omega(n^{2+\delta})$ lower bound (for some $\delta>0$) for read-once planar arithmetic circuits for any explicit $n$-variate polynomial would imply general circuit lower bounds.  
    \item While we prove an $\Omega(n\log n)$ lower bound on the planar circuit complexity of the bilinear form $y^T M x$ where $M$ is a totally regular matrix, we note that the construction of a totally regular matrix such that the planar circuit complexity of $y^T Mx$ is $O(n\log n)$ remains open.
    \item As mentioned in Theorem \ref{thm:ro-lb-smallfields}, it is possible to obtain a lower bound of $\Omega(n^2)$ for read-once planar circuits over finite fields by constructions of matrices over finite fields all of whose $n/13 \times n/13$ submatrices are of full rank. However, we note that proving an $\Omega(n\log n)$ lower bound for planar circuits over finite fields remains open. 
    \item In the context of bilinear forms of the form $y^T M x$ where $M$ is a totally regular matrix, the best lower bound that we can hope to get is $\Omega\left(\frac{n\log^2 n }{\log \log n}\right)$. This is due to the fact that there exist depth $2$ superconcentrators of size $O\left(\frac{n\log^2 n }{\log \log n}\right)$ (see \cite{RT00} for details). 
    \item It is important to note that our separation between circuits and planar circuits is not explicit. That is, we are able to prove, in Corollary \ref{cor:separation}, the existence of an $2n$-variate bilinear form whose circuit complexity is $O(n)$ and planar circuit complexity is $\Omega(n\log n)$, but this bilinear form is not explicit. It would be interesting to prove a separation for an \textit{explicit} bilinear form (or indeed any explicit polynomial).
    \item In the case of bilinear formulas, we have an $\Omega\left(\frac{n\log^2 n }{\log \log n}\right)$ size lower bound. However, as pointed out earlier this does not directly translate to a lower bound for formulas. The question of whether formulas can be bilinearized with only constant blowup in size from \cite{NW95} still remains open. 
\end{enumerate}


\subsection*{Acknowledgements}
We thank Meena Mahajan for insightful discussions on planar arithmetic circuits. We also thank the anonymous reviewers of ITCS 2024 for their helpful comments and suggestions.

\bibliography{ref}

\end{document}